\documentclass[twocolumn,english,groupedaddress, superscriptaddress,aps,pre,10pt,floatfix, nofootinbib]{revtex4}

\usepackage[T1]{fontenc}
\usepackage[utf8]{inputenc}
\usepackage{amsmath,amsthm,mathtools,amssymb}
\usepackage{graphicx}
\usepackage[dvipsnames]{xcolor}
\usepackage[normalem]{ulem}
\usepackage{units}
\usepackage{dsfont}
\usepackage{braket}
\usepackage{enumitem}

\newcommand{\nocontentsline}[3]{}
\newcommand{\tocless}[2]{\bgroup\let\addcontentsline=\nocontentsline#1*{#2}\egroup}

\usepackage{orcidlink}

\usepackage{tikz}
\usetikzlibrary{quantikz2}

\usetikzlibrary{math,arrows}

\usepackage{url}

\usepackage{hyperref}

\newtheorem{cor}{Corollary}
\newtheorem{lem}{Lemma}

\newtheorem{thm}{Theorem}
\newtheorem{defi}{Definition}

\theoremstyle{definition} 
\newtheorem{cons}{Mapping}
\theoremstyle{plain} 

\newtheoremstyle{probenv}%
  {0pt}
  {0em}
  {\hangindent=\parindent}
  {}
  {\itshape}
  {{\normalfont:}}
  {.5em}
  {}
  
\theoremstyle{probenv}
\newtheorem*{problem}{Problem}
\theoremstyle{plain} 

\newcommand{\decisionproblem}[4]{
	\begingroup
		\par\noindent\nopagebreak[4]
		\begin{problem}
			\label{#2}\vspace{-0.5em}\colorbox{gray!17!white}{\textsc{#1}}\nopagebreak[4]
		\end{problem}\nopagebreak[4]
		\par\noindent\hangindent=\parindent\textbf{Input}:  #3\nopagebreak[4]
		\par\noindent\hangindent=\parindent\textbf{Question}:  #4
		\par\medskip
	\endgroup
}

\newcommand{\optimizationproblem}[5]{
  \begingroup
	\par\noindent\nopagebreak[4]
	\begin{problem}
		\label{#2}\vspace{-0.5em}\colorbox{gray!17!white}{\textsc{#1}}\nopagebreak[4]
	\end{problem}\nopagebreak[4]
	\par\noindent\hangindent=\parindent\textbf{Input}:  #3\nopagebreak[4]
	\par\noindent\hangindent=\parindent\textbf{Solution}:  #4\nopagebreak[4]
	\par\noindent\hangindent=\parindent\textbf{Objective}:  #5
	\par\medskip
  \endgroup
}

\newcommand{\matr}[1]{{\boldsymbol{#1}}}
\renewcommand{\vec}[1]{{\boldsymbol{#1}}}
\newcommand{\EE}{\mathcal{E}}
\newcommand{\VV}{\mathcal{V}}
\newcommand{\TT}{\mathcal{T}}
\newcommand{\GG}{\mathcal{G}}
\newcommand{\CC}{\mathcal{C}}
\newcommand{\RR}{\mathfrak{R}}

\newcommand{\ii}{\mathrm{i}}
\newcommand{\ee}{\mathrm{e}}

\newcommand{\SU}{{\mathcal{S}U}}
\newcommand{\OPT}{\operatorname{OPT}}

\begin{document}

\title{Towards Quantum Algorithms for the Optimization of Spanning Trees: The Power Distribution Grids Use Case}

\author{Carsten Hartmann\orcidlink{0009-0007-5067-589X}}
\email{c.hartmann@fz-juelich.de}
\affiliation{Forschungszentrum J\"ulich, Institute of Energy and Climate Research -- Energy System Engineering (ICE-1), 52428 J\"ulich, Germany}
\affiliation{RWTH Aachen University, 52056 Aachen, Germany}

\author{Nil Rodellas-Gràcia\orcidlink{0009-0005-9995-7463}}
    \affiliation{Peter Grünberg Institute for Quantum Computing Analytics (PGI-12), Forschungszentrum J\"ulich, 52428 J\"ulich, Germany}

\author{Christian Wallisch\orcidlink{0009-0007-7670-3721}}
    \affiliation{Forschungszentrum J\"ulich, Institute of Energy and Climate Research -- Energy System Engineering (ICE-1), 52428 J\"ulich, Germany}

\author{Thiemo Pesch\orcidlink{0000-0002-3297-6599}}
  \affiliation{Forschungszentrum J\"ulich, Institute of Energy and Climate Research -- Energy System Engineering (ICE-1), 52428 J\"ulich, Germany}

\author{Frank K. Wilhelm\orcidlink{0000-0003-1034-8476}}
    \affiliation{Peter Grünberg Institute for Quantum Computing Analytics (PGI-12), Forschungszentrum J\"ulich, 52428 J\"ulich, Germany}
    \affiliation{Theoretical Physics, Saarland University, 66123 Saarbrücken, Germany}

\author{Dirk Witthaut\orcidlink{0000-0002-3623-5341}}
  \affiliation{Forschungszentrum J\"ulich, Institute of Energy and Climate Research -- Energy System Engineering (ICE-1), 52428 J\"ulich, Germany}
  \affiliation{Institute for Theoretical Physics, University of Cologne, 50937 K\"oln, Germany}

\author{Tobias Stollenwerk\orcidlink{0000-0001-5445-8082}}
    \affiliation{Peter Grünberg Institute for Quantum Computing Analytics (PGI-12), Forschungszentrum J\"ulich, 52428 J\"ulich, Germany}

\author{Andrea Benigni\orcidlink{0000-0002-2475-7003}}
    \email{a.benigni@fz-juelich.de}
  \affiliation{Forschungszentrum J\"ulich, Institute of Energy and Climate Research -- Energy System Engineering (ICE-1), 52428 J\"ulich, Germany}
  \affiliation{RWTH Aachen University, 52056 Aachen, Germany}
  \affiliation{JARA-Energy, Jülich 52425, Germany}

\begin{abstract}
    Optimizing the topology of networks is an important challenge across engineering disciplines. In energy systems, network reconfiguration can substantially reduce losses and costs and thus support the energy transition. Unfortunately, many related optimization problems are NP hard, restricting practical applications. In this article, we address the problem of minimizing losses in radial networks – a problem that routinely arises in distribution grid operation. We show that even the computation of approximate solutions is computationally hard and propose quantum optimization as a promising alternative. We derive two quantum algorithmic primitives based on the Quantum Alternating Operator Ansatz (QAOA) that differ in the sampling of network topologies: a tailored sampling of radial topologies and simple sampling with penalty terms to suppress non-radial topologies. We show how to apply these algorithmic primitives to distribution grid reconfiguration and quantify the necessary quantum resources. 
\end{abstract}   

\maketitle

\tocless\section{Introduction}
Algorithms for network optimization are extensively employed across diverse domains, including communication networks~\cite{cheng1998topological}, transportation planning~\cite{chen2011transport}, and energy systems~\cite{hao2022comprehensive}, to facilitate cost-effective and reliable system design and operation. In many practical applications, networks are required to maintain radial or tree-like topologies for operational reasons, which introduces additional complexity to the underlying optimization problem \cite{magnanti1995optimal, graham2007history}. Within energy systems engineering, such algorithms are of significant importance for the analysis and design of electrical distribution grids.

Distribution grids play a pivotal role in the decarbonization of energy systems \cite{o2022demand, xie2021toward}, as they form the backbone of renewable energy integration \cite{navidi2023coordinating, celli2004meshed}. Through sector coupling, they also support the decarbonization of other domains, including heating and transportation \cite{orths2019sector}. 
Unlike transmission systems, which are generally meshed, radial operation of distribution networks is preferred due to their simplicity, cost efficiency, and ease of protection \cite{celli2004meshed}.
At the same time, reconfiguration switches are incorporated into distribution networks to minimize losses, balance loads, isolate faults, and enhance voltage profiles, all while maintaining a radial configuration. 
Since the Minimal Loss Network Reconfiguration problem was introduced by Merlin and Back in 1975 \cite{merlin1975search}, various network reconfiguration techniques have been studied, demonstrating significant reductions in power losses and improvements in voltage profiles \cite{civanlar1988distribution, baran1989network, abdelaziz2009distribution, jabr2012minimum}. A comprehensive review can be found in \cite{mishra2017comprehensive}.  

Modeling an electrical distribution grid as a weighted graph, a radial configuration corresponds to a spanning tree. Finding optimal spanning trees is a central yet computationally demanding task: while the classical Minimum Spanning Tree (MST) problem—minimizing total edge weight while ignoring network flows and operational constraints—can be solved in near-linear time using the algorithms of Kruskal~\cite{kruskal1956shortest} and Prim~\cite{prim1957shortest}, many practically relevant problems are NP-hard. These include formulations that optimize or restrict vertex degrees \cite{goemans2006minimum,singh2015approximating}, the diameter~\cite{ho1991minimum}, or the number of leaves~\cite{galbiati1994short,fernandes1998minimal} as well as problems such as the Minimum Routing Cost Spanning Tree~\cite{wu2000polynomial} and the Optimum Communication Spanning Tree~\cite{hu1974optimum}.
These hard problems have in common that the cost function depends non-locally on the configuration, that is, the tree. 

The minimal loss network reconfiguration problem, and its related problem, the Minimum Dissipation Spanning Tree (MDST), are also NP-hard ~\cite{khodabakhsh2018submodular,ito2024loss} due to the non-local change in network flows when switching a line. Hence, these problems quickly become intractable for large networks, forcing system operators to rely on heuristic and approximate methods~\cite{aoki1987normal, civanlar1988distribution, shirmohammadi1989reconfiguration} or to limit optimization to local subproblems or precomputed scenarios~\cite{razavi2021multi}. While computationally efficient, these methods can leave the network in suboptimal grid configurations for extended periods. In distribution grids, this typically leads to higher losses, voltage imbalances, or operational constraint violations. All in all, this computational challenge makes dynamic optimization in support of Dynamic Security Assessment infeasible.

Quantum computers may provide an advantage over classical hardware in solving combinatorial problems and thus boost energy system optimization~\cite{morstyn2024opportunities}.
Hardware based on quantum annealing has been commercially available for several years and first industrial use cases have been demonstrated~\cite{yarkoni2022quantum}.
The key idea of quantum annealing is, loosely speaking, that the annealer will not be stuck in local minima as a classical optimizer based on gradient descent.
Unfortunately, today’s quantum annealers are restricted to a very special class of optimization problems that can be mapped to sparse Ising-type problems, i.e. binary variables with a quadratic objective function~\cite{yarkoni2022quantum}.

\begin{figure*}[t!]
    \centering    \includegraphics{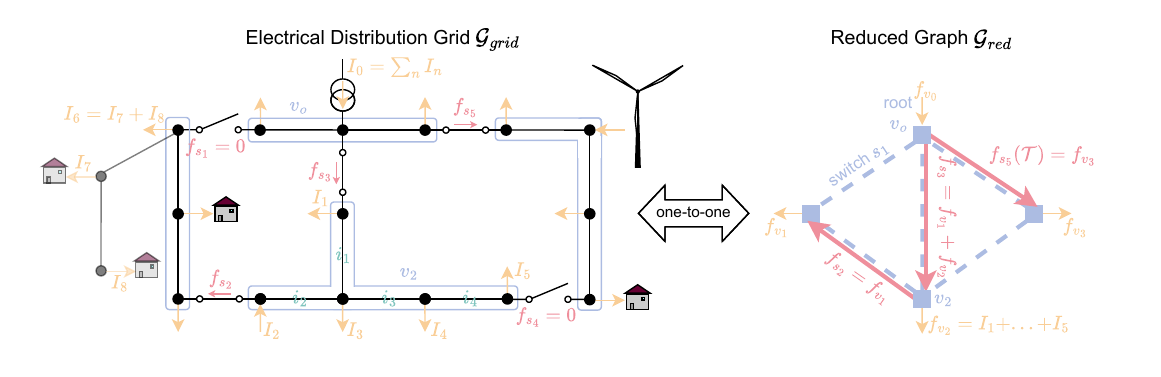}
    \caption{One-to-one correspondence between a feasible configuration of switches in an electrical distribution grid $\GG_{\mathrm{grid}}$ (left) and a spanning tree $\TT$ with root $v_0$ in the reduced graph $\GG_{\mathrm{red}}$ (right), whose edges represent the switches in the distribution grid. Note that buses $7$ and $8$ in $\GG_{\mathrm{grid}}$ can be reduced to bus $6$, since the currents $i_e$ on $e=(6,7)$ and $e=(7,8)$ are nor affected by any reconfiguration.} 
    \label{fig:mapping_distribution_grids}
\end{figure*}

In this article, we propose a heuristic framework to solve optimum spanning tree problems using the Quantum Alternating Operator Ansatz~\cite{hadfield_quantum_2019}.
The fundamental challenge, from a quantum perspective, is to narrow the search space to the set of all spanning trees for a given root node. We develop an algorithm that samples this search space and show how it can be implemented on future quantum hardware.
We discuss potential applications in energy systems, in particular, providing a direct mapping to the distribution grid reconfiguration task. 

In the context of quantum annealing, a formulation as a quadratic unconstrained binary optimization problem (QUBO) for a Minimal Loss Network Reconfiguration variant has been introduced by Silva et al.~\cite{silva_qubo_2023, silva_quantum_2023}. However, the constraints that prevent cycles from being formed are not quadratic and thus require costly polynomial reduction. Moreover, in their formulation, every edge is switchable. Our approach circumvents the costly reduction, and we explicitly provide the construction of the cost function for grids with non-switchable lines. 

We demonstrate the methods in the context of distribution grid reconfiguration; however, it is evident that the proposed approach can be applied to a wide range of other problems.
\\

\tocless\section{Results}

\tocless\subsection{Complexity of Minimum Dissipation Spanning Tree problem}

In this article, we focus on network flow problems, which are particularly important for energy applications. We demonstrate the potential benefits of quantum optimization for the \emph{Minimum Dissipation Spanning Tree} (MDST) problem, which naturally arises in distribution grid operation (Fig.~\ref{fig:mapping_distribution_grids}). 

We start from an undirected connected graph $\GG = (\VV,\EE \subseteq \VV \times \VV)$  with $\lvert \VV \rvert$ nodes, denoted by $n, m, u, v, w \ldots$, and $\lvert \EE \rvert$ edges, denoted by $e, e^\prime, e^{\prime \prime}, \ldots$. For simplicity, we here only consider simple graphs; however, all results can be generalized for multi-graphs, only requiring some additional bookkeeping. 

A \emph{spanning tree} of $\GG$ is a sub-graph $\TT = (\VV, \EE_\TT)$ that contains all nodes, is connected and contains no cycles. In many applications, we have a distinguished root node $n_0$ in the graph, as for instance the feeder in a power distribution grid. The set of all spanning trees with root $n_0$ will be denoted as $ \mathrm{Sp}(\GG, n_0)$. Typically, the number of spanning trees grows exponentially in the system size $\lvert \VV \rvert$, making many spanning tree optimization problems computationally hard.

In flow networks, every node $n \in \VV$ has a fixed in- or outflow $\mathfrak{f}_n$, corresponding to nodal flow demands or injections. In general, we allow multiple sources $\mathfrak{f}_n < 0$ and multiple consumer nodes $\mathfrak{f}_n > 0$.  The flows $f_e$ on the edges $e \in \EE$ are related to the nodal flows $\mathfrak{f}_n$ by Kirchhoff's current laws (KCL). That is, the aggregated flow on the edges connected to a node must equal $\mathfrak{f}_n$ (flow conservation) and consequently, we must have $\sum_n \mathfrak{f}_n=~0$. We now assume that the operating cost due to the dissipation of flow $f_e$ through edge $e$ is given by $c_e = \alpha_e f_e^2$ where $\alpha_e \in \mathbb{R}_{\geq 0}$ is an edge-specific dissipation constant. 

For operational reasons, the network shall be operated as a spanning tree by switching off an appropriate number of edges. To minimize the operational costs, we thus have to solve the MDST optimization problem
\begin{equation}
\label{net_rec:eq:minimal_losses_general}
    \min_{\TT \in \mathrm{Sp}(\GG,n_0)} 
    \sum_{e \in \TT} \alpha_e f_e(\TT)^2,
\end{equation}
where the edge flows $f_e(\TT)$ depend on the topology of the spanning tree $\TT$ via KCL. Optimizing the topology is computationally hard due to the non-local effects on the flows.

To the best of our knowledge, only two papers include contributions regarding the computational hardness of MDST~\cite{khodabakhsh2018submodular,ito2024loss}.
Both papers study the restriction of the problem to distribution networks with only one flow source.
Clearly, the hardness results achieved in this setting transfer to the more general case of multiple flow sources, which, e.g., naturally arise with the introduction of renewable power sources in distribution grids.
In the first of these papers, initially published relatively recently (2017), the authors show that MDST is strongly NP-hard~\cite{khodabakhsh2018submodular}.
Moreover, MDST is NP-hard on lattice graphs even under additional restrictions~\cite{ito2024loss}.

Multiple approximation algorithms for single-source MDST have been proposed~\cite{gupta2022electrical,khodabakhsh2018submodular,ito2024loss}.
However, these approximation algorithms are of limited practical use due to large approximation factors or restrictive assumptions.
For multi-source MDST, an exponential-time algorithm with a guaranteed error bound has been formulated~\cite{inoue2014distribution}.

We present the first approximation hardness result applicable to multi-source MDST (beyond strong NP-hardness).
We say that a minimization problem can be approximated within a factor of $\rho > 1$ if there is an algorithm that, on every possible input, produces a solution that is by at most a factor of $\rho$ more costly than the optimum solution.

\begin{thm}
\label{thm:np_hardness}
    Unless $\mathrm{P} = \mathrm{NP}$, there is a constant $c > 0$ such that MDST cannot be approximated within a factor of $\rho = c\log^2 N$ in polynomial time, where $N$ is the number of nodes. This holds even if integer parameters are polynomially bounded by instance size.
\end{thm}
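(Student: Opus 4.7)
The plan is to establish Theorem~\ref{thm:np_hardness} via a gap-preserving reduction from a combinatorial problem with a known polylogarithmic approximation lower bound under $\mathrm{P} \neq \mathrm{NP}$. Natural source candidates are Minimum Set Cover or Minimum Dominating Set, which give a $\ln n$ barrier by Dinur--Steurer, and Minimum Label Cover or Group Steiner Tree variants, which give polylogarithmic barriers. Since MDST's objective is quadratic in the edge flows, $c_e = \alpha_e f_e^2$, the reduction can leverage this nonlinearity to amplify a $\log$ gap in the source problem into a $\log^2$ gap in MDST.

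First, I would take a source instance $H$ together with its YES/NO promise and construct an MDST instance $(\GG, n_0, \mathfrak{f}, \alpha)$ of polynomial size. The construction should encode the combinatorial choices of $H$ as the selection of edges in a spanning tree of $\GG$. A core building block is a \emph{selector gadget} for each combinatorial choice in $H$, together with nodal demands designed so that Kirchhoff's current law forces every spanning tree to commit to a consistent global choice. Dissipation coefficients are set so that (a) the internal edges of each gadget contribute negligibly, and (b) the backbone or ``witness'' edges incur a dissipation that scales quadratically in the number of demands they jointly serve. The quadratic growth on backbone edges is the mechanism that converts a $\log$-sized cost difference in $H$ into a $\log^{2}$-sized cost difference in MDST.

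Second, I would verify three properties of the reduction. \emph{Size}: $N = |\VV(\GG)| = \mathrm{poly}(|H|)$, giving $\log N = \Theta(\log |H|)$. \emph{Polynomial boundedness}: all edge weights $\alpha_e$ and nodal demands $\mathfrak{f}_n$ are integers of magnitude $\mathrm{poly}(|H|)$, meeting the strengthened hypothesis of the theorem. \emph{Gap preservation}: a spanning tree of dissipation below $c \log^2 N \cdot \OPT$ can be polynomial-time decoded into a source-problem solution beating the NO-promise threshold. Here one uses that on any fixed spanning tree $\TT$ the flows $f_e(\TT)$ are uniquely determined by the demands via KCL, so the MDST cost faithfully reflects the combinatorial structure of $\TT$ with no flow-splitting freedom to exploit.

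The main obstacle is the gap-amplification step. One must show that if a spanning tree $\TT$ achieves near-optimal MDST cost, then the induced decoding is close to an optimal source solution, even when $\TT$ could in principle interleave fragments of several distinct source solutions. I would control this by a careful choice of separator edges with large (but still polynomial) $\alpha_e$ that make it prohibitively expensive for a spanning tree to ``mix'' solutions, together with a convexity-based bottleneck argument that lower-bounds the dissipation of any inconsistent tree by the quadratic overload of its most congested backbone edge. A secondary, often delicate, point is maintaining polynomial boundedness of all weights while still enforcing this structural rigidity. Once these ingredients are in place, the chain $\mathrm{P} \neq \mathrm{NP} \Rightarrow$ the source problem has no $\rho$-approximation $\Rightarrow$ MDST has no $c \log^2 N$-approximation completes the proof.
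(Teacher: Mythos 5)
Your high-level strategy coincides with the paper's: both reduce from \textsc{Set Cover} (inapproximable within $c\log(\nu+\mu)$ unless $\mathrm{P}=\mathrm{NP}$) and both exploit the quadratic dissipation $\alpha_e f_e(\TT)^2$ so that a $\log^2 N$ approximation for MDST would yield a $\log$ approximation for the source problem. The difficulty is that the step you yourself identify as ``the main obstacle'' --- gap amplification --- is precisely the step you do not carry out, and the mechanism you sketch for it (separator edges with large $\alpha_e$ to forbid mixing, plus a convexity-based bottleneck bound on the most congested backbone edge) is neither worked out nor the mechanism that actually makes the argument close. A ``large but polynomial'' separator penalty gives only an additive deterrent; what is needed is a multiplicative relationship of the form $C(x') \geq (\text{decoded cover size})^2 \cdot \mathrm{poly}(\mu)$ together with $\OPT(\mathcal{I}') \leq \mathrm{poly}(\mu)\cdot\OPT(\mathcal{I})^2$ with \emph{matching} polynomial factors, so that the approximation ratio is literally square-rooted on the way back. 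Your proposal never establishes either inequality.

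The paper obtains them with three ingredients absent from your plan. First, nonzero dissipation is placed \emph{only} on the edges incident to a single source $y$, so that $C(\TT)$ equals the sum of squared total demands of the components of $\TT$ with $y$ deleted; no convexity or bottleneck argument is needed. Second, cycle-freeness of spanning trees forces every ``pass-through'' middle-layer node (one routing flow from the second source $z$ toward the bottom layer) to be non-adjacent to $y$ (except possibly one), so the $z$-component accumulates net demand at least equal to the number of pass-through nodes, giving the clean lower bound $C(x') \geq \lvert V_{\geq 2}^{x'}(\VV_{\mathcal{S}})\rvert^2$. Third, the instance contains $\mu$ identical copies of the subset/element layers; without this copying the additive cost $\approx \mu$ of serving the unit-demand subset nodes would dominate the $\OPT(\mathcal{I})^2$ term in the optimum and the chain $\lvert x\rvert \leq \tfrac{1}{\mu}\sqrt{C(x')} \leq \tfrac{1}{\mu}\sqrt{2\rho\,\mu^2\,\OPT(\mathcal{I})^2} = \sqrt{2\rho}\,\OPT(\mathcal{I})$ would break. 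Until you specify gadgets that reproduce (or replace) these three ingredients and prove the two quantitative lemmas, the reduction is a plausible outline rather than a proof.
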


In particular, the above theorem implies that MDST cannot be approximated within any constant factor in polynomial time. We defer the proof to the Methods.

\tocless\subsection{Network Reconfiguration and MDST}

Network reconfiguration to minimize Ohmic losses in power distribution grids is closely related to the MDST problem. First, network reconfiguration can be reduced to an MDST+ problem by contracting nodes between switches so that all remaining edges are switchable. The resulting cost function mimics MDST \eqref{net_rec:eq:minimal_losses_general}, but is more complex. The cost function involves solving Kirchhoff’s Current Law (KCL) for the contracted subgraphs (see Methods and Fig.~\ref{fig:mapping_distribution_grids}). Second, MDST itself can be seen as a special case of minimal-loss network reconfiguration when every line is switchable. Therefore, existing results on NP-hardness and approximability directly carry over to this important application.

These observations suggest that optimization algorithms developed for MDST, especially those flexible in handling custom cost functions such as for MDST+, can be adapted to tackle network reconfiguration. While classical heuristics for this problem have been extensively studied since the late 1980s~\cite{aoki1987normal,baran1989network,shirmohammadi1989reconfiguration}, giving rise to a wide range of algorithms, including one inspired by the behavior of gut bacteria~\cite{sathishkumar2012power}. Quantum heuristics offer a promising alternative.
\\

\tocless\subsection{Encoding Spanning Trees in a Quantum Register}

\begin{figure*}[t]
    \centering    
    \includegraphics[width=\textwidth]{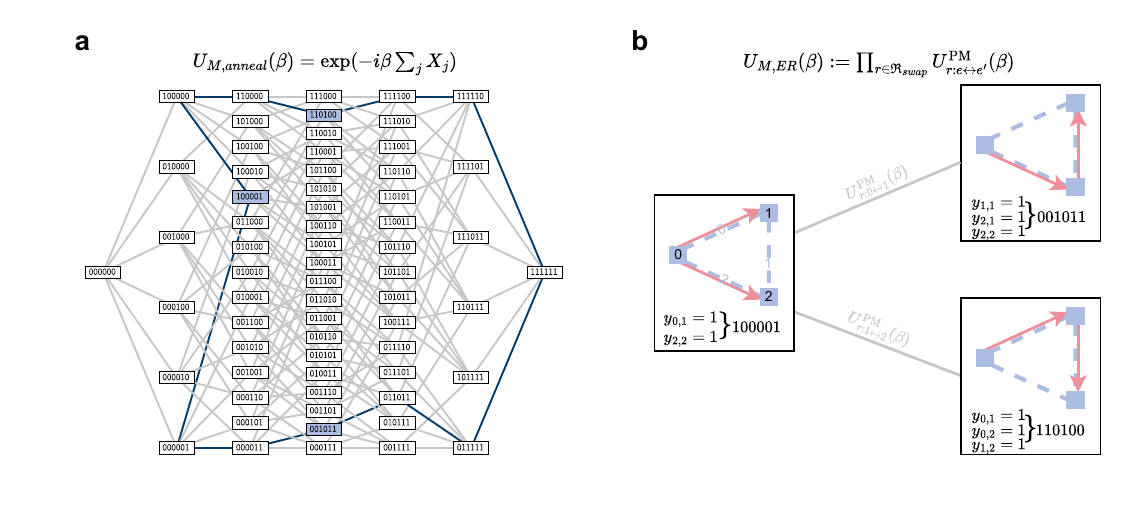}
    \caption{Comparison of the two quantum algorithms for sampling spanning trees for a simple graph with three nodes and three edges. The root is set as $r=0$. 
    \textbf{a}: For the transverse field mixer $U_{\mathrm{TF}}(\beta)$, all 64 configurations can be reached; however, only three of them are feasible (highlighted in light blue). The graph corresponds to the Hamiltonian $H_{\mathrm{TF}} = \sum_j X_j$, the edges correspond to the possible transitions according to the Hamiltonian $H_{\mathrm{TF}}$. Blue edges show (potential) shortest paths between the feasible configurations, 
    \textbf{b}: Partial mixers $U^{\mathrm{PM}}_{r: e \leftrightarrow e^\prime}(\beta)$ implement transition only between two feasible configurations, that is, spanning trees $\TT$. 
    } 
    \label{paper:fig:comparison_ctqw}
\end{figure*}

In this article, we propose a versatile approach to solving optimal spanning tree problems with quantum optimization. We will first formalize the set of problems and introduce a suitable quantum encoding. 

Given an undirected connected graph $\GG = (\VV,\EE \subseteq \VV \times \VV)$, an \emph{orientation} assigns a direction to each edge $e \in \EE$ to keep track of the direction of a flow. For an oriented edge $e = (n,m)$, the node $n$ is called the \emph{tail} and the node $m$ is called the \emph{head} of $e$. The topology and orientation is summarized in the incidence matrix $\matr E \in \mathbb{R}^{\lvert \VV \rvert \times \lvert \EE \rvert}$ as
\begin{equation}
    E_{n,e} = \begin{cases}
        +1 & \text{if } n \text{ is the head of } e,\\
        -1 & \text{if } n \text{ is the tail of } e, \\
        0 & \text{else}. 
    \end{cases}
\end{equation}
For a spanning tree with root node $n_0$, there is a natural orientation where all edges point outwards.

We now introduce an encoding of the optimization variables tailored to flow network problems. An edge is ``active'' if it can contribute to flow transport in a spanning tree and ``inactive'' otherwise. We define the $\lvert \EE \rvert \cdot (\lvert \VV \rvert - 1)$ binary variables
\begin{equation}
    \label{net_rec:eq:def_binaries}
    y_{e,n} = \begin{cases}
        1 & \text{if } n \neq n_0 \text{ is downward of } e \in \TT,  \\
        0 & \text{else.}
    \end{cases}
\end{equation}
and encode them in a quantum state $\ket{y_1} \ket{y_2} \cdots \ket{y_j} \cdots$ by flattening the indices as $j=e \left( \lvert \VV \rvert - 1 \right)  + (n-1)$.
The variables $y_{e,n}$ encode
\begin{enumerate}
    \item whether an edge $e$ is active: For an inactive edge $y_{e,n} = 0$ $\forall n \in \VV \setminus \{n_0\} $. For an active edge, we have that $\sum_{n \in \VV \setminus \{n_0 \}} \lvert E_{n,e} \rvert y_{e,n} = 1$. 
    \item the orientation of an active edge: Given an undirected edge $e=\{n , m\} \in \EE$, we have $y_{e,m} = 1$ if $(n,m) \in \EE_\TT$ and $y_{e,n} = 1$ if $(m,n) \in \EE_\TT$.
\end{enumerate}
Hence, we can directly compute the incidence matrix for a tree $\TT$,
\begin{equation}
     E_{n,e}(\TT) = E_{n,e} ( E_{n,e} y_{e,n} + \sum_{u \in \VV \setminus \{n_0, n\}} E_{u,e} y_{e,u}).
\end{equation}
Furthermore, the global information encoded in the variables $y_{e,n}$ allows to readily calculate network flows, which will be used to encode the objective function as we will discuss below. 

\tocless\subsection{A Primer on Quantum Optimization}
Quantum optimization aims to minimize an objective pseudo-boolean function encoded in a (diagonal) hermitian operator (Hamiltonian) $H_\mathrm{cost}$~\cite{hadfield2021ontherepresentation} with a quantum computing device. 
Typically, these approaches operate with some unitary $U$ (that is derived from the cost Hamiltonian) on a quantum register of size $n$ that, upon measuring, gives bitstring samples of the low-lying solutions.
In addition to the cost Hamiltonian, the near-term approaches adiabatic quantum computation (AQC)~\cite{farhi_quantum_2000} and the Quantum Approximate Optimization Algorithm (QAOA)~\cite{farhi2014quantum}
employ another operator, the co-called mixing operator $H_\mathrm{M}$ that does not commute with the cost Hamiltonian.
This mixer allows us to explore the configuration space by establishing quantum fluctuations, entanglement and tunneling \cite{lanting2014entanglement, denchev2016tunneling}, in analogy to thermal fluctuations in simulated annealing.
In AQC, a unitary operator is applied to the quantum register, which represents the time-evolution of a time-dependent Hamiltonian $H(t)=(1- g(t)) H_\mathrm{cost} + g(t) H_\mathrm{M}$, with $g(0)=0$, $g(1)=1$.
It is given by $U(T_\mathrm{A})=\int_0^{T_\mathrm{A}} \mathrm{d}t \ee^{-\ii t H(t)}$.
Starting in the ground state of the mixer $H_\mathrm{M}$ and for sufficiently large so-called \emph{annealing times} $T_\mathrm{A}$, the system is guaranteed to stay in the ground state due to the adiabatic theorem \cite{born_beweis_1928,albash_adiabatic_2018}.
QAOA is a discretized version of AQC, with a unitary $U=\prod_n \ee^{-\ii \beta_n H_\mathrm{M}} \ee^{-\ii \gamma_n H_\mathrm{cost}}$, where the parameters $\gamma_n, \beta_n$ are either derived from a discretized annealing schedule or freely optimized over.

However, both of these approaches in their original form assume combinatorial optimization problems without constraints.
The standard approach to incorporate constraints is via penalty terms.
Here, one adds terms to the Hamiltonian that penalize infeasible solutions, such that the low-lying eigenstates are all feasible ~\cite{venturelli2015quantum,stollenwerkATM2019}.
For example, one could write
\begin{equation}
\label{eq:penalty_approach}
   H_\mathrm{cost} \to  H_\mathrm{cost} +\lambda_{\mathrm{pen}} H_\mathrm{pen}. 
\end{equation}
This approach has the advantage that the standard mixer $H_\mathrm{M}=-\sum_{i=1}^n X_i$ can be used, whose ground state $\ket{+}^{\otimes n}$ can be prepared efficiently. 
The disadvantages are (i) that we operate on a spectrum that has typically exponentially many more infeasible than feasible states, (ii) the value of the sufficiently large penalty weight is not known a priori, (iii) the enforcement of constraints via penalty terms can introduce higher-order terms that are resource intensive. 

The alternative approach is the invariant feasible subspace approach~\cite{hen2016driver,hadfield_quantum_2019} that starts with an initial feasible state and employs a quantum algorithm that keeps the state in the feasible subspace throughout. This is usually done by constructing advanced mixing operators that map feasible states to feasible states.
In contrast to the penalty-based approach, we exclusively operate on the usually exponentially smaller feasible subspace, which can improve performance greatly. The drawbacks, however, are that complex constraints require resource-intensive mixers and that error correction is needed at least with regard to the feasible subspace~\cite{streif2021errorqaoa}. One of the main contributions of this work is the construction of these advanced mixers.
\\

\tocless\subsection{Sampling Spanning Trees using Penalties}

Following standard annealing procedures \cite{andriyash2016, lanting_experimental_2017}, the mixing unitary to sample all possible qubit configurations is given by the exponential 
\begin{equation}
\label{eq:mixer_transversefield}
    U_{\text{M, penalty}}(\beta) = \exp \left(-\ii \beta \sum \nolimits_j X_j \right),
\end{equation}
where $X_j$ describes the Pauli X operator of the $j$th qubit. This approach is illustrated for an elementary example consisting of three nodes and three edges in Fig.~\ref{paper:fig:comparison_ctqw}a. Only 3 out of 64 possible configurations $y_{e,n}$ are feasible, i.e.~correspond to spanning trees with root $n_0$. Grey lines show transitions due to single-bit flips. At least three bit flips are needed to transfer from one feasible state to another, which affects the respective transition probabilities for $U_{\text{M, penalty}}(\beta)$. 

For larger problems, the share of feasible states is further suppressed as the number of configurations  $2^{(\lvert V \rvert -1) \lvert E \rvert} \gg \lvert \VV \rvert ^{\lvert \VV \rvert - 2} \geq \lvert \mathrm{Sp}(\GG, n_0) \rvert$. Hence, the probabilities for transitions between feasible states are in general suppressed.

The major step in this approach is to formulate equality constraints in the binary variables $y_{e,n}$. These constraints are turned into penalties and added to $H_{pen}$ by squaring the difference between both sides. For spanning trees with root $n_0$ three necessary conditions are that 
\begin{enumerate}
\itemsep-0.1em 
    \item the number edges in $\EE_\TT$ is $\lvert \VV \rvert -1$,
    \item  no cycles are formed,
    \item  all nodes are connected to the root $n_0$. 
\end{enumerate}
Moreover, any two of these three constraints are also sufficient. The following constraints for spanning trees in the binary variables are inspired by necessary conditions 1. and 3.
\begin{align}
    \label{eq:const_edges}
    \sum_e \sum_{n \in \VV \setminus \{n_0\}} \lvert E_{n,e} \rvert y_{e,n} &= \lvert \VV \rvert - 1, \\
    \label{eq:const_KCL}
    \sum_{e \in \EE} \sum_{m \in \VV \setminus \{n_0\}} E_{n,e}(\TT) y_{e,m} & =  1, \quad \forall n \in \VV \setminus \{n_0 \}, \\
\begin{split}
    \label{eq:const_consistency}
    y_{e,n} (1-\lvert E_{n, e} \rvert) &= (1-\lvert E_{n, e} \rvert) \\ \sum_{m \in \VV \setminus \{n_0, n\}} \sum_{e^\prime \in \EE \setminus \{ e\}} &y_{e, m} y_{e^\prime, n} \lvert E_{m,e} \rvert \lvert E_{m,e^\prime} \rvert, \quad \\  &\forall n \in \VV \setminus \{n_0 \}, \forall e \in \EE.
\end{split}
\end{align}
Constraint~\eqref{eq:const_edges} enforces a necessary condition for the number of edges to be $\lvert \VV \rvert -1$. Constraints~\eqref{eq:const_KCL} enforce that every node is connected to the root $n_0$; it can be derived using Kirchhoff's Current Laws (KCLs). 
Constraints~\eqref{eq:const_consistency} establish local consistency between the variables. 
A derivation and discussion of all constraints is provided in the supplementary material.

Both constraints~\eqref{eq:const_KCL}~and~\eqref{eq:const_consistency} are quadratic in the binary variables. Hence, the corresponding penalty terms are quartic and cannot be directly mapped to an Ising Hamiltonian, which is necessary for current quantum annealing hardware. For gate-based implementations of annealing, several approaches have been proposed to address this issue~\cite{campbell2022higherorder}. In general, these approaches substantially increase hardware requirements \cite{dragoi2025approx}.
Alternative constraints ensuring the absence of cycles have been introduced in~\cite{silva_qubo_2023}, but this formulation is also not linear.

\tocless\subsection{Sampling Spanning Trees using the Invariant Feasible Subspace Method}

Instead of sampling all configurations and suppressing the non-radial configurations via penalty terms, we now construct a problem-specific quantum operation that preserves the feasible space spanned by all spanning trees. More concretely, we construct a parameterized unitary $U_{\text{M, feasible}}(\beta)$, such that if our initial state 
encodes a superposition of spanning trees, then the state remains a superposition of spanning trees during the evolution under $U_{\text{M, feasible}}(\beta)$.

We follow Hadfield's approach \cite{hadfield_quantum_2019} and first construct a complete set of local moves that preserve the feasible space $\mathrm{Sp}(\GG, n_0)$, that is, map spanning trees to spanning trees. Let $\TT$ be a spanning tree of $\GG$ with root $n_0$ with the natural orientation implied by the root. We now consider two edges $e=(n,m)\in \TT$ and $e^\prime = (n^\prime, m) \notin \TT$.
We observe that $\TT^\prime = \TT + e^\prime - e$ is another spanning tree with root $n_0$ if and only if the node $n^\prime$ is not downward of edge $e$ in $\TT$, because $\TT^\prime$ would contain a cycle otherwise.
Based on this observation, we define the \emph{edge rotation} $r$ as the local map
\begin{equation}
    r: e = (n,m) \mapsto e^\prime = (n^\prime, m). 
\end{equation}
The edge rotation $r$ is called \emph{valid} if $n^\prime$ is not downward of edge $e$ in $\TT$. 

The following theorem establishes that the set of all edge rotations $\RR$ is complete. Every spanning tree can be \emph{efficiently} reconfigured into any other spanning tree using only these local moves. A derivation and proof can be found in the Methods.
\begin{thm}
    \label{thm:completeness_edge_rotations}
    Let $\TT$ and $\TT^\prime$ be any two spanning trees of $\GG$ with root $n_0$. Let 
    $N_{\TT, \TT^\prime} := \lvert \EE_\TT \setminus \EE_{\TT^\prime} \rvert \leq \lvert \VV \rvert - 1$ be the number of edge mismatches. 
    There exists at least one finite sequence of valid edge rotations $r_{N_{\TT, \TT^\prime}} \circ \ldots \circ r_1$ that maps $\TT$ into $\TT^\prime$. 
\end{thm}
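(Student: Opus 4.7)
The plan is to induct on the mismatch count $N := N_{\TT,\TT'}$. The base case $N=0$ gives $\TT=\TT'$ and needs no rotation. For the inductive step, it suffices to exhibit, for every pair $\TT \neq \TT'$ of spanning trees with root $n_0$, a single valid edge rotation that yields an intermediate spanning tree $\tilde{\TT}$ with $N_{\tilde{\TT},\TT'}=N-1$; iterating then produces the required sequence. Because a spanning tree rooted at $n_0$ is fully encoded by its parent map $p_\TT : \VV \setminus \{n_0\} \to \VV$, and because a rotation $r:(n,m)\mapsto(n',m)$ only alters the parent of $m$, taking $n' = p_{\TT'}(m)$ for any mismatched $m$ automatically decreases $N$ by exactly one, \emph{provided} the rotation is valid.

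The substantive step is producing, at every stage, a mismatched node whose rotation to its $\TT'$-parent is valid, i.e.~where $p_{\TT'}(m)$ is not downward of $(p_\TT(m),m)$ in $\TT$. I would select $m^\ast \in S := \{m : p_\TT(m) \neq p_{\TT'}(m)\}$ of minimum depth in $\TT'$, and set $n^\ast := p_{\TT'}(m^\ast)$. By minimality every proper $\TT'$-ancestor of $m^\ast$ lies outside $S$, so its parent coincides in $\TT$ and $\TT'$. Applying this agreement inductively up the ancestor chain, the unique $n_0$-to-$n^\ast$ path in $\TT'$ coincides edge-for-edge with the $n_0$-to-$n^\ast$ path in $\TT$. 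In $\TT'$, the node $m^\ast$ is a child of $n^\ast$ and therefore does not appear on the $\TT'$-path from $n_0$ to $n^\ast$; by the identification just established, $m^\ast$ also does not appear on the $\TT$-path from $n_0$ to $n^\ast$. Hence $n^\ast$ is not a descendant of $m^\ast$ in $\TT$, which is precisely the validity criterion for the rotation $r: (p_\TT(m^\ast),m^\ast) \mapsto (n^\ast,m^\ast)$.

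Applying this rotation yields $\tilde{\TT}$ with $p_{\tilde{\TT}}(m^\ast)=p_{\TT'}(m^\ast)$ and $p_{\tilde{\TT}}(v)=p_\TT(v)$ for all $v\neq m^\ast$, so $N_{\tilde{\TT},\TT'}=N-1$. The induction then closes after exactly $N_{\TT,\TT'}$ steps, producing a sequence of valid edge rotations of the required length. The bound $N_{\TT,\TT'}\le |\VV|-1$ is immediate, since both spanning trees contain exactly $|\VV|-1$ edges. The most delicate part of the write-up will be formally propagating the ``same parent, hence same root-path'' identification up the ancestor chain, together with a clean treatment of the corner case $n^\ast = n_0$ (which is actually the easiest, since $n_0$ is trivially not a descendant of any node). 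Once these points are settled, the remainder is essentially bookkeeping on the parent-map representation of rooted spanning trees.
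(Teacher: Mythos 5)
Your proof is correct, but it takes a genuinely different route from the paper. The paper first proves a lemma for graphs consisting of a single cycle, then establishes mere existence of a finite sequence by induction on the number of edges $\lvert \EE \rvert$ of $\GG$ (removing edges outside both trees, or routing through an intermediate tree $\TT'' = \TT + e - e'$ built from a fundamental cycle $\CC_e$), and only afterwards proves the exact length $N_{\TT,\TT'}$ by a second, separate induction on the mismatch count. You instead run a single induction directly on $N_{\TT,\TT'}$, with an explicit greedy selection rule: among the nodes whose parents disagree, take one of minimum depth in $\TT'$ and rotate its $\TT$-edge onto its $\TT'$-parent edge. The payoff of your approach is that validity of each rotation is verified concretely (the root-to-$n^\ast$ paths in $\TT$ and $\TT'$ coincide, so $n^\ast$ cannot be downward of the rotated edge), existence and the exact length $N_{\TT,\TT'}$ fall out of one argument, and the proof is effectively an $O(N_{\TT,\TT'})$ algorithm for producing the sequence. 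Indeed, the paper's length argument picks an arbitrary mismatched edge $e=(u,v)\in\TT$ with a partner $e'=(u',v)\in\TT'$ and asserts $\TT''=\TT-e+e'$ is a tree without checking that $u'$ is not downward of $e$; your minimum-depth selection is precisely what makes that step airtight. What the paper's route buys in exchange is the structural connection to fundamental cycles and the edge-deletion reduction, which it reuses elsewhere. Your own caveats are well placed: the ancestor-chain identification and the $n^\ast=n_0$ corner case are the only points needing careful formalization, and both go through as you describe.
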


We design a controlled quantum operation that implements valid edge rotations. 
We observe that the two rotations $e \mapsto e^\prime$ and $e^\prime \mapsto e$ are reciprocal: If $e \mapsto e^\prime$ is valid for the spanning tree $\TT$, then $e^\prime \mapsto e$ is valid for the spanning tree $\TT^\prime = \TT+e^\prime-e$. Hence, only one of the two rotations is possible at a time, and validity can be inferred from the binary variables $y_{e,n}$ by evaluating the boolean function $f_{r: e \leftrightarrow e^\prime}=\neg y_{e^\prime, n} \land \neg y_{e, n^\prime}$. We can thus incorporate both edge rotations into one controlled operation.

Based on this classical reasoning, we define a partial controlled edge rotation mixer,
\begin{equation}
    U^\text{PM}_{r: e \leftrightarrow e^\prime}(\beta) := \Lambda_{f_{r: e \leftrightarrow e^\prime}} \big( U_{r: e \leftrightarrow e^\prime}(\beta) \big).
\end{equation}
The notation $\Lambda_{f_{r: e \leftrightarrow e^\prime}}$ specifies that the operation is carried out only if the boolean function $f_{r: e \leftrightarrow e^\prime}$ evaluates True. The operation $ U_{r: e \leftrightarrow e^\prime}(\beta)$ then describes the mixing between the valid configurations $\ket{\bf{y}}$ and $\ket{\bf{y}^\prime}$ encoding the trees $\TT$ and $\TT^\prime$.

We provide the actual design of the partial controlled edge rotation mixer in the supplementary material. Furthermore, we derive the following result regarding the required resources.
\begin{thm}
\label{thm:partial_ mixer}
    The partial mixer $U^\mathrm{PM}_{r: e \leftrightarrow e^\prime}(\beta)$ can be implemented using $\mathcal{O}(\lvert \EE \rvert \lvert \VV \rvert)$ single qubit and CNOT gates. The compiled circuit requires 6 + 2 additional ancillary qubits. The first 6 are required to implement the controlled updating of $\ket{y_{j}}$, and 2 are required for the compilation.  
\end{thm}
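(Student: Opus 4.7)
The plan is to decompose the partial mixer $U^{\mathrm{PM}}_{r:e\leftrightarrow e^\prime}(\beta)$ into three stages: (i) coherently compute the validity predicate $f_{r:e\leftrightarrow e^\prime}=\neg y_{e^\prime,n}\land\neg y_{e,n^\prime}$ into an ancilla; (ii) apply the controlled two-level mixing that rotates between $\ket{\TT}$ and $\ket{\TT^\prime}$; (iii) uncompute the predicate. Stages (i) and (iii) together involve only two data qubits and a single AND, so they compile into $\mathcal{O}(1)$ single-qubit and CNOT gates using one ancilla for the standard Toffoli decomposition. This accounts for one of the two compilation ancillas; the second plays the same role during the multi-controlled gates used in stage (ii).

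For stage (ii), I would first characterize exactly which binary variables differ between $\ket{\TT}$ and $\ket{\TT^\prime}$. Because the rotation only re-parents the subtree of $m$, the only pairs that change are $(y_{e,n^{\prime\prime}},y_{e^\prime,n^{\prime\prime}})$ for $n^{\prime\prime}$ in the subtree of $m$, each toggling between $(1,0)$ and $(0,1)$; all other $y_j$ coincide in the two trees. The desired two-level rotation between $\ket{\TT}$ and $\ket{\TT^\prime}$ can therefore be realized by singling out a canonical ``indicator'' pair at $n^{\prime\prime}=m$, performing the actual rotation there conditioned on $f$, and then propagating the flip to every other affected pair via controlled gates. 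Whether a given $y_j$ is affected can be read off directly from the encoding, since $y_{e,n^{\prime\prime}}=1$ is itself the witness that $n^{\prime\prime}$ lies in the subtree of $m$.

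The controlling boolean for each individual $y_j$ update is thus a conjunction of a constant number of literals drawn from the data register — the validity flag, the indicator bit, the downward marker $y_{e,n^{\prime\prime}}$, and a few bookkeeping predicates determined by the edge index and the natural orientation. I would store these partial conjuncts in the pool of six ancillas through a short AND ladder of Toffoli gates, apply the appropriate controlled single-qubit rotation on the target bit, and then reverse the ladder so the ancillas are returned cleanly to $\ket{0}$ and reused on the next iteration. Each iteration then costs $\mathcal{O}(1)$ elementary gates, and the outer loop sweeps the $\lvert\EE\rvert(\lvert\VV\rvert-1)$ binaries $y_j$, which yields the stated $\mathcal{O}(\lvert\EE\rvert\lvert\VV\rvert)$ gate count.

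The main obstacle, and the point where I expect to spend the most care, is verifying that six ancillas suffice uniformly across the iteration: this requires enumerating the boolean predicates that gate the update of every $y_j$ and showing that they can always be combined through an AND-tree of bounded depth, so that no more than six intermediate qubits are ever live at once and all are reversibly returned to $\ket{0}$ between updates. Any residual entanglement between the ancillas and the data register would leak the state out of the feasible subspace spanned by the spanning trees and thus invalidate the completeness argument built on Theorem~\ref{thm:completeness_edge_rotations}. Once this bookkeeping is pinned down, the gate-count bound and the $6+2$ ancilla accounting follow by a routine sum over the outer loop.
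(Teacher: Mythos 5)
There is a genuine gap in your characterization of which variables change under an edge rotation, and it sits at the heart of the construction. You claim that the only pairs that differ between $\ket{\TT}$ and $\ket{\TT^\prime}$ are $(y_{e,n^{\prime\prime}},y_{e^\prime,n^{\prime\prime}})$ for $n^{\prime\prime}$ in the subtree of the shared head, and that ``all other $y_j$ coincide in the two trees.'' This is false. When the subtree hanging below the head is re-parented from tail $n$ to tail $n^\prime$, every edge $e^{\prime\prime}$ on the tree path between $n$ and $n^\prime$ (equivalently, on the symmetric difference of the root-to-$n$ and root-to-$n^\prime$ paths) also gains or loses all of those subtree nodes as downward nodes, so the variables $y_{e^{\prime\prime},w}$ must be negated for every such $e^{\prime\prime}$ and every affected $w$. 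The paper handles this with a dedicated sub-unitary $U^{\text{path}}_{r:e\leftrightarrow e^\prime}$, which iterates over all edge--node pairs, coherently marks in ancillas whether $e^{\prime\prime}$ lies on the path (via a predicate comparing $y_{e^{\prime\prime},u}$ and $y_{e^{\prime\prime},u^\prime}$) and whether $w$ is affected, and applies a multi-controlled negation; this double loop is in fact the dominant contribution to the $\mathcal{O}(\lvert\EE\rvert\lvert\VV\rvert)$ gate count and the source of most of the six ancillas. Without it, your circuit maps $\ket{\TT}$ to a bit string that is not the encoding of any spanning tree, so the mixer would leave the feasible subspace — exactly the failure mode you flag at the end but do not guard against.

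A secondary issue is the mixing mechanism itself. The paper does not rotate a canonical indicator qubit and fan the flip out; it proves a phase-kickback lemma showing that if a gate sequence $U_N\cdots U_1$ satisfies the involution conditions $U_N\cdots U_1\ket{\phi_A}=\ket{\phi_B}$ and $U_N\cdots U_1\ket{\phi_B}=\ket{\phi_A}$, then sandwiching a single-qubit phase gate $P(\beta)$ between two controlled applications of the sequence (with Hadamards on the ancilla) implements the rotation $\cos(\beta/2)\ket{\phi_A}+i\sin(\beta/2)\ket{\phi_B}$ up to global phase. The swap-plus-negation structure of the update is then verified to be Hermitian so the involution condition holds. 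Your compress--rotate--decompress sketch could in principle be made to work, but as stated it is underspecified (what controls the propagation once the indicator is in superposition, and why the same circuit acts correctly on both branches), whereas the paper's route reduces everything to the easily checked condition that the update unitaries are self-inverse and mutually commuting.
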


The sampling between all feasible configurations is then realized by a full mixer $U_{\text{M, feasible}}(\beta) = \prod_{r \in S} U^\mathrm{PM}_{r: e \leftrightarrow e^\prime}(\beta) $. Depending on the initialization, it is important to use a sequence $S = r_{i_K}, \ldots r_{i_0}$ of edge rotations, such that the whole feasible space can be traversed, that is, we have finite transition probabilities to all possible configurations. Notably, this depends on the initial state. We discuss several approaches in the supplementary material.

\tocless\subsection{Evaluation of both approaches for MDST}

\begin{figure}[t]
    \centering    
    \includegraphics[width=\columnwidth]{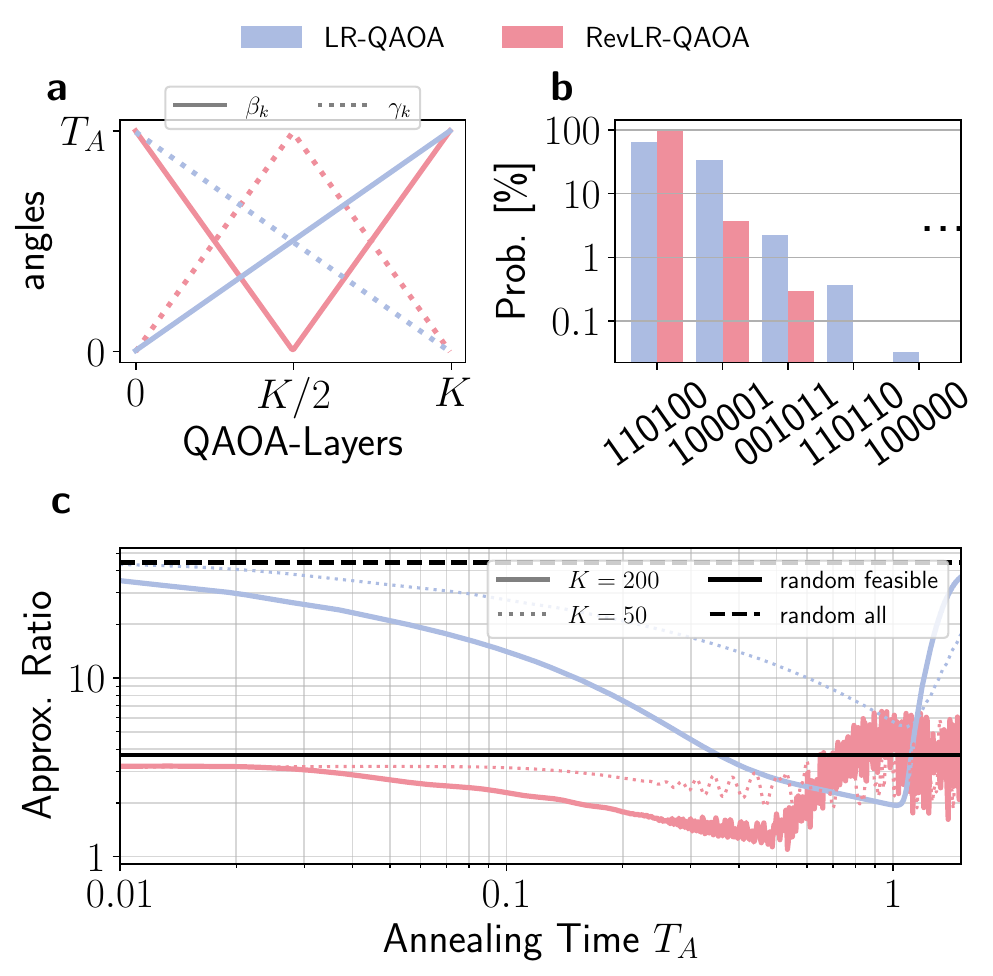}
    \caption{Comparison of the performance of LR-QAOA using the penalty method and RevLR-QAOA employing the invariant feasible subspace approach for a simple MDST instance based on the graph topology shown in Fig.~\ref{paper:fig:comparison_ctqw}. The problem instance is $\alpha_0 = \alpha_1 = 1$ and $\alpha_2=10$; $\mathfrak{f}_0=-3$, $\mathfrak{f}_1=1$ and $\mathfrak{f}_2=2$. The optimal bit string solution is given by $110100$.  \textbf{a}: QAOA schedules $(\gamma_k, \beta_k)$ for LR-QAOA and RevLR-QAOA. The annealing time $T_{\mathrm{A}}$ and the number of layers $K$ are (hyper)-parameters, that define the values of the angles $\beta_k$ and $\gamma_k$ (cf. Methods). \textbf{b}: Final measurement statistics for the best found parameter configurations in a grid search. For LR-QAOA we have $K=200$, $T_{\mathrm{A}}=1$, for RevLR-QAOA $K=200$, $T_{\mathrm{A}} = 0.54$. \textbf{c}: Performance measured by the approximation ratio as a function of the annealing time $T_A$ for fixed $K$. The approximation ratio is defined as the ratio between the energy expectation value of the final state and the ground state (optimal) energy. Lower values of the approximation ratio indicate better performance, with the theoretical lower bound (best achievable value) being $1$. As a benchmark, we compare the performance to picking any feasible state completely at random (vertical black line) and any spin configuration at random (vertical dashed black line).} 
    \label{paper:fig:comparison_qaoa_simulation_results}
\end{figure}

We implement and test both approaches for the MDST problem, in particular for a three-node instance (cf. Fig.~\ref{paper:fig:comparison_ctqw}). The cost function in terms of the binary variables reads
\begin{equation}
   C(\TT) = \sum_{e \in \TT}  \alpha_e f_e(\TT)^2
   = \sum_{e \in \EE} \,\,\, \sum_{n,m \in \VV \backslash {n_0}} \alpha_e y_{e,n} y_{e,m} \mathfrak{f}_n \mathfrak{f}_m.
   \label{eq:total_cost_flows_tree}
\end{equation}
Since the cost function~\eqref{eq:total_cost_flows_tree} is quadratic in the binary variables, it can be mapped to an Ising Hamiltonian and thus be readily implemented for standard quantum optimization techniques such as the Quantum Alternating Operator Ansatz (QAOA) \cite{hadfield_quantum_2019}.

To numerically evaluate the performance of both methods, we simulate two scheduled-QAOA variants tailored to the two approaches: linear ramp QAOA (LR-QAOA) for the penalty and reverse linear ramp QAOA (RevLR-QAOA) for the invariant feasible subspace approach based on the edge-rotation Mixer $U_{\text{M, feasible}}(\beta)$. The corresponding schedules are shown in Fig.~\ref{paper:fig:comparison_qaoa_simulation_results}\textbf{a}. More information on the algorithms and their implementation can be found in the Methods.

We find that for the three-node example, the invariant feasible subspace approach consistently outperforms the penalty method across a wide range of schedule parameters $(T_{\mathrm{A}}, K)$, as shown in panel \textbf{c} and in the supplementary material. While the invariant feasible subspace approach exhibits higher sensitivity to parameter variations at large $T_{\mathrm{A}}$, it achieves a $97.6\%$ probability of sampling the optimal solution at its best setting, with no infeasible states observed. The implementation of the penalty method, by contrast, reaches only a $80.5\%$ success probability under optimal parameters and still samples infeasible configurations with probabilities exceeding $0.1\%$, as seen in panel \textbf{b}. The comparatively higher approximation errors for the penalty methods arise from these infeasible outcomes, particularly at small $T_{\mathrm{A}}$, where the state remains close to a uniform superposition dominated by high-cost, infeasible configurations. Notably, performance does not decrease monotonically for both methods with increasing $T_{\mathrm{A}}$.  Beyond a certain $T_{\mathrm{A}}$ threshold, performance deteriorates because the error, of order 
$\mathcal{O}(T_{\mathrm{A}} / K)$, becomes too large, and the adiabatic evolution is no longer well approximated. For large $T_{\mathrm{A}}$, the invariant feasible subspace approach effectively samples random feasible states, while the penalty method samples random bit strings. 
\\

\tocless\section{Discussion}

The transition toward renewable energy sources fundamentally increases the complexity of power distribution systems. Unlike traditional centralized generation, renewable production is often distributed, with energy injected directly into the grid at multiple points. As a consequence, optimizing power flows and reconfiguring distribution networks with minimal losses has become a central operational challenge. 

The Minimum Loss Network Reconfiguration task in distribution grid operation is closely related to the Minimum Dissipation Spanning Tree (MDST) problem. The key distinction lies in the modeling assumptions: in distribution grids, only a subset of lines are switchable, whereas in MDST, all edges are assumed to be available for switching. In this work, we establish an explicit mapping from network reconfiguration to MDST by constructing an MDST+ cost function on a reduced graph containing only the switchable lines. In light of this formulation, heuristics and optimization strategies developed for MDST can be effectively leveraged to address the network reconfiguration problem. Moreover, this construction establishes that the network reconfiguration problem is at least as computationally hard as MDST.

Like other spanning tree problems whose cost functions depend non-locally on the structure of the tree, MDST is NP-hard \cite{khodabakhsh2018submodular}. Our rigorous results strengthen this understanding by proving strong non-approximability guarantees, indicating that efficient exact or approximation algorithms are unlikely to exist in the general case. Consequently, tackling such computational hardness requires the development and application of alternative strategies, including tailored heuristics and advanced optimization methods \cite{aoki1987normal,civanlar1988distribution,baran1989network,shirmohammadi1989reconfiguration}. 

In this article, we have investigated quantum optimization techniques for the MDST problem. We demonstrate how spanning trees can be efficiently encoded on a quantum register and how they can be sampled. To this end, we compare two approaches. The standard approach incorporates the constraints enforcing a spanning tree directly into the cost function as penalty terms. In contrast, we introduce a set of local moves that enable effective traversal of the search space of all spanning trees. We further show how these local moves can be implemented on a gate-based quantum computer. Numerical simulations on the elementary non-trivial system, using QAOA \cite{hadfield_quantum_2019} and assuming an ideal fault-tolerant quantum computer, indicate that exploring only the feasible space consistently outperforms the penalty-based method across a wide range of hyperparameters. However, the circuits required to implement transitions exclusively between feasible states are significantly deeper than those used to implement transitions between all possible bit strings in the penalty method. Consequently, determining which approach performs better on noisy hardware remains an open question.
\\

\tocless\section{Methods}
\begingroup
\small
\setlength{\baselineskip}{0.9\baselineskip}
Detailed Methods can be found in the supplementary material. 

\paragraph*{Non-approximability of MDST: Theorem \ref{thm:np_hardness}}

We provide a polynomial-time mapping from Minimum Set Cover to MDST that transfers approximation hardness.
Minimum Set Cover is a central problem in the theory of approximation hardness \cite{williamson2011approximation,dinur2014analytical}.
The mapping builds a three-layer network.
The top layer consists of two sources, $y$ and $z$.
The middle layer consists of consumer nodes corresponding to subsets in the Set Cover instance, and the third layer consists of consumer nodes corresponding to elements of the Set Cover instance.
The idea is to design the network such that it is most advantageous to choose a radial configuration that directly connects $y$ to many consumer nodes in the middle layer, satisfying their demands.
Then, source $z$ is left to service the demands of the consumer nodes in the bottom layer.
However, by design, the network does not include any edges directly connecting the top layer with the bottom layer, and hence, flow originating from source $z$ must pass through consumer nodes of the middle layer to reach its destinations in the bottom layer.
Since a radial configuration cannot contain any cycles, source $y$ cannot be directly connected to any of these ``pass-through'' consumer nodes (except for at most one).
Hence, a low-cost radial configuration uses only a few consumer nodes of the middle layer to pass on flow from $z$ to the bottom layer.
This property makes it possible to encode a Set Cover instance: a preferably small number of middle-layer consumer nodes must cover all consumer nodes of the bottom layer.

\paragraph*{Completeness of Local Edge Rotations: Theorem~\ref{thm:completeness_edge_rotations}}

We first observe that for a cyclic graph, there is a sequence of valid edge rotations to map $\TT$ to $\TT^\prime$.
We then prove that there is at least one finite sequence of valid edge rotations that maps $\TT$ to $\TT^\prime$ for arbitrary graphs by induction on the number of edges $\lvert \EE \rvert \ge \lvert \VV \rvert - 1$. For $\lvert \EE \rvert = \lvert \VV \rvert - 1$, the proposition is trivial. For the induction step, we distinguish two cases: (a) If there exists an edge not in either tree, it can be removed and the induction hypothesis applied to the smaller graph. (b) If every edge belongs to at least one tree, we pick an edge $e \in \TT^\prime \setminus \TT$ and construct the fundamental cycle $\CC_e$ in $\TT + e$. Then there exists $\TT^{\prime \prime} = \TT + e - e^\prime$, for any $e^\prime \in \CC_e \setminus \TT^\prime$ and we can reconfigure $\TT \overset{\CC_e}{\mapsto} \TT^{\prime \prime}\overset{\text{(a)}}{\mapsto} \TT^\prime$, using that reconfiguration $\TT \mapsto \TT^{\prime \prime}$ can be achieved by only rotating edges in $\CC_e$, completing the induction. Finally, we prove that there is a sequence of length $N_{\TT, \TT^\prime}$ by induction using the previous results.
\paragraph*{Partial Mixer Implementation and Resource Estimation: Theorem \ref{thm:partial_ mixer}}

A valid edge rotation necessitates the coordinated update of a well-defined subset of variables, specifically those associated with the rotated edges and with the edges along the path between the two tails of the rotation. This update is implemented in two stages by smaller circuits, designed to transform the quantum state prior to the rotation into the corresponding state afterward. Both circuits follow the same principle: they iterate over all variables, mark in an ancilla (via a Boolean function) if the variable is affected by the current rotation, and subsequently apply the required update, controlled by the ancilla. These circuits are embedded within a general mixing circuit, which ensures that instead of overwriting the initial state with the updated one, a quantum rotation of angle $\beta$ is performed between the two.

For the resource estimation, each smaller circuit is decomposed into arbitrary single-qubit gates and CNOTs using standard methods. A systematic count of the number of times each circuit appears then yields the overall resource estimate.

\paragraph*{QAOA Simulation}
QAOA with a fixed schedule is a discretization of unitary in quantum annealing that represents continuous time evolution. 
That is, we discretize the interval $[0, T_\mathrm{A}]$ into $K$ intervals and approximate the evolution as a finite sequence of unitary gates 
\begin{equation*}
   U(0, T_{\mathrm{A}}) = \prod_{k=0}^{K-1} U_\mathrm{M}(\beta_k) e^{-i \gamma_k H_\mathrm{cost}} + \mathcal{O}((T_{\mathrm{A}}/K)^2),
\end{equation*}
where $\beta_k$ and $\gamma_k$ define the schedule and the higher order terms stem from the fact that $H_\mathrm{cost}$ and $U_\mathrm{M}$ do not commute.

For LR-QAOA, the schedule is inspired by standard quantum annealing and given by angles $\beta_k = T_{\mathrm{A}} (1 - k/K), \, \gamma_k = T_{\mathrm{A}} k/K$ (cf.~Fig~\ref{paper:fig:comparison_qaoa_simulation_results}\textbf{a}). Hence, LR-QAOA is suited for the penalty method, so we set $H_\mathrm{C} = H_\mathrm{cost} + \lambda_\mathrm{pen} H_\mathrm{pen}$. We initialize the algorithm in the ground state of the Mixer~\eqref{eq:mixer_transversefield}, which is the uniform superposition over all bit-string configurations, to approximate the ground state of $H_\mathrm{C}$ and thus $H_\mathrm{cost}$. 
Notably, LR-QAOA has been demonstrated to efficiently approximate optimal solutions for a broad class of combinatorial optimization problems \cite{montanez2025toward} and offers good parameter initialization in variational QAOA \cite{sack2021quantum}. 

In contrast, for RevLR-QAOA, the schedule consists of a reverse LR-QAOA schedule for the first $K/2$ Layers followed by a (forward) LR-QAOA schedule for the second half (cf.~Fig~\ref{paper:fig:comparison_qaoa_simulation_results}\textbf{a}). Consequently, in the view of quantum annealing, the initial Hamiltonian is $H_\mathrm{cost}$ and such a schedule was initially suggested to locally refine solutions that have been found using another method \cite{perdomo2011study}. More importantly, RevLR-QAOA allows us to search for the ground state of $H_{cost}$ even though the ground state of the Mixer $U_\mathrm{M,\,ER}$ is not known, and is thus suited for the invariant feasible subspace approach. Particularly, we set 
\begin{equation*}
    H_\mathrm{cost} \to \begin{cases} H_\mathrm{cost,\,init} \quad & \text{for} \, k<K/2, \\  H_\mathrm{cost} \, & \text{else} \end{cases}, 
\end{equation*}
where $H_\mathrm{cost, init}$ is the cost function for another initial problem instance based on the same underlying graph $\GG$, but with other $(\alpha_e^\prime, \, \mathfrak{f}_n^\prime)$, whose optimal solution is known, e.g., by some classical brute force approach. 

In practice, we model MDST instances using Pyomo~\cite{hart2011pyomo, bynum2021pyomo}. For LR-QAOA, the full model (cost and constraints) is then converted into a PUBO using quboify \cite{quobify2025}, which provides automatic $\lambda_\mathrm{pen}$-selection based on a naive upper bound for the cost function. For RevLR-QAOA, only the cost function is converted. Both scheduled-QAOA variants are then simulated in Qiskit using the statevector method, that is, parameterized circuits for the mixer and $H_\mathrm{cost}$ are applied consecutively according to the schedule. The performance of both QAOA variants depends heavily on the hyperparameters $T_\mathrm{A}, \, K$. Thus, we perform a grid search for $K \in \{10, 50 ,100 , 200\}$ and 1000 values for $T_\mathrm{A}$ loguniformly seperated in $[0.01, 1.5]$. 

\paragraph*{MDST+ cost function for Network Reconfiguration}

The essential difference between MDST and distribution grid reconfiguration is that MDST treats all lines as switchable, whereas distribution grids have only a few switches.  A valid configuration of the switches, such that the distribution grid is operated radially, corresponds to the spanning tree of the grid graph $\GG_\mathrm{grid}$, but not every spanning tree represents a feasible operational state.

There is, however, a one-to-one correspondence between valid switch configurations and spanning trees $\TT_{\mathrm{red}}$ of the reduced graph $\GG_{\mathrm{red}}$. The reduced graph is obtained by contracting all nodes $n \in \VV_{grid}$ between switches into single nodes $v \in \VV_{\mathrm{red}}$. We define the flow injections $\mathfrak{f}_v$ as the sum of electrical current injections $I_n$ of all nodes $n$ contracted in $v$, that is, $\mathfrak{f}_v = \sum_{\{n \in \EE_{\mathrm{grid}} | n \in v\}} I_n$. Each edge $s \in \EE_{\mathrm{red}}$ corresponds to a switch in the grid (see Fig.~\ref{fig:mapping_distribution_grids}). For a given $\TT_{\mathrm{red}}$, the switch flows $f_s(\TT_{\mathrm{red}})$ equal the electrical currents on the switches, and by KCL they uniquely determine the line currents $i_e(\TT_{\mathrm{red}})$ for all $e \in \EE_{\mathrm{grid}}$. By solving the resulting system of equations once, we can express the line currents $i_e(\TT_{\mathrm{red}})$ in the binary variables $y_{s,v}$ encoding trees in $\GG_{\mathrm{red}}$. However, these expressions are quadratic, since they involve terms like $E(\TT_{\mathrm{red}})_{s, n(v)}\, f_s(\TT_{\mathrm{red}})$.

Overall, this correspondence allows us to cast reconfiguration as an MDST+ problem on $\GG_{\mathrm{red}}$ with a cost function $\sum_{e \in \EE_{\mathrm{grid}}} R_e\, i_e(\TT_{\mathrm{red}})^2$ that is quartic in $y_{s,v}$.

\endgroup

\tocless\section{Acknowledgements}
This paper was written as part of the project ``Quantum-based Energy Grids (QuGrids)", which is receiving funding from the programme ``Profilbildung 2022", an initiative of the Ministry of Culture and Science of the State of North Rhine-Westphalia. NRG and TS were funded by the German Federal Ministry of Research, Technology and
Space (BMFTR) in the project Quantum Artificial Intelligence for the Automotive Value Chain (QAIAC),
Funding No. 13N17166. The sole responsibility for the content of this publication lies with the authors.

\clearpage
\begin{widetext}

\makeatletter
\hsize=\textwidth
\columnwidth=\textwidth

\tocless\section{Supplementary Notes}
\tableofcontents

\begin{table}[b]
\caption{
List of symbols and variables. Vectors are written as boldface lowercase roman letters, matrices as boldface uppercase roman letters, while Gothic type letters denote sets and graphs.
}
\label{tab:notation}
\begin{tabular}{p{3cm} p{10cm}}
    \hline
    $e, e^{\prime},\ldots$ & (directed) edges \\
    $n, m, u , v, \ldots$ & nodes \\
    $\GG$ & an undirected graph \\
    $\VV$ & set of nodes \\
    $\EE$ & set of edges \\
    $\TT$ & spanning tree of $\GG$ \\
    $\mathrm{Sp}(\GG ,n_0)$ & set of all spanning trees with root $n_0$\\
    $y_{e,n}$ & binary variable, $1$ if node $n$ \\
             &is downward of edge $e$, else $0$. \\
    $\matr{E}$ & the node edge incidence matrix, \\
                &$+1$ if edge $e$ points to node $n$ \\
    $\matr{E}(\TT)$ & the node edge incidence matrix for tree $\TT$\\
    $\alpha_e$ & dissipation constant at edge $e$ \\
    $\mathfrak{f}_n$ & in/out flow at node $n$ \\
    $f_e(\TT)$ & flow through edge $e$ for tree $\TT$ \\
    $r: e \mapsto e^\prime$ & local edge rotation \\
    $U^{\text{PM}}_{r: e \leftrightarrow e^\prime}(\beta)$ & partial edge swap mixer\\
    $X, \, Y, \, Z $ & Pauli gates \\
    $\ket{\bf{y}}$ & quantum state encoding configuration of $y_{e,n}$ \\
    \hline
\end{tabular}
\end{table}

\clearpage
\section{Mathematical Background: Graph Theory}

This section provides an overview of the mathematical background. In \ref{sec:graph_background} we introduce some useful notation and review some important results from the literature on (oriented) spanning trees. In \ref{sec:flow_networks} we review how power grids, such as distribution grids, can be naturally described as flow networks and provide insights into how flows can be easily computed for radial (sub-)graphs, such as spanning trees. 

A general introduction to graph theory, including important results on flow networks, can be found in the textbooks by Bollobás \cite{Bollobas1998Modern} and Newman \cite{Newman2010networks}.

\subsection{Orientation, Spanning Trees and Cycles on Graphs}
\label{sec:graph_background}

Let $\GG = (\VV,\EE \subseteq \VV \times \VV)$ be an undirected graph  with $\lvert \VV \rvert$ nodes, denoted by $n, m, u, v, w \ldots$, and $\lvert \EE \rvert$ edges, denoted by $e, e^\prime, e^{\prime \prime}, \ldots$. Throughout this article, we assume that the network is connected. If not stated otherwise, we discuss simple graphs. However, all results can be generalized to/also hold for multi-graphs, that is, graphs with multiple edges between two nodes. 

An \emph{orientation} of $\GG$ assigns a direction to each edge $e \in \EE$ by turning the edge $e = \{n, m\}$ into a directed edge. Hence, for each edge, there are two choices: $e = (n,m)$ and $e = (m,n)$. For an oriented edge $e = (n,m)$ the node $n$ is called the \emph{tail} and the node $m$ is called the \emph{head} of $e$. 
Once an orientation has been fixed, we can define the edge incidence matrix $\matr E \in \mathbb{R}^{\lvert \VV \rvert \times \lvert \EE \rvert}$ as
\begin{equation}
    E_{n,e} = \begin{cases}
        +1 & \text{if } n \text{ is the head of } e,\\
        -1 & \text{if } n \text{ is the tail of } e, \\
        0  & \text{else}. 
    \end{cases}
\end{equation}
The (unweighted) Laplacian is then defined as $\matr L = \matr E^\top \matr E$ and is independent of the chosen orientation. In flow networks, one often has a distinguished node as for instance the feeder or slack node. Removing the row and column corresponding to this node, one obtains the grounded Laplacian $\matr{\tilde L}$.

A \emph{tree} $\TT = (\VV_\TT, \EE_\TT )$ in $\GG$ is a sub-graph of $\GG$ that has no cycles and is connected, cf. Fig.~\ref{net_rec:fig:cycle_basis_trees} for an elementary example. A \emph{spanning tree} of $\GG$ is a tree, such that all nodes $n \in \VV$ are connected, hence, $\VV_\TT = \VV$. 
Thus, a necessary condition for a spanning tree is that 
\begin{equation}
    \lvert \EE_\TT \rvert = \lvert \VV \rvert - 1.
\end{equation} 
The set of spanning trees will be denoted as $\mathrm{Sp}(\GG)$ in the following. The number of spanning trees of a graph $\GG$ depends on the density of the graph. For a grid with only one cycle, the number of spanning trees is given by the number of nodes in the cycle, whereas for a complete graph, the number of spanning trees is given by $\lvert \VV \rvert ^{\lvert \VV \rvert - 2}$ and thus grows exponentially in the system size. According to Kirchhoff's theorem, the number of spanning trees equals the determinant of the grounded Laplacian $\matr{\tilde L}$ \cite{kirchhoff1958theorem}.

Let $\TT \in \mathrm{Sp}(\GG)$, then any edge $e \notin \EE_\TT$ defines a \emph{fundamental cycle} in $\GG$. To see this, let $p$ be the path from the head of $e$ to the tail of $e$ in $\TT$. There exists exactly one such path, since otherwise there would be a cycle in $\TT$. The path, together with the edge $e$, forms a cycle in $\GG$. Hence, a graph $\GG$ has $\lvert \EE \setminus \EE_\TT \rvert = \lvert \EE \lvert - \lvert \VV \rvert + 1$ fundamental cycles, and the set of all fundamental cycles forms a cycle basis of $\GG$. However, this mapping is not one-to-one. Different spanning trees (can) define the same fundamental cycle basis, see Fig.~\ref{net_rec:fig:cycle_basis_trees}.

\begin{figure}[t!]
    \centering
    \includegraphics[trim={0 0 1.2cm 0},clip]{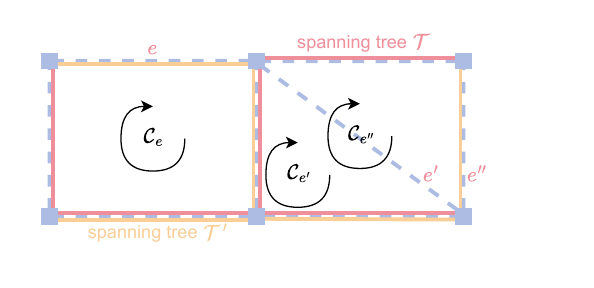}
    \caption{Two spanning trees and fundamental cycle basis for an elementary six node graph $\GG$ (light blue). A spanning tree $\TT$ is a subgraph of $\GG$ such that all nodes are connected and there are no cycles. The edges $e, e^\prime$ and $e^{\prime\prime}$ not in $\TT$ each define a fundamental cycle, and all fundamental cycles together define a cycle basis of $\GG$. The two drawn spanning trees $\TT$ (red) and $\TT^\prime$ (yellow) define the same fundamental cycle basis $\{\CC_e, \CC_{e^\prime}, \CC_{e^{\prime \prime}} \}$.}  
    \label{net_rec:fig:cycle_basis_trees}
\end{figure}

\subsection{Flow Networks}
\label{sec:flow_networks}

\subsubsection*{Graph Representation of Flow Networks}

A natural representation of flow networks is in the form of graphs. Let $\GG = (\VV, \EE)$ be an undirected graph representing a flow network's topology. Let $\mathfrak{f}_n$ denote the supply/demand of flow at node $n$. If $\mathfrak{f}_n > 0$, we define that node $n$ demands $\lvert \mathfrak{f}_n \rvert$ quantities of the flow, e.g. real power in transmission grids or electrical current in distribution grids. Vice versa, the flow $\lvert \mathfrak{f}_n \rvert$ is injected into the network at $n$ if $\mathfrak{f}_n < 0$. We assume that the feeder node always injects/demands flow to the network such that the network is balanced, $\sum_n \mathfrak{f}_n = 0$. 

To describe the direction of a flow along an edge $f_e$, we need to fix an orientation of the graph. For an edge $e=(n,m)$, a positive flow value $f_e > 0$ indicate a flow from the tail $n$ to the head of $m$ and vice versa. Then the flow supplies/demands at each node $\mathfrak{f}_n$ are related to the flows $f_{e}$ on the lines by Kirchhoff's current law (KCL), which states that for each node $n$ the sum over all in and outflows must be equal to $\mathfrak{f}_n$. Algebraically, this conservation law can be written as
\begin{equation}
    \label{net_rec:eq:KCL}
    \mathfrak{f}_n = \sum_e E_{n,e} f_e ,
\end{equation}
where $\matr E$ is the edge-incidence matrix of $\GG$.

\subsubsection*{Application: Representation of Power Grids}

\begin{figure}[b!]
    \centering
    \includegraphics[trim={0 0.4cm 0 0.4cm},clip]{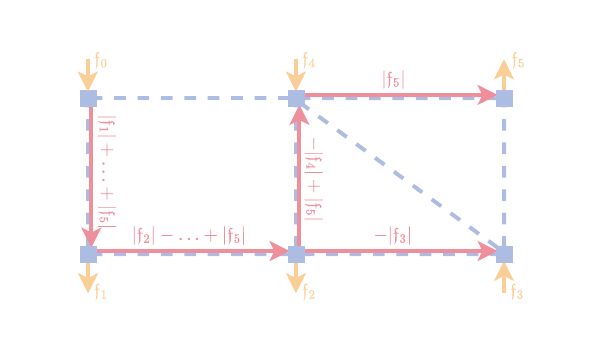}
    \caption{The flows on an edge $e \in \TT$ can be readily computed by summing over all downward demands/supplies, respecting the signs. For illustration, we make the signs that indicate the direction of the flow on an edge explicit. }  
    \label{fig:flows_from_downward}
\end{figure}

Power grids can naturally be represented as flow networks; the flows can describe the physical electrical currents or derived quantities such as (real-) power flows. 

In high-voltage transmission grids, nodes $n \in \VV$ (or buses) represent substations or individual busbars in a substation. Edges $e \in \EE$ (or branches) represent transmission lines or transformers connecting the nodes.

In low-voltage distribution grids, nodes correspond to individual consumers or distributed generators such as photovoltaic panels. The connection to the higher voltage level, the feeder, is represented by a distinguished node $n_0$
The edges represent all possible lines (or cables, or transformers) of the distribution grid. For the major part of this work, we assume that \emph{every} line can be \emph{active}, that is, flow can go through, or \emph{inactive}, that is, no flow is possible along the line. In real-world applications, this assumption is not justified since the network might contain ``non-switchable'' edges that are always active. We discuss the implications on modelling of distribution grids in \ref{sec:mapping_distribution_grids}. 

Under normal operations, distribution grids are operated in a radial configuration to prevent fault propagation. Different radial configurations can be realized by opening or closing switches to isolate faults or optimize costs. Formally, the subgraph $\TT = (\VV, \EE_{\TT})$ given by all active edges in $\EE$ is a spanning tree of $\GG$. Setting the root of $\TT$ as the feeder node $n_0$, there is a unique orientation of $\TT$ where the head of each edge $e \in \EE_\TT$ is pointing downwards from the root. Hence, $\TT$ together with the feeder $n_0$ can be viewed as a directed/oriented spanning tree. We denote the set of all (directed) spanning trees of $\GG$ with root $n_0$ by $ \mathrm{Sp}(\GG, n_0)$.

\subsubsection*{Flows on Spanning Trees}

For a spanning tree $\TT$ of $\GG$ with root $n_0$, an orientation is naturally induced by choosing the orientation of each edge such that the head points downwards, that is, away from the root. 
For this induced orientation, the flow $f_e$ on each edge $e \in \EE_\TT$ can be computed directly from the flow demands and supplies of all nodes $n$ downward of the edge $e$, see Fig.~\ref{fig:flows_from_downward} for a simple example. We thus have
\begin{equation}  \label{net_rec:eq:flows_in_tree}
    f_e(\TT) = \sum_{\substack{n \text{ downward of } e \\ \text{in } \TT}} \mathfrak{f}_n. 
\end{equation}
If $f_e(\TT) > 0$, the downward demands exceed the downward supplies and thus the flow is downward along edge $e$. Vice versa, if $f_e(\TT) < 0$, the flow direction is upwards, to the feeder node. We remark that we use the notation $f_e(T)$ to indicate that the flow on the edge $e$ depends on the topology of the spanning tree $\TT$.

\clearpage
\section{Local Edge Rotations}

This section provides a more thorough introduction to the local edge rotations - constructed to reconfigure spanning trees efficiently. In particular, in \ref{sec:reconfiguration_of_spanning_trees} we rigorously define edge rotation and prove several Lemmas that together provide a complete proof of Theorem 2 in the main paper. Finally, in \ref{sec:edge_rot_graph_Def} we adopt a different viewpoint, representing the configuration space $ \mathrm{Sp}(\GG, n_0)$ as a graph where two spanning trees are adjacent if they can be reconfigured into each other by a single edge rotation.

\subsection{Reconfiguration of Spanning Trees by Local Moves}
\label{sec:reconfiguration_of_spanning_trees}

In this section, we prove that spanning trees can be sampled by sequences of local operations. We start by formally defining the \emph{edge rotations}. We show that under certain conditions, these edge rotations map from one spanning tree to another. Each local move can thus be viewed as a \emph{local reconfiguration}. We show that using these local reconfigurations, we can efficiently explore the search space $ \mathrm{Sp}(\GG, n_0)$. We note that similar results have recently been obtained in Ref.~\cite{behrooznia2024listing}.

Let $\GG$ be an undirected graph. We denote subgraphs of $\GG$ by $\GG^\prime, \, \GG^{\prime \prime}, \, \ldots$. We then choose an orientation for each subgraph $\GG^\prime$, turning $\GG^\prime$ into a directed graph. Then, any node $u \in \VV$ is called downward of an oriented edge $e = (n,m)$ in $\GG^\prime$ if there is a directed path from $m$ to $u$. 
\begin{defi}
A local \textbf{edge rotation} is a single local edge reconfiguration such that the head of the ``rotated'' edge remains at the same node, that is 
\begin{align}
\begin{split}
    r: \EE^\prime & \to \EE^{\prime \prime}, \\
    \{ e_1, ..., e_i^\prime = (u,v), .... , e_m \} & \mapsto \{e_1, ..., e_i^{\prime \prime} = (w, v), .... , e_m\}
\end{split}
\end{align}
To simplify notation we also write
\begin{equation*}
    r: \GG^\prime \to \GG^{\prime \prime}, \, e_i^\prime \mapsto e_i^{\prime \prime}. 
\end{equation*}
\end{defi}

We now only consider subgraphs $\TT, \, \TT^\prime, \, \ldots$ that are spanning trees of $\GG$. For spanning trees with root $n_0$ an orientation is naturally implied. By construction, edge rotations preserve tree structures, i.e.~they map spanning trees to spanning trees, if and only if no cycle is formed. Connectedness then follows from the fact that the number of edges is preserved. Hence, we get the following elementary result. 

\begin{lem}
\label{lem:valid_edge_rotations}
    Let $\TT = (\VV, \EE_\TT \subset \EE)$ be a spanning tree of $\GG$ with root $n_0$. Let $e = (n,m)\in \EE_\TT$ and $e^\prime = (n^\prime, n) \notin \EE_\TT$, then the subgraph $(\VV, \EE_\TT \cup \{e^\prime\} \setminus \{e\})$ obtained by a single edge rotation $r:e=(n,m) \mapsto e^\prime = (n^\prime, m)$ is also a spanning tree with root $n_0$, if and only if the node $n^\prime$ is not downward of the edge $e$ in $\TT$. Such an edge rotation is called \emph{valid}. 
\end{lem}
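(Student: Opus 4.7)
The plan is to establish both directions of the ``if and only if'' by reducing the question to a straightforward connectedness/acyclicity check. Since the operation removes one edge and adds one, the candidate subgraph $\TT' := (\VV, (\EE_\TT \setminus \{e\}) \cup \{e'\})$ has exactly $|\VV| - 1$ edges. A graph on $|\VV|$ nodes with $|\VV| - 1$ edges is a spanning tree if and only if it is connected (equivalently, if and only if it is acyclic), so the entire argument reduces to tracking which nodes lie in which connected component after deleting $e$ from $\TT$.

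The first structural step I would record is that deleting $e = (n,m)$ from $\TT$ splits $\TT$ into exactly two subtrees $\TT_1$ and $\TT_2$, where $\TT_1$ contains the root $n_0$ (and hence $n$) while $\TT_2$ contains $m$. Because $\TT$ is naturally oriented away from $n_0$, the vertex set of $\TT_2$ coincides \emph{exactly} with the set of nodes downward of $e$ in $\TT$, and $\VV = \VV_{\TT_1} \sqcup \VV_{\TT_2}$. This is the precise point at which the ``downward'' condition in the Lemma enters the proof.

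For the forward direction, assuming $n'$ is not downward of $e$, one has $n' \in \VV_{\TT_1}$ while $m \in \VV_{\TT_2}$, so $e' = (n', m)$ bridges the two components into a single connected graph on $|\VV| - 1$ edges, hence a spanning tree. The root-induced orientation is preserved because $\TT_1$ inherits its orientation unchanged, the new edge $e'$ points into $m$, and the subtree hanging below $m$ is untouched; so $m$ is still reached from $n_0$ along directed edges, and all its descendants likewise. Conversely, if $n'$ is downward of $e$, both endpoints of $e'$ lie in $\VV_{\TT_2}$, so $e'$ together with the unique $m$-to-$n'$ path already present in $\TT_2$ closes a cycle, while $\TT_1$ remains disconnected from $\TT_2 \cup \{e'\}$. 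Thus $\TT'$ is neither acyclic nor connected, and in particular not a spanning tree.

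I do not anticipate any serious obstacle; the entire proof is essentially the unpacking of the standard ``deleting an edge from a tree yields two subtrees'' fact, combined with the observation that the downward-of-$e$ set is precisely the vertex set of one of these subtrees. The only mild subtlety worth spelling out is the orientation check in the forward direction, but since only the parent-edge of $m$ is swapped (its head staying at $m$ and only its tail changing from $n$ to $n'$), the preservation of the root-induced orientation is immediate.
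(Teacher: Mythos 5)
Your proof is correct and follows essentially the same route as the paper, which treats this lemma as elementary: the paper's justification is exactly that the edge count is preserved, so the result is a spanning tree iff no cycle is formed (equivalently, iff it stays connected), with the downward condition characterizing when a cycle appears. Your two-subtree decomposition of $\TT \setminus \{e\}$, identifying the downward-of-$e$ set with the component containing $m$, is just the careful unpacking of that same argument.
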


A sequence of valid edge rotations is equivalent to a general spanning tree reconfiguration. A simple example is given in Fig.~\ref{net_rec:fig:reconfiguration_example}\textbf{a}. 
The following lemmas establish completeness. We show that, starting from an arbitrary spanning tree $\TT$ with root $n_0$, we can move to any other spanning tree with root $n_0$ using only valid local edge rotations. We start with the elementary case of a graph that contains a single cycle only and then generalize the result to arbitrary connected graphs. In this case, there is only one unique (without repetitions) finite sequence which can be trivially constructed, see Fig.~\ref{net_rec:fig:reconfiguration_example}\textbf{b}. Hence, we get the following result. 

\begin{lem}
\label{net_rec:lem:spanning_tree_reconf_cycle}
Let $\GG$ be a graph consisting of a single cycle. Starting from any spanning tree configuration $\TT$ with root $n_0$ any other spanning tree configuration $\TT^\prime$ with root $n_0$ can be realized by a finite sequence of valid edge rotations. 
\end{lem}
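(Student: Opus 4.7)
The plan is to exploit the extreme simplicity of the configuration space in the single-cycle case. Label the cycle as $n_0 = v_0, v_1, \ldots, v_{k-1}, v_0$, so that the edges of $\GG$ are $\{v_i, v_{(i+1) \bmod k}\}$ for $i = 0, \ldots, k-1$. Because any spanning tree of a cycle is obtained by deleting exactly one edge, there are precisely $k$ spanning trees; call them $\TT_0, \TT_1, \ldots, \TT_{k-1}$, where $\TT_i$ is the one with the edge $\{v_i, v_{(i+1)\bmod k}\}$ removed. The root $n_0 = v_0$ induces the natural orientation in each $\TT_i$: starting from the root, the tree consists of two directed paths that walk around the cycle in opposite directions until they hit the missing edge.

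Next I would show that a single valid edge rotation suffices to transform $\TT_i$ into its neighbors $\TT_{i+1}$ and $\TT_{i-1}$ (indices mod $k$). Consider the transition $\TT_i \mapsto \TT_{i+1}$. In $\TT_i$, the node $v_{i+1}$ lies at the end of the ``backward'' branch $v_0 \to v_{k-1} \to \cdots \to v_{i+1}$, hence its parent edge is $e = (v_{i+2}, v_{i+1})$ and $v_{i+1}$ is a leaf. We rotate $e \mapsto e' = (v_i, v_{i+1})$. By Lemma~\ref{lem:valid_edge_rotations}, validity requires that $v_i$ not be downward of $e$ in $\TT_i$; since $v_{i+1}$ is a leaf, its downward set is just $\{v_{i+1}\}$, which excludes $v_i$. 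The resulting tree has the edge $\{v_{i+1}, v_{i+2}\}$ removed and is therefore $\TT_{i+1}$. An identical argument, with $v_i$ in place of $v_{i+1}$, produces the reverse rotation $\TT_i \mapsto \TT_{i-1}$.

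With these elementary moves established, I would conclude by observing that any target $\TT_j$ can be reached from any source $\TT_i$ by composing at most $\min(|i-j|, k-|i-j|)$ of the above unit-step rotations, which is clearly finite. This completes the proof.

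The only delicate point, and the one I expect to be the main obstacle, is the careful verification of the ``downward'' condition of Lemma~\ref{lem:valid_edge_rotations}. The key observation that makes this painless is that the endpoint of the unique branch abutting the missing edge is always a leaf in $\TT_i$, so its downward set is trivial and validity follows immediately; recognizing this structural fact is what collapses the argument to a short, explicit construction.
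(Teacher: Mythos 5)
Your overall strategy---enumerate the $k$ spanning trees of the cycle, exhibit single-rotation transitions between consecutive ones, and chain them---is the right one and matches the spirit of the paper's (essentially pictorial) argument. However, one step fails: the claim that $\TT_i$ and $\TT_{i\pm 1}$ are \emph{always} connected by a single valid edge rotation with indices taken modulo $k$. Consider the two trees obtained by deleting the two edges incident to the root: $\TT_0$ (missing $\{v_0,v_1\}$) and $\TT_{k-1}$ (missing $\{v_{k-1},v_0\}$). As rooted, oriented trees, $\TT_0$ consists of the directed edges $(v_0,v_{k-1}),(v_{k-1},v_{k-2}),\ldots,(v_2,v_1)$, while $\TT_{k-1}$ consists of $(v_0,v_1),(v_1,v_2),\ldots,(v_{k-2},v_{k-1})$; every undirected edge they share carries the opposite orientation, so they share \emph{no} directed edge and differ in $k-1$ oriented edges. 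Since an edge rotation changes only one oriented edge (keeping its head), these two trees are $k-1$ rotations apart, not one. Concretely, your recipe for $\TT_{k-1}\mapsto\TT_{0}$ asks for the parent edge of $v_{(k-1)+1}=v_0$, but $v_0$ is the root and has none. The adjacency structure under single rotations is therefore a path $\TT_0 - \TT_1 - \cdots - \TT_{k-1}$, not a $k$-cycle---which is exactly why the paper describes the connecting sequence as \emph{unique} (without repetitions), and why its three-node example requires only two partial mixers for three spanning trees.

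This error does not sink the lemma: the transitions $\TT_i \leftrightarrow \TT_{i+1}$ for $0 \le i \le k-2$, which you verify correctly via the leaf/trivial-downward-set argument, already connect all $k$ trees, so a finite sequence of valid rotations between any two of them exists, of length exactly $|i-j|$ (consistent with Corollary~\ref{cor:number_of_rotations}, since $|i-j|$ equals the number of oriented-edge mismatches). But your claimed bound $\min(|i-j|,\,k-|i-j|)$ is wrong, and the wrap-around moves $\TT_0 \leftrightarrow \TT_{k-1}$ on which it relies do not exist as single valid edge rotations.
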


\begin{figure}[t!]
    \centering
    \includegraphics{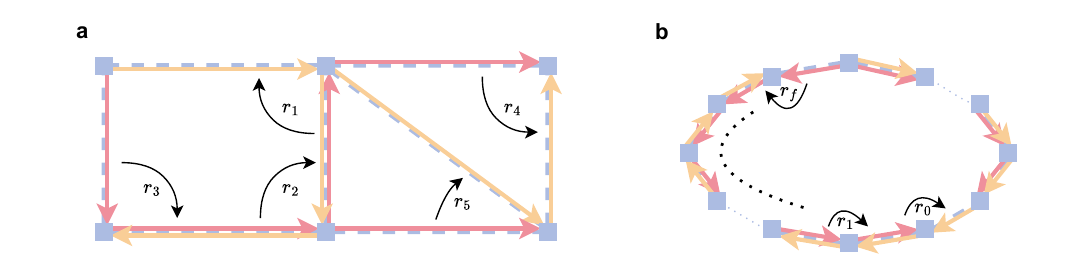}
    \caption{Using a sequence of valid local edge rotations $r_i$ we can reconfigure the spanning tree $\TT$ (red) into the spanning tree $\TT^\prime$ (yellow). \textbf{a} The small 6-node example demonstrates that for non-cyclic graphs the sequence to reconfigure $\TT$ into $\TT^{\prime}$ is not unique: the sequences $r_1, r_2, r_3, r_4, r_5$ and $r_4, r_5, r_1, r_2, r_3$ are both valid. \textbf{b}. For a cycle graph of arbitrary (finite) size, there is one unique (without repetitions) and finite sequence $r_0, r_1, \ldots, r_f$ that can be trivially constructed.}  
    \label{net_rec:fig:reconfiguration_example}
\end{figure}

Using the previous result, we can now prove the general case. 

\begin{lem}
\label{net_rec:thm:spanning_tree_reconf}
Let $\GG$ be a connected graph. Starting from any spanning tree configuration $\TT$ with root $n_0$, any other spanning tree configuration $\TT^\prime$ with root $n_0$ can be realized by a finite sequence of valid edge rotations. 
\end{lem}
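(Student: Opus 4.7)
The plan is to proceed by induction on the number of edges $|\EE|$, using Lemma 2 (the single-cycle case) as the key geometric ingredient. The base case $|\EE| = |\VV|-1$ is immediate: $\GG$ is itself a tree, so $\TT = \TT^\prime = \GG$ and the empty sequence suffices. For the inductive step, I would split into two cases depending on whether the two spanning trees jointly cover all edges of $\GG$.

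In case (a), where some edge $e \in \EE \setminus (\EE_\TT \cup \EE_{\TT^\prime})$ exists, both $\TT$ and $\TT^\prime$ remain spanning trees of $\GG - e$, which is still connected (since $e$ is a chord of $\TT$) and has one fewer edge. The inductive hypothesis then supplies a finite sequence of valid edge rotations in $\GG - e$ taking $\TT$ to $\TT^\prime$. I would verify that each such rotation remains valid in the ambient $\GG$: all rotated edges lie in $\EE \setminus \{e\}$, and the downward relation within any spanning tree that does not contain $e$ is indifferent to the presence of $e$ outside the tree.

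In case (b), every edge lies in $\EE_\TT \cup \EE_{\TT^\prime}$. I pick some $e \in \EE_{\TT^\prime} \setminus \EE_\TT$; adding $e$ to $\TT$ creates a unique fundamental cycle $\CC_e$, and since $\TT^\prime$ is acyclic there must exist some $e^\prime \in \CC_e \setminus \EE_{\TT^\prime}$. Viewing $\CC_e$ as its own graph and invoking Lemma 2, I obtain a finite sequence of valid edge rotations transforming $\TT$ into $\TT^{\prime\prime} := \TT + e - e^\prime$ using only edges of $\CC_e$. By construction $e^\prime \notin \EE_{\TT^{\prime\prime}} \cup \EE_{\TT^\prime}$, so the pair $(\TT^{\prime\prime}, \TT^\prime)$ falls under case (a), and concatenating the two subsequences yields the desired transformation $\TT \to \TT^{\prime\prime} \to \TT^\prime$.

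The main obstacle is the cycle-embedding step: one must verify that the sequence produced by Lemma 2 in the isolated cycle genuinely lifts to a sequence of \emph{valid} edge rotations when $\CC_e$ is a fundamental cycle of $\GG$. The subtlety is the ``not downward'' validity condition — in the embedded setting, cycle vertices may carry whole subtrees branching off into the rest of $\GG$, so I have to argue that these hanging subtrees, which are untouched by rotations internal to $\CC_e$, never obstruct validity. I would handle this by observing that along the rotation sequence the restriction of each intermediate tree to $\EE \setminus \CC_e$ is constant, reducing the check to exactly the isolated-cycle case already resolved by Lemma 2.
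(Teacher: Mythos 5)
Your proposal is correct and follows essentially the same route as the paper's proof: induction on $\lvert \EE \rvert$, deleting an edge lying outside both trees when one exists, and otherwise using the fundamental cycle $\CC_e$ together with the single-cycle lemma to pass through the intermediate tree $\TT^{\prime\prime} = \TT + e - e^\prime$ and then reduce to the first case. Your explicit attention to why the cycle-internal rotation sequence remains valid in the ambient graph (the hanging subtrees off cycle vertices) is a point the paper treats only implicitly, so this is a welcome refinement rather than a deviation.
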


\begin{proof}
We prove the Lemma using induction on the number of edges $\lvert \EE \rvert$. For a connected graph, we have that $\lvert \EE \rvert \geq \lvert \VV \rvert -1$. \\ \\
\textit{Base case:} For $\lvert \EE \rvert = \lvert \VV \rvert -1$, there is only one spanning tree configuration, which is given by $\GG$ itself. Hence, the statement is trivial. \\ \\
\textit{Induction Step:} For any $\lvert \EE \rvert >  \lvert \VV \rvert -1$ we distinguish three cases.  
\begin{enumerate}
    \item[(a):] If $\TT = \TT^\prime$, the statement is trivial.
    \item[(b):] Assume there exists an edge $e \in \GG$ such that $e \notin \TT$ and $e \notin \TT^\prime$. Then $\TT$ and $\TT^\prime$ are also spanning trees for the graph $\GG \setminus \{e\}$, the graph with edge $e$ removed. By induction hypothesis, $\TT$ can be reconfigured to $\TT^\prime$ using a finite sequence of valid edge rotations in $\GG \setminus \{e\}$. From the reduction, it follows that $\TT$ can also be reconfigured to $ \TT^\prime$ in $\GG$.
    \item[(c):] If there is no edge $e \in \EE$ such that $e \notin \TT$ and $e \notin \TT^\prime$, then any edge $e \in \GG$ is in at least one of the two trees. Without loss of generality, we take any edge  
        \begin{equation*}
            e \in \GG:  e \notin \TT \text{ and } e \in \TT^\prime.
        \end{equation*}
        The edge $e$ together with $\TT$ defines a fundamental cycle $\CC_e \subset \GG$. Then,
        \begin{equation*}
            \exists \, e^\prime \in \CC_e: \, e^\prime \neq e \text{ and } e^\prime \notin \TT^\prime, 
        \end{equation*}
        since otherwise we would have that $\CC_e \subset \TT^\prime$ which is in contradiction to $\TT^\prime$ being a tree. 
       
        Now we define a third spanning tree configuration $\TT^{\prime \prime}$ as 
        \begin{equation*}
            \TT^{\prime \prime} := \TT \setminus \{e^\prime\} \cup \{e\}.
        \end{equation*}
        $\TT^{\prime \prime}$ is indeed a spanning tree, since the edges $e$ and $e^\prime$ are both in the fundamental cycle $\CC_e$ and thus $\TT^{\prime \prime}$ remains connected. 
        
        Then we have that:
        \begin{enumerate}
            \item[1.] By construction, $\TT^{\prime \prime}$ can be reconfigured into $\TT^\prime$ using the results from case(b) because $e^\prime \notin \TT^\prime$ and $e^\prime \notin \TT^{\prime \prime}$.
            \item[2.] 
            For the trees $\TT$ and $\TT^{\prime \prime}$ restricted to the cycle $ \CC_{e}$, that is, for $\TT \cap \CC_{e}$ and $\TT^{\prime \prime} \cap \CC_{e}$ with root $n_0(\CC_e)$ induced by the direct path from the root to any node in the cycle $\CC_e$, reconfiguration from one to the other using a finite sequence of valid edge rotations is possible according to Lemma  \ref{net_rec:lem:spanning_tree_reconf_cycle}. 
            Hence, using the same sequence of valid edge rotations, we can reconfigure from $\TT$ to $\TT^{\prime \prime},$ leaving all other edges outside of the cycle $\CC_e$ untouched. 
        \end{enumerate}
        Combining these two steps, we have shown that we can reconfigure from $\TT$ to $\TT^\prime$ via $\TT^{\prime \prime}$.
    \end{enumerate}
\end{proof}

Given two spanning tree configurations, we are interested in how many valid edge rotations are needed to reconfigure one into the other. We thus define the \emph{number of edge-mismatches} as 
\begin{align}
\begin{split}
        N_{\TT, \TT^\prime} := \lvert \EE_\TT \setminus \EE_{\TT^\prime} \rvert = \lvert \VV \rvert - 1 - \lvert \EE_\TT \cap \EE_{\TT^\prime} \rvert.
\end{split}
\end{align}
Note that two edges $e = (n,m)$ and $e^\prime=(m,n)$ are not the same, and thus if $e \in \TT$ and $e^\prime \in \TT^\prime$ this would count as a mismatch. The following corollary establishes that the number of valid edge rotations needed to reconfigure between two trees is given by the number of edge mismatches $ N_{\TT, \TT^\prime}$. 

\begin{figure}
    \centering
    \includegraphics[width=0.7\textwidth]{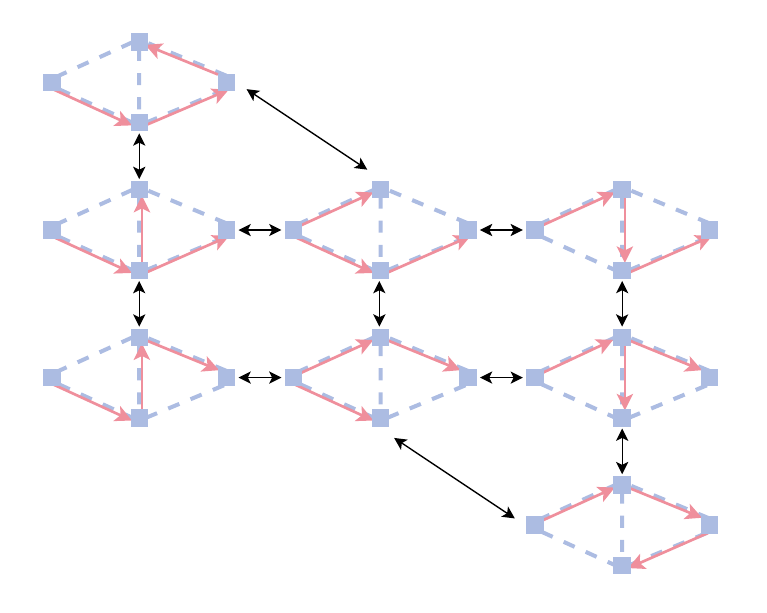}
    \caption{Graph $\GG_{\mathrm{Sp}}$ of all spanning trees for a simple four-node example. Two Spanning Trees are considered adjacent if they can be reconfigured into each other using a single edge rotation, as indicated by the black arrows. }
    \label{fig:graph_of_spanning_trees}
\end{figure}

\begin{cor} 
    \label{cor:number_of_rotations}
    Given two spanning tree configurations $\TT$ and $\TT^\prime$ of $\GG$ with root $n_0$. Then, the number of valid edge rotations needed to reconfigure from one to the other equals the number of edge mismatches $N_{\TT, \TT^\prime}$.
\end{cor}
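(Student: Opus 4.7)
The plan is to prove the corollary by matching a lower and an upper bound on the minimum number $k$ of valid edge rotations required.

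For the \emph{lower bound}, a single valid edge rotation $\TT_i \mapsto \TT_{i+1}$ removes one edge from $\EE_{\TT_i}$ and inserts exactly one new one, so
\begin{equation*}
    \lvert \EE_{\TT_{i+1}} \setminus \EE_{\TT^\prime} \rvert \geq \lvert \EE_{\TT_i} \setminus \EE_{\TT^\prime} \rvert - 1.
\end{equation*}
Since this quantity must drop from $N_{\TT,\TT^\prime}$ to $0$, at least $N_{\TT,\TT^\prime}$ rotations are required.

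For the \emph{upper bound}, I would proceed by induction on $N := N_{\TT,\TT^\prime}$. The case $N = 0$ is trivial. For the inductive step, the plan is to exhibit a single valid rotation that reduces $N$ by exactly one, and then invoke the inductive hypothesis.

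The crux is choosing the right edge to rotate. Let $D \subseteq \VV \setminus \{n_0\}$ denote the set of non-root nodes whose parent in $\TT$ differs from their parent in $\TT^\prime$; since $N \geq 1$, the set $D$ is non-empty. I would pick $v \in D$ of minimum $\TT^\prime$-depth and let $u$ denote its $\TT^\prime$-parent. Because $u$ has strictly smaller $\TT^\prime$-depth than $v$, we have $u \notin D$, and iterating this observation up the $\TT^\prime$-path from $u$ to $n_0$ shows that every node on the root-to-$u$ branch of $\TT^\prime$ has matching parents in $\TT$. Hence the root-to-$u$ paths agree in the two trees, and this common path cannot visit $v$ (else $v$ would be both a $\TT^\prime$-ancestor and a $\TT^\prime$-child of $u$, contradicting that $\TT^\prime$ is acyclic). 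Therefore $u$ is not downward of the $\TT$-edge $e$ with head $v$, and by Lemma~\ref{lem:valid_edge_rotations} the rotation $r: e \mapsto (u,v)$ is valid. Applying $r$ yields a spanning tree $\TT_1$ with $N_{\TT_1, \TT^\prime} = N - 1$, since $(u,v) \in \EE_{\TT^\prime}$ while $e \notin \EE_{\TT^\prime}$; the inductive hypothesis then supplies the remaining $N-1$ rotations.

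The main obstacle is the telescoping argument showing that the chosen rotation is valid, which hinges on selecting $v$ with minimum $\TT^\prime$-depth. This choice forces every node on the $\TT^\prime$-path from $n_0$ to $u$ outside of $D$, which is precisely what makes the two root-to-$u$ paths coincide and rules out $u$ lying in the $\TT$-subtree of $v$. An analogous choice based on $\TT$-depth would not afford this control, so the correct ordering is essential to producing a one-rotation reduction at every step of the induction.
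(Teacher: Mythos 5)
Your proof is correct and follows the same overall strategy as the paper's --- induction on the number of mismatches, peeling off one mismatch per valid rotation --- but your execution of the inductive step is genuinely different and in fact more careful. The paper picks an arbitrary edge $e=(u,v)\in\EE_\TT\setminus\EE_{\TT^\prime}$ whose head $v$ is also the head of some $e^\prime=(u^\prime,v)\in\EE_{\TT^\prime}$, sets $\TT^{\prime\prime}=\TT-e+e^\prime$, and asserts that a single valid rotation takes $\TT$ to $\TT^{\prime\prime}$; it never verifies the hypothesis of Lemma~\ref{lem:valid_edge_rotations} that $u^\prime$ is not downward of $e$ in $\TT$. For a bad choice of $e$ this step can fail: on a triangle $\{n_0,a,b\}$ with $\TT=\{(n_0,a),(a,b)\}$ and $\TT^\prime=\{(n_0,b),(b,a)\}$, choosing $e=(n_0,a)$ and $e^\prime=(b,a)$ creates a cycle and disconnects the root. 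Your selection rule --- take a mismatched node $v$ of minimum $\TT^\prime$-depth --- is exactly what guarantees validity: it forces every proper $\TT^\prime$-ancestor of $v$ out of the mismatch set $D$, hence makes the root-to-$u$ paths of the two trees coincide and keeps $u$ outside the $\TT$-subtree of $v$; your telescoping argument for this is sound. You also supply the lower bound (each rotation decreases the mismatch count by at most one), which the paper leaves entirely implicit even though the corollary asserts equality. What the paper's version buys is brevity, by delegating the existence of some reconfiguring sequence to Lemma~\ref{net_rec:thm:spanning_tree_reconf}; what yours buys is a self-contained argument that closes the validity gap and yields an explicit, constructive choice of rotation at every step.
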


\begin{proof}
We prove the Corollary using induction on the number of mismatches.\\ \\
\textit{Base Case:} Let $\TT$ and $\TT^\prime$ have one edge mismatch, that is, there exists exactly one edge $e = (m,u) \in \TT$ such that $e \notin \TT^\prime$ and exactly one edge $e^\prime = (n,v) \in \TT^\prime$ such that $e^\prime \notin \TT$ and all other edges $e^{\prime \prime} \in \TT$ are also in $\TT^\prime$. Then, the edges $e$ and $e^\prime$ must have the same head, $u=v$. Otherwise, node $v$ would not be connected to the root in $\TT$ and node $u$ would not be connected to the root in $\TT^\prime$. Since both $\TT$ and $\TT^\prime$ are trees, $r: e \leftrightarrow e^\prime$ is a valid edge rotation.\\ \\
\textit{Induction Step:}
Let $\TT$ and $\TT^\prime$ have $N > 1$ mismatches. Then there exists at least one edge $e = (u,v) \in \TT$ and $e \notin \TT^\prime$ such that there exists another edge $e^\prime = (u^\prime, v) \in \TT^\prime$, since otherwise $\TT^\prime$ would not be connected. Now we consider the tree $\TT^{\prime \prime} = \TT \setminus \{e \} \cup \{e^\prime\}$. Since by construction $\TT$ and $\TT^{\prime \prime}$ have exactly one mismatch, we need one valid edge-rotation to reconfigure from $\TT$ to $\TT^{\prime \prime}$. Furthermore, $\TT^{\prime \prime}$ and $\TT^\prime$ have $N-1$ mismatches. According to Lemma~\ref{net_rec:thm:spanning_tree_reconf} we can reconfigure from $\TT^{\prime \prime}$ to $\TT^\prime$ using a sequence of valid edge rotations. Using the induction hypothesis, this requires $N-1$ valid edge rotations.
\end{proof}

We conclude our analysis of edge rotations by providing an upper limit for the number of mismatches and thus the number of valid edge rotations. The number of mismatches is bounded from above by
\begin{equation}
    N_{\TT, \TT^\prime} \leq  \lvert \VV \rvert - 1 - \lvert \EE_\mathrm{bridge} \rvert,
\end{equation}
where $\EE_\mathrm{bridge}$ be the set of all edges not part of any cycle in $\GG$. Hence, corollary \ref{cor:number_of_rotations} implies that we must perform at most $\lvert \VV \rvert - 1$ local valid edge rotations. 

\subsection{The Adjacency Graph of Spanning Trees}
\label{sec:edge_rot_graph_Def}

We can also adapt a different viewpoint by defining a graph $\GG_{\mathrm{Sp}} = (\VV_{\mathrm{Sp}}, \EE_{Sp})$ where spanning trees are nodes $\TT \in \VV_{Sp}$ and two spanning trees $\TT$ and $\TT^\prime$ are adjacent if there exists a valid edge rotation $r: \TT \to \TT^\prime$. Then Lemma \ref{lem:valid_edge_rotations} is equivalent to the statement that $\GG_{\mathrm{Sp}}$ is connected, and in this picture Lemma \ref{cor:number_of_rotations} says that the shortest path between any two spanning trees has length $N_{\TT, \TT^\prime}$. In general, many sequences of edge rotations, and thus many paths in $\GG_{\mathrm{Sp}}$, exist between $\TT$ and $\TT^\prime$, see Fig.~\ref{fig:graph_of_spanning_trees} for an elementary example. 

\clearpage
\section{Review: Quantum Optimization}

This section provides an introduction to quantum optimization algorithms, starting with the introduction of some quantum computing conventions in \ref{sup:conventions}. We then define Quantum Annealing in \ref{sec:annealing}. Finally, we then turn to QAOA, in \ref{sec:qaoa}.

\subsection{Conventions}
\label{sup:conventions}

We use the following standard convention for the Pauli Operators
\begin{equation*}
    X = \begin{bmatrix} 0 & 1 \\ 1 & 0 \end{bmatrix}, \, Y = \begin{bmatrix} 0 & -i \\ i & 0 \end{bmatrix}, \, Z = \begin{bmatrix} 1 & 0 \\ 0 & -1 \end{bmatrix}. 
\end{equation*}
We follow the usual convention and define the single qubit computational basis $\{ \ket{0}, \ket{1}\}$ as the eigenbasis of the Pauli-$Z$, that is 
\begin{equation}
    \label{eq:pauli_Z_EV}
    Z \ket{0} = +1 \ket{0}, \, Z \ket{1} = -1 \ket{1}.
\end{equation}
Hence, $\ket{1}$ is the ground state of the Pauli-$Z$. We further note that the eigenbasis of the Pauli-$X$ defined by 
\begin{equation*}
    X \ket{+} = +1 \ket{+}, \, X \ket{-} = -1 \ket{-},
\end{equation*}
can be written in the computational basis as 
\begin{equation*}
    \ket{+} = \frac{1}{\sqrt{2}} (\ket{0} + \ket{1}), \, \ket{-} = \frac{1}{\sqrt{2}} (\ket{0} - \ket{1}). 
\end{equation*}

\subsection{Quantum Annealing}
\label{sec:annealing}

Quantum annealing (QA) \cite{farhi_quantum_2000} is based on the principles of Adiabatic Quantum Computing (AQC) \cite{albash_adiabatic_2018}. In AQC, the solution to a computational problem is encoded in the ground state of a problem Hamiltonian $H_\mathrm{P}$. According to the quantum adiabatic theorem \cite{born_beweis_1928}, if a closed quantum system is initialized in the ground state of a mixer Hamiltonian $H_\mathrm{P}$ and evolves slowly under a time-dependent Hamiltonian that interpolates between $H_\mathrm{M}$ and $H_\mathrm{P}$, it will remain in the instantaneous ground state throughout the evolution.

QA implements this idea as a heuristic optimization method. The system evolves under an annealing schedule
\begin{equation}
\label{eq:annealing_schedule}
    H(t) = A(t) H_\mathrm{M} + B(t) H_\mathrm{P},
\end{equation}
where $A(t)$ and $B(t)$ are monotonic functions with $A(0)=B(T_\mathrm{A})=1,\, A(T_\mathrm{A})=B(0)=0$, and $T_\mathrm{A}$ is the annealing time. To satisfy the adiabatic condition, $T_\mathrm{A}$ must be large compared to the inverse of the minimum energy gap between the ground and first excited states during the evolution \cite{jansen_bounds_2007}, among other conditions. Numerical studies have shown that the minimal gap can close exponentially with the problem size \cite{albash_adiabatic_2018}, leading to exponentially large annealing times. In practice, however, the system is open and thus subject to decoherence, noise, and thermal fluctuations, and the annealing time limited by these influences. Consequently, the system can undergo a transition to an excited state of $H_\mathrm{P}$ during anneal. Then the outcome of an anneal is not a globally optimal solution to the original problem, but an approximation thereof.

On current commercially available quantum annealers, like the ones from the company D-Wave, the problem Hamiltonian $H_\mathrm{P}$ is a programmable Ising Hamiltonian of the form
\begin{equation}
\label{eq:Ising_hamiltonian}
H_\mathrm{P} = \sum_{n=1}^N \sum_{\substack{m = 1 \\ m \neq n}}^N J_{n,m} Z_n Z_m + \sum_{n=1}^N h_n Z_n,
\end{equation}
where $Z_n$ denotes the Pauli-$Z$ operator acting on qubit $n$, and $J_{n,m}$, $h_n$ are programmable real coefficients. This Hamiltonian is diagonal in the computational basis --i.e., the joint eigenbasis of all $Z_n$-- and each eigenstate corresponds to a bitstring $s=(s_1,\ldots,s_N) \in \{+1, -1\}^N$. The energy associated with each configuration $s$ reflects the objective value of the optimization problem. 

This structure makes D-Wave's QA implementation particularly suited for solving Quadratic Unconstrained Binary Optimization (QUBO) problems,
\begin{align}
\begin{split}
    &\min_{\{x_n \in \{0,1\}\}} Q \\
    \label{eq:par:QUBO_general}
    &Q = \sum_{n,m = 1}^{N} x_n Q_{n,m} x_m,    
\end{split}
\end{align}
which can be mapped to an classical Ising model via an affine change in variables $s_n = 1 - 2 x_n$, from which the Hamiltonian $H_\mathrm{P}$ can be inferred by replacing the spin variables $s_n$ with $Z_n$\footnote{Since $x_n = 0$ gets mapped to $s_n = 1$ and $x_n = 1$ gets mapped to $s_n = -1$, this convention has the advantage over the commonly used map $s_n = 2x_n - 1$ that computational basis states $\ket{011 \ldots}$ map directly to bit strings $011\ldots$, cf. \ref{eq:pauli_Z_EV} of the QUBO Problem.}. 
The QUBO cost function then becomes equivalent to the energy landscape defined by $H_\mathrm{P}$, with the ground state(s) of $H_\mathrm{P}$ encoding the optimal solution(s).

On the other hand, the Mixer Hamiltonian is fixed as a transverse field Mixer \cite{andriyash2016, lanting_experimental_2017}
\begin{equation}
\label{eq:transverse_field_mixer}
H_\mathrm{M} = - \sum_n X_n,
\end{equation}
where $X_n$ is the Pauli-$X$ operator on qubit $n$. This Hamiltonian is not diagonal in the computational basis. Instead, its ground state $\ket{+}^{\otimes^N}$, and thus the initial state, is the uniform superposition of all $2^N$ computational basis states
\begin{align*}
    \ket{\psi(0)} &= \frac{1}{\sqrt{2}^N} \sum_s \ket{s} \\
    & = \frac{1}{\sqrt{2}^N} \left( \ket{0\ldots00} + \ket{0\ldots01} + \ldots + \ket{1\ldots11}  \right). 
\end{align*}

Quantum annealing can also be understood from a different perspective: during anneal, $H_\mathrm{M}$ causes quantum tunneling between configurations $s$ in analogy to thermal fluctuations in simulated annealing \cite{kadowaki_quantum_1998}.  Over time, the influence of $H_\mathrm{M}$ is reduced while $H_\mathrm{P}$ is increased, ideally driving the system toward a low-energy (ideally ground) state of $H_\mathrm{P}$. 

For a more detailed introduction to QA and a discussion of industry applications, we refer to \cite{yarkoni2022quantum}. 

\subsection{Quantum Alternating Operating Ansatz}
\label{sec:qaoa}

QA and AQC are inherently analog models of quantum computation, where the evolution of the quantum state is governed continuously by a time-dependent Hamiltonian. In contrast, the circuit model of quantum computation is digital, relying on discrete sequences of quantum gates, that is, unitary operators. To simulate analog processes like adiabatic evolution on a digital quantum computer, the continuous time evolution must be discretized.

The time evolution of a quantum system under a time-varying Hamiltonian is described by the Schrödinger Equation, a first-order differential equation. The Schrödinger equation is solved by a time-ordered exponential. In order to discretize, one can approximate the unitary $U(0, T_\mathrm{A})$ describing the evolution from $t=0$ to $t=T_\mathrm{A}$ by slicing the interval $[0, T_\mathrm{A}]$ into $N$-small intervals of duration $\delta t = T_\mathrm{A} / N$ as 
\begin{equation}
    U(0, T_\mathrm{A}) = \prod_{k = 0}^{N-1} e^{- i H(t_k) \delta t} + \mathcal{O}(\delta t^2)
\end{equation}
with $t_k = k \delta t$. In the limit of $N\to \infty$ we recover the exact time-ordered exponential, establishing that AQC is equivalent to the circuit model in the limit $T_\mathrm{A} \to \infty$ \cite{aharonov2008adiabatic}. 

If the Hamiltonian at time $t$ is composed of two non-commuting parts, as in typical quantum annealing protocols (\ref{eq:annealing_schedule}), we can further apply a first-order Trotter decomposition to approximate each time step by a sequence of simpler unitaries
\begin{equation}
e^{- i H(t_k) \delta t} = e^{- i B(t_k) H_\mathrm{P} \delta t}  e^{- i A(t_k) H_\mathrm{M} \delta t} + \mathcal{O}(\delta t^2).
\end{equation}
Thus, the full annealing schedule can be approximated as a discrete sequence of gates
\begin{equation}
\label{eq:schedule_qaoa}
U(0, T_\mathrm{A}) \approx \prod_{k = 0}^{N-1} e^{- i B(t_k) H_\mathrm{P} \delta t} e^{- i A(t_k) H_\mathrm{M} \delta t}, 
\end{equation}
which allows the continuous adiabatic process to be implemented or simulated on gate-based quantum hardware. 

In light of this observation, the Quantum Approximate Optimization Algorithm (QAOA) \cite{farhi2014quantum} can be seen as a digitized version of adiabatic quantum optimization. In its original formulation, QAOA considers a finite number $N$ of discrete layers, replacing the continuous annealing schedule with variational parameters $\vec{\beta}=(\beta_1,…,\beta_N)$ and $\vec{\gamma}=(\gamma_1,…,\gamma_N)$,
Each layer applies a problem unitary $U_\mathrm{P}(\gamma) = e^{-i \gamma_n H_\mathrm{P}}$ followed by a mixing unitary $U_\mathrm{M}(\beta) = e^{-i \beta_n H_\mathrm{M}}$, using the same Hamiltonians $H_\mathrm{P}$ and $H_\mathrm{M}$ as in D-Wave’s analog quantum annealing framework. QAOA is a hybrid algorithm; the angles $\vec{\beta}$ and $\vec{\gamma}$ are optimized in an outer classical loop. That is, if the final state is denoted by $\ket{\psi(\vec{\beta}, \vec{\gamma})}$ we seek to minimize the expectation value
\begin{equation*}
    \braket{\psi(\vec{\beta}, \vec{\gamma}) | H_\mathrm{P} | \psi(\vec{\beta}, \vec{\gamma} )}
\end{equation*}
Furthermore, the initial state $\ket{\psi(0)}$ can be easily prepared by a Hadamard transform.

Since the unitaries $U_\mathrm{M}$ and $U_\mathrm{P}$ can be implemented using shallow quantum circuits and improvements in the approximation error have been analytically guaranteed for $N=1$ \cite{farhi2014quantum}, QAOA has been considered as a promising candidate for NISQ-devices \cite{zhou2020quantum, willsch2020benchmarking}. However, for $N>0$ the energy landscape becomes more complex, and QAOA is known to be prone to converging into suboptimal local minima \cite{willsch2020benchmarking}. Hence, initialization of angles $\vec{\beta}$ and $\vec{\gamma}$ determines the performance of QAOA. Numerical experiments show that initializing the angles according to a linear annealing schedule avoids sampling suboptimal local minima frequently \cite{sack2021quantum}. 

While the original QAOA used a transverse-field Mixer (\ref{eq:transverse_field_mixer}), later work introduced problem-aware mixers Hamiltonians that incorporate hard constraints directly \cite{hen2016quantum, hen2016driver}. That is, the time evolution of the Mixer Hamiltonian preserves the feasible space. If the circuit is initialized in any superposition of feasible states, only feasible states are sampled on a fully fault-tolerant machine. However, to be able to apply QAOA with a discretized annealing schedule, thus without freely optimizing the angles $\vec{\beta}$ and $\vec{\gamma}$, the initial state has to be the ground state of the Mixer Hamiltonian in the feasible space. This idea was then formalized in the Quantum Alternating Operator Ansatz (also abbreviated QAOA) \cite{hadfield_quantum_2017, hadfield_quantum_2019}, which extends the original framework by allowing general unitaries as mixers $U_\mathrm{M}(\beta)$. These Mixers need to preserve the feasible subspace defined by the problem’s constraints and need not commute with the cost unitaries $[U_\mathrm{M}(\beta), e^{-i \gamma H_\mathrm{P}}] \neq 0$. 
This generalization is crucial for many real-world combinatorial optimization problems, where constraints (e.g., scheduling, routing, matching) are nontrivial and can not be mapped to Ising penalty terms directly. However, the design of feasibility-preserving mixers for real-world problems becomes significantly more complex \cite{stollenwerk_toward_2020}.

A critical challenge in the context of problem-aware Mixers is that quantum errors can cause the system to leave the feasible subspace. Once an error occurs, the QAOA evolution does not offer a mechanism to return to the feasible space. This sensitivity to errors makes error mitigation and correction particularly important for constrained QAOA variants. Fortunately, for certain classes of mixers, symmetry-based error correction schemes can be constructed to correct such deviations \cite{streif2021errorqaoa}. These schemes exploit the underlying algebraic or combinatorial symmetries of the problem and require fewer resources than standard error correction codes. 

Beyond the design of feasibility-preserving mixers, another crucial aspect of gate-based quantum computing is the ability to handle higher-order cost Hamiltonians directly. On gate-based architectures, there are two strategies to deal with higher-order cost functions (or penalty terms). First, quadratization refers to the reduction to a quadratic Hamiltonian~\cite{dattani2019quadratization}, generally requiring the introduction of auxiliary binary variables and additional penalty parameters $\lambda$. Thus, quadratization increases the search space and the number of interactions, increasing the gate complexity of the Hamiltonian simulation. In particular, for a penalty term with $x$ quartic expressions in the binary variables, one needs to introduce at most $x$ auxiliary binary variables \cite{dattani2019quarticreduction}. Second, the higher-order terms can be simulated directly, which leads to an increase in gate complexity, cf. \cite{dragoi2025approx}.

\newpage
\section{Mixer Construction and Resource Estimation}

In this section, we construct the partial mixers $U^\text{PM}_{r: e \leftrightarrow e^\prime}(\beta)$ and decompose them into elementary single- and two-qubit gates, thereby providing a proof of Theorem~\ref{thm:partial_ mixer} of the main paper. 

Our design relies on a novel circuit that performs general (partial) mixing between two feasible states $\ket{\phi_A}$ and $\ket{\phi_B}$ whenever the transitions $\ket{\phi_A} \to \ket{\phi_B}$ and $\ket{\phi_B} \to \ket{\phi_A}$ can be implemented by the same sequence of unitaries, as formalized in~\ref{sec:syncrot}. 
We then build $U^\text{PM}_{r: e \leftrightarrow e^\prime}(\beta)$ in~\ref{sec:partial_mixers} by combining this general construction with a controlled operation on ancillary qubits to verify the validity of the edge swap $r: e \leftrightarrow e^\prime$ and update the relevant register qubits $\ket{y_{e,n}}$ according to the graph topology. 
In~\ref{sec:ressources_pm}, we decompose the circuit into arbitrary single-qubit and CNOT gates to estimate resource requirements, and in~\ref{sec:full_mixer} we discuss the full mixer designs, including a Qiskit implementation for a simple three-node example in~\ref{sec:simple_example_mixer}.

\subsection{Mixing between two feasible states: A general perspective }
\label{sec:syncrot}

In its general form, our partial Mixer needs to implement the rotation from one feasible state $\ket{\phi_A}$ to another feasible state $\ket{\phi_B}$ in the plane spanned by both states. That is, its action of the rotation can be described as a parameterized unitary acting as
\begin{equation}
    \label{eq:gen_transition}
    U(\beta) \ket{\phi_A} := \cos(\frac{\beta}{2}) \ket{\phi_A} + i \sin(\frac{\beta}{2})\ket{\phi_B}.
\end{equation}
Under certain conditions, such rotations can be implemented indirectly by applying a phase gate
\begin{equation*}
    P(\beta) = \begin{bmatrix} 1 & 0 \\ 0 & e^{i \beta} \end{bmatrix}
\end{equation*}
to an ancillary qubit $\ket{anc}$, initialized as $\ket{0}$, and leveraging phase kickback to transfer the effect to the target system, as the following lemma shows.

\begin{lem}
    \label{lem:synchronized_rotations}
    Let $(U_1, \ldots, U_N)$ be a sequence of gates such that 
    \begin{align}
    \begin{split}
    \label{eq:conditions_synchronized_rotations}
    & \ket{\phi_B} = U_N \ldots U_1 \ket{\phi_A}, \\
    & \ket{\phi_A} = U_N \ldots U_1 \ket{\phi_B}.
    \end{split}
    \end{align}
    Then the state after applying the following circuit 
    \begin{equation*}
    \begin{quantikz}[scale=0.50, classical gap=0.1cm]
       \ket{0} \, & \gate{H} & \ctrl{1} & \ \ldots\ & \ctrl{1} & \gate{H} 
        & \gate{P(\beta)}\slice{$\ket{\psi_1}$} 
        & \gate{H} & \ctrl{1} & \ \ldots\ & \ctrl{1} & \gate{H} & \\
       \ket{\phi_A} \, \setwiretype{b} & \qw  & \gate{U_1} & \ \ldots\ & \gate{U_N}      & \qw     
        & \qw      & \qw      & \gate{U_1}  & \ \ldots\   & \gate{U_N} & \qw & \qw
    \end{quantikz}
    \end{equation*}
    is $e^{i\beta/2}U(\beta)\ket{0}\ket{\phi_A}$, which differs from~\eqref{eq:gen_transition} only by a global phase $e^{i \beta/2}$. That is, the circuit produces the same measurement statistics as~\eqref{eq:gen_transition}.
\end{lem}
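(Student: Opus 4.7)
The plan is to propagate the initial state $\ket{0}\ket{\phi_A}$ through the circuit layer by layer, using the involution hypothesis~\eqref{eq:conditions_synchronized_rotations} at two critical points: first to create the entanglement on which the phase kickback relies, and then to disentangle the ancilla again.

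First I would compute the state through the opening Hadamard, the first controlled block, the second Hadamard and the phase gate. The Hadamard puts the ancilla in uniform superposition, the controlled $U_N\cdots U_1$ produces $\frac{1}{\sqrt{2}}(\ket{0}\ket{\phi_A}+\ket{1}\ket{\phi_B})$ by invoking the first identity in~\eqref{eq:conditions_synchronized_rotations}, and the Hadamard followed by $P(\beta)$ then yields
\begin{equation*}
\ket{\psi_1} = \tfrac{1}{2}\bigl[\ket{0}(\ket{\phi_A}+\ket{\phi_B}) + e^{i\beta}\ket{1}(\ket{\phi_A}-\ket{\phi_B})\bigr].
\end{equation*}
The relative phase $e^{i\beta}$ is now parked on the antisymmetric combination $\ket{\phi_A}-\ket{\phi_B}$, as required for kickback.

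Next I would push the state through the remaining Hadamard, controlled-$U$ block and final Hadamard. After the Hadamard each computational branch of the ancilla carries a different mixture of $\ket{\phi_A}$ and $\ket{\phi_B}$, so the subsequent controlled block has to exchange $\ket{\phi_A}\leftrightarrow\ket{\phi_B}$ on the $\ket{1}$ branch; this is precisely where both identities of~\eqref{eq:conditions_synchronized_rotations} are needed simultaneously, one per direction. A short calculation then shows that the two branches become identical on the target, so the ancilla factorises as $\tfrac{1}{\sqrt{2}}(\ket{0}+\ket{1})$ and the final Hadamard maps it back to $\ket{0}$.

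Finally I would read off the target amplitudes, which are $\tfrac{1}{2}(1+e^{i\beta})$ on $\ket{\phi_A}$ and $\tfrac{1}{2}(1-e^{i\beta})$ on $\ket{\phi_B}$, and reduce them via the half-angle identities $1+e^{i\beta}=2e^{i\beta/2}\cos(\beta/2)$ and $1-e^{i\beta}=-2ie^{i\beta/2}\sin(\beta/2)$ to obtain $e^{i\beta/2}\bigl[\cos(\beta/2)\ket{\phi_A}-i\sin(\beta/2)\ket{\phi_B}\bigr]$. Up to the stated global phase $e^{i\beta/2}$ (and a sign on the $\ket{\phi_B}$ term that merely reflects the sign convention used in~\eqref{eq:gen_transition} and does not affect computational basis measurement probabilities), this matches~\eqref{eq:gen_transition}. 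The calculation itself is routine; the real content of the lemma, and the step that has to be verified carefully, is the disentanglement: without the full involution in~\eqref{eq:conditions_synchronized_rotations}, the second controlled block would fail to swap $\ket{\phi_A}$ and $\ket{\phi_B}$ exactly, the two ancilla branches would carry distinct target states, and the final Hadamard would leave unwanted residual amplitude on $\ket{1}$, spoiling the kickback.
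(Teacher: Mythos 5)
Your proposal is correct and follows essentially the same route as the paper's proof: propagate $\ket{0}\ket{\phi_A}$ layer by layer, obtain the same intermediate state $\ket{\psi_1}$, use both identities of~\eqref{eq:conditions_synchronized_rotations} so that the second controlled block makes the two ancilla branches identical on the target, and conclude that the ancilla disentangles to $\ket{0}$ with amplitudes $\tfrac12(1\pm e^{i\beta})$. Your half-angle reduction giving $-i\sin(\beta/2)$ on $\ket{\phi_B}$ is in fact the algebraically correct sign (the paper's final line has a small sign slip), and as you note this does not affect the measurement statistics claimed by the lemma.
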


\begin{proof}

Let's briefly analyze this circuit. After the phase gate, the intermediate state is given by 
\begin{equation*}
    \ket{\psi_1} = \frac{1}{2} \left[ \ket{0} (\ket{\phi_A} + \ket{\phi_B}) + e^{i\beta} \ket{1} (\ket{\phi_A} - \ket{\phi_B}) \right].
\end{equation*}
Then applying the Hadamard gate to the first qubit and rearranging terms yields 
\begin{equation*}
    \frac{1}{2 \sqrt{2}} \left[ \ket{0} \left( (1 +e^{i \beta}) \ket{\phi_A} + (1 -e^{i \beta}) \ket{\phi_B}  \right)   + \ket{1} \left( (1 -e^{i \beta}) \ket{\phi_A} + (1 +e^{i \beta}) \ket{\phi_B}  \right)\right].
\end{equation*}
Using conditions~(\ref{eq:conditions_synchronized_rotations}) we see that the final before the last Hadamard gate is
\begin{equation*}
    \frac{1}{2 \sqrt{2}} \left[ \ket{0} \left( (1 +e^{i \beta}) \ket{\phi_A} + (1 -e^{i \beta}) \ket{\phi_B}  \right)   + \ket{1} \left( (1 -e^{i \beta}) \ket{\phi_B} + (1 +e^{i \beta}) \ket{\phi_A}  \right)\right].
\end{equation*}
After applying the last Hadamard, one obtains
\begin{equation*}
    \frac{1}{2} \ket{0} \left[ (1 +e^{i \beta}) \ket{\phi_A} + (1 -e^{i \beta}) \ket{\phi_B}  \right], 
\end{equation*}
which shows that the ancillary qubit is not entangled with the other qubits and is uncomputed at the end. Hence, for the register initialized as $\ket{\phi_A}$, the circuit implements the rotation operation~(\ref{eq:gen_transition}), with a global phase $e^{i\beta}$, which can be seen by multiplying out such phase 
\begin{align*}
    \frac{1}{2}(1 +e^{i \beta}) \ket{\phi_A} +  \frac{1}{2} (1 -e^{i \beta}) \ket{\phi_B} &= e^{i \frac{\beta}{2}} \left( \frac{e^{i\frac{\beta}{2}} + e^{-i\frac{\beta}{2}}}{2} \ket{\phi_A} + i \frac{e^{i\frac{\beta}{2}} - e^{-i\frac{\beta}{2}}}{2i} \ket{\phi_B}\right) \\
    &= e^{i \frac{\beta}{2}} \left( \cos(\frac{\beta}{2}) \ket{\phi_A} + i \sin(\frac{\beta}{2}) \ket{\phi_B}\right). 
\end{align*}
\end{proof}

We get the following corollary:
\begin{cor}
    Let $(U_1, \ldots, U_N)$ be a sequence of Hermitian ($U_i = U_i^\dagger$ $\forall i$) mutually commuting gates, such that $\ket{\phi_B} = U_N \ldots U_1 \ket{\phi_A}$. Then the conditions~(\ref{eq:conditions_synchronized_rotations}) are satisfied and the rotation can be implemented as in Lemma \ref{lem:synchronized_rotations}. \label{cor:syncrhonized_rotations} 
\end{cor}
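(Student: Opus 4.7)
The plan is to verify the second of the two conditions in~\eqref{eq:conditions_synchronized_rotations}, since the first is given as a hypothesis. The strategy is to observe that under the stated assumptions, the composite operator $V := U_N \cdots U_1$ is an involution, so that applying it to $\ket{\phi_B}$ inverts the transition and recovers $\ket{\phi_A}$.

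First I would recall that every quantum gate is by definition unitary, so $U_i U_i^\dagger = I$ for each $i$. Combined with the assumed Hermiticity $U_i = U_i^\dagger$, this immediately yields $U_i^2 = I$, i.e., each $U_i$ is self-inverse. Next I would use mutual commutativity to compute $V^2 = (U_N \cdots U_1)(U_N \cdots U_1)$: repeatedly swapping adjacent factors (which is legal since the $U_i$ pairwise commute) reorders the product to $U_N^2 \cdots U_1^2$, and by the previous step this equals $I$. Hence $V$ is its own inverse.

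Applying $V$ to both sides of the hypothesized identity $\ket{\phi_B} = V \ket{\phi_A}$ then gives $V \ket{\phi_B} = V^2 \ket{\phi_A} = \ket{\phi_A}$, which is precisely the second line of~\eqref{eq:conditions_synchronized_rotations}. The conclusion of the corollary then follows by direct invocation of Lemma~\ref{lem:synchronized_rotations}. There is no substantive obstacle here; the argument is just the observation that a Hermitian unitary is an involution, and that mutual commutativity lifts this involutivity from the individual $U_i$ to their product.
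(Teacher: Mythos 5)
Your argument is correct and is precisely the reasoning the paper leaves implicit: it states the corollary without proof, relying on exactly the observation that a Hermitian unitary is an involution and that mutual commutativity lets one rearrange $(U_N\cdots U_1)^2$ into $U_N^2\cdots U_1^2 = I$. Your write-up simply makes those omitted steps explicit, so there is nothing to add or correct.
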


This occurs for example, if the $U_i$ are Pauli Operators or SWAP gates acting on different qubits. A simple example for such a sequence can be found in Fig. \ref{sup:fig:synchronized_rotation_and_swap}, where the sequence $(X_1, X_2, \text{SWAP}(3,4))$ maps the state $\ket{0010}$ to $\ket{1101}$ and vice-versa.
\begin{figure}[t]
    \centering
    \includegraphics[width=\columnwidth]{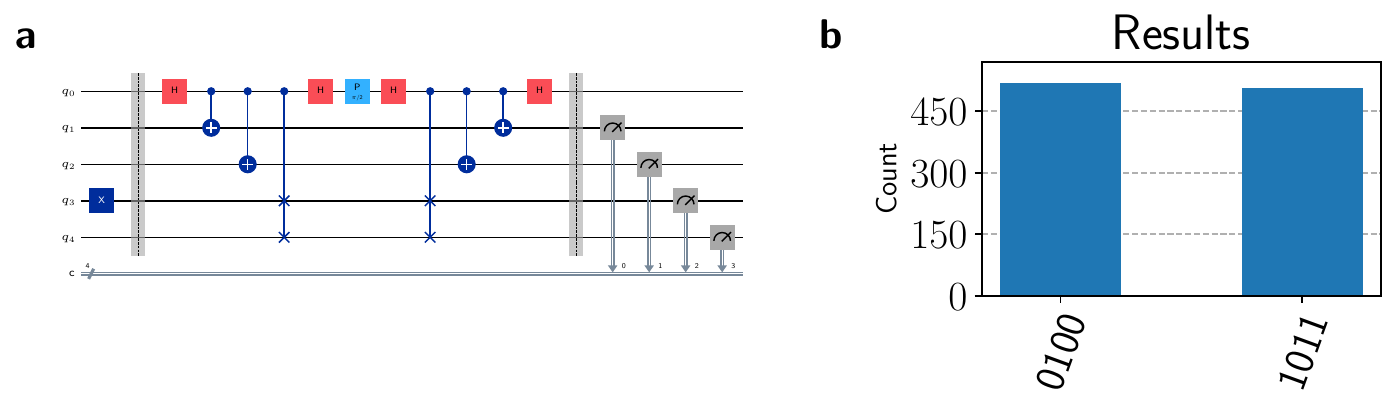}
    \caption{Simple example for a parameterized rotation in the $\ket{0010}-\ket{1101}$ plane. The rotation can implemented using Lemma \ref{lem:synchronized_rotations} with the sequence $(X_1, X_2, \text{SWAP}(3,4))$. \textbf{a:} Implementation in Qiskit. The ancillary qubit $\ket{anc}$ is the first qubit $q_0$. The initial state $\ket{0010}$ is prepared by applying a single $X$ gate to the fourth qubit.  \textbf{b:} Histogram of the results when simulating the circuit 1000 times for $\beta = \pi /2$. An ideal quantum computer with no noise is assumed at this stage.}
    \label{sup:fig:synchronized_rotation_and_swap}
\end{figure}

\subsection{Explicit Construction of Partial Mixers $U^\text{PM}_{r: e \leftrightarrow e^\prime}(\beta)$}
\label{sec:partial_mixers}

\begin{figure}[t!]
    \centering
   	\includegraphics[width=0.5\textwidth]{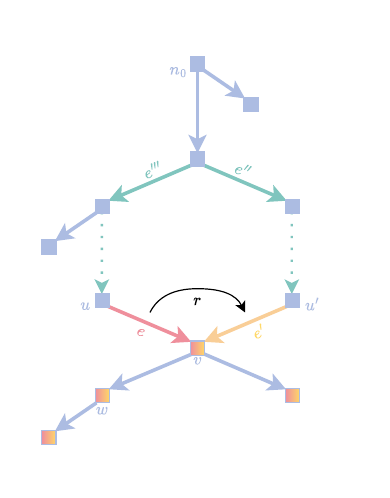}
    \caption{Depiction of a valid edge rotation $r:e=(u,v)\mapsto e^\prime=(u^\prime,v^\prime)$, highlighting all variables that are relevant. In red, the current active edge $e$, which is rotated maintaining the head $v$, to the edge $e^\prime$, in yellow. The gradient colored nodes, such as node $w$, represent all nodes downward of $e$. Finally, the green edges, such as $e^{\prime\prime}$, $e^{\prime\prime\prime}$, represent the edges laying in the unique path between $u$ and $u^\prime$.}
    \label{sup:fig:pm_update_var}
\end{figure}

The goal of this section is to construct a unitary that mixes between the two states $\ket{\bf{y}}$ and $\ket{\bf{y}^\prime}$ that encode the two trees $\TT$ and $\TT^\prime$ before and after a valid local edge rotation. That is, we construct a mixing operation that preserves the feasible space $\mathrm{Sp}(\GG, n_0)$. 

Let $\TT$ be a spanning tree of $\GG$ with root $n_0$ and the natural orientation implied by the root. As stated in the main text, such tree is represented with a set of $|\EE|(|\VV|-1)$ binary variables, $\{y_{e,v}\}_{e\in\EE, v\in \VV\setminus\{n_0\}}$, which only evaluate to 1 if the node $v$ is downwards of the edge $e$, and this edge is present in $\TT$, that is,
\begin{align}
\label{sup:eq:binary_variables}
    y_{e,n} = \begin{cases}
        1 & \text{if } n \neq n_0 \text{ is downward of } e \in \TT,  \\
        0 & \text{else.}
    \end{cases}
\end{align}
On the quantum computer, each binary variable is encoded in a qubit. By considering a collection of such qubits, we represent the tree $\TT$ as a quantum state, denoted by $\ket{\bf{y}}=\bigotimes_{e\in \EE,n\in \VV\setminus\{n_0\}}\ket{y_{e,n}}$.

We look at two edges $e,e^\prime\in \GG$ with their respective tails $u$, $u^\prime$, and a shared head $v$, that is, $e=(u,v)$ and $e^\prime = (u^\prime, v)$ (cf. Fig.~\ref{sup:fig:pm_update_var}), and assume that $e \in \TT$. This implies that $e^\prime \notin \TT$, as otherwise $\TT$ would not be a tree. Consider the \emph{edge rotation} $r$ between the two edges, described previously as the map
\begin{align*}
    r:\TT &\to \TT^\prime:=\TT+e^\prime-e\\
    e = (u,v) &\mapsto e^\prime = (u^\prime, v), 
\end{align*}
which we called \emph{valid} if $u^\prime$ is not downward of the edge $e$ in $\TT$ (cf.  Lemma~\ref{lem:valid_edge_rotations}). Note that $\TT^\prime$ will only be a tree if the rotation is valid. 

We aim to define a quantum operation that is able to rotate between the state $\ket{\bf{y}}$ and $\ket{\bf{y}^\prime}$, representing $\TT$ and $\TT^\prime$, before and after the valid edge rotation, respectively.
Based on the previous discussion on mixing between two states in \ref{sec:syncrot}, we first need to design a sequence of unitaries $U_1, U_2, \ldots$ that transforms $\ket{\bf{y}}$ to $\ket{\bf{y}^\prime}$, and vice-versa, and that oblige the two conditions~(\ref{eq:conditions_synchronized_rotations}). 
Note that these conditions do not clash with the classical move in the following sense: if $e \mapsto e^\prime$ is valid for the spanning tree $\TT$, then $e^\prime \mapsto e$ is valid for the spanning tree $\TT^\prime = \TT+e^\prime-e$, hence we call the two rotations $e \mapsto e^\prime$ and $e^\prime \mapsto e$ \emph{reciprocal}. Consequently we incorporate both edge rotations into one quantum operation $U_{r: e \leftrightarrow e^\prime} = U_1 U_2 \cdots$, which we call an \emph{edge swap}. 

To this end, we observe that performing an edge swap (or edge rotation) affects a large number of the variables $y_{e,n}$. We can identify two distinct contributions as we will explain in the following. Accordingly, we can decompose the overall operation $U_{r: e \leftrightarrow e^\prime}$ into two corresponding sub-operations, that is 
\begin{equation*}
    U_{r: e \leftrightarrow e^\prime} := U^{\text{swap}}_{r: e \leftrightarrow e^\prime} U^{\text{path}}_{r: e \leftrightarrow e^\prime}. 
\end{equation*}

First, all nodes originally downward of $e$, like the node $w$ and all other gradient colored nodes in Fig.~\ref{sup:fig:pm_update_var}, become downward of $e^\prime$ after the rotation $r: \TT \to \TT^\prime$. This means that for such nodes, before the rotation, we have $y_{e,w}=1$ and $y_{e^\prime,w}=0$ and after the rotation, we have $y_{e,w}=0$ and $y_{e^\prime,w}=1$. The reverse is true for the reciprocal rotation. Hence, updating these variables for the combined edge swap can be achieved by swapping or interchanging their values. All actions are summarized in the unitary $U^{\text{swap}}_{r: e \leftrightarrow e^\prime}$ with details of the implementation given below. 

Second, also for the edges $e^{\prime \prime}$ and $e^{\prime \prime \prime}$ on the path from tail $u$ to tail $u^\prime$ (green in Fig.~\ref{sup:fig:pm_update_var}) the downward relations to the nodes $v, w, \ldots$ (gradient colored nodes) are affected. For example, for the edge $e^{\prime \prime \prime}$, the nodes $v, w, \ldots$ won't be downward anymore after the edge rotation $e\mapsto e^\prime$ and thus the variables $y_{e^{\prime \prime \prime}, w}$ need to be updated as $1 \to 0$. Similarly, for the edge $e^{\prime \prime}$ the nodes $v, w, \dots$ become downward, and hence $y_{e^{\prime \prime}, w}$ are updated as $0 \to 1$. Since for the reciprocal rotation $e^\prime \mapsto e$ we observe similar updating rules, we conclude that updating the variables for the edges on the path from $u$ to $u^\prime$ can be achieved by simple negations, which we all incorporate in the unitary $U^{\text{path}}_{r: e \leftrightarrow e^\prime}$.

Note that for all other edges (blue in Fig.~\ref{sup:fig:pm_update_var}), the corresponding variables remain unaffected by the edge swap. This is because the nodes $v, w, \ldots$ are either not located below these edges in either $\TT$ or $\TT^\prime$, or they are below them in both trees, as in the case of edges shared by the paths from the root $n_0$ to the nodes $u$ and $u^{\prime}$.

Together, $U_1=U^{\text{path}}_{r: e \leftrightarrow e^\prime}$ and $U_2=U^{\text{swap}}_{r: e \leftrightarrow e^\prime}$ comprise a set of unitaries that establish the transitions $\ket{\bf{y}}\to \ket{\bf{y}^\prime}$ and $\ket{\bf{y}^\prime}\to \ket{\bf{y}}$ if and only if the edge swap is valid. Then and only then, Lemma~\ref{lem:synchronized_rotations} can be applied to mix between these states. Hence, one crucial step is to check the validity of the edge swap and apply the transition $U_1=U^{\text{path}}_{r: e \leftrightarrow e^\prime}$ and $U_2=U^{\text{swap}}_{r: e \leftrightarrow e^\prime}$ only in the case the validity is evaluated as \emph{true}. If, for a given configuration $\ket{\bf{y}}$, an edge swap is not valid, the operation needs to become the identity. Consequently, we can not apply Lemma~\ref{lem:synchronized_rotations} immediately, but need to add one extra layer. 

Recall that the conditions for a valid edge swap are
\begin{enumerate}
    \item one of the edges $e, e^\prime$ is active, 
    \item both edge rotations in the edge swap are valid, that is, they do not create cycles.
\end{enumerate}

The first condition will be directly incorporated in the design of the unitaries $U^{\text{path}}_{r: e \leftrightarrow e^\prime}$ and $U^{\text{swap}}_{r: e \leftrightarrow e^\prime}$, which will become the identity if both edges are not active (cf. the implementation below). Additionally, note that both edges cannot be active simultaneously, as for any tree, no two edges can point to the same node.

The second condition needs to be inferred from the binary variables $y_{e,n}$. According to Lemma~\ref{lem:valid_edge_rotations}, when rotating from $e$ to $e^\prime$, the node $u^\prime$ must not be downward of the edge $e$. Similarly, for the reciprocal rotation, the node $u$ must not be downward of the edge $e^\prime$. That is, we must have that $y_{e^\prime, u} = 0$ and $y_{e, u^\prime} = 0$. 
Hence, for configurations $\ket{\bf{y}}$ where Boolean function 
\begin{equation}
    f_{r: e \leftrightarrow e^\prime} = \neg y_{e^\prime, u} \land \neg y_{e, u^\prime}
\end{equation}
evaluates to \emph{true}, the edge swap unitary can be applied. The function can be implemented as a unitary $U_{f_{r: e \leftrightarrow e^\prime}}$, acting as 
$U_{f_{r: e \leftrightarrow e^\prime}} \ket{\mathbf{y}}\ket{0} = \ket{\mathbf{y}}\ket{f_r}$, 
using a zero-controlled Toffoli gate with an ancilla initialized in $\ket{0}$. 
The ancilla stores the function outcome and serves as the control qubit $\ket{f_r} $ for the following operations.

Finally, combining Lemma~\ref{lem:synchronized_rotations} with the validity checking operation, the \emph{partial controlled edge swap mixer} $U^{\text{PM}}_{r: e \leftrightarrow e^\prime}(\beta)$ can be schematically represented as
\begin{equation}
\begin{quantikz}[scale=0.5, classical gap=0.1cm]
    \lstick[5]{$\ket{\bf{y}}$} \quad \quad & \wireoverride{n} & \setwiretype{b} & \gategroup[7,steps=2,style={dashed,rounded
    corners,fill=blue!20, inner
    xsep=2pt},background,label style={label
    position=above,anchor=north,yshift=0.45cm}]{{$U_{f_{r: e \leftrightarrow e^\prime}}$}}  & &  \gate[6]{U^\text{path}_{r: e \leftrightarrow e^\prime}}\gategroup[6,steps=2,style={dashed,rounded
    corners,fill=red!20, inner
    xsep=2pt},background,label style={label
    position=above,anchor=north,yshift=0.45cm}]{$U_{r:e \leftrightarrow e^\prime}$} & \gate[5]{U^\text{swap}_{r: e \leftrightarrow e^\prime}} & & & & \gate[5]{U^\text{swap}_{r: e \leftrightarrow e^\prime}} & \gate[6]{U^\text{path}_{r: e \leftrightarrow e^\prime}} & & \\ 
    & \wireoverride{n} \lstick[1]{$\ket{y_{e^\prime, u}} \,$ } & & \ctrl[open]{5} & & & &  & & & & & \ctrl[open]{5} & \\
    & \wireoverride{n} & \setwiretype{b} & & & & & & & & & & &   \\
    &\wireoverride{n} \lstick[1]{$\ket{y_{e, u^\prime}} \, $} & & \ctrl[open]{3} & & & & &  & & & & \ctrl[open]{3} &  \\
    &\wireoverride{n} & \setwiretype{b} & & & & & & & & & & & \\
    &\lstick[1]{$\ket{\text{0}}$}\wireoverride{n}   & \qwbundle{4} & & & & & & & & & & &  \\
    &\lstick[1]{$\ket{0}$ }\wireoverride{n}& & \gate[1]{X} & \ket{f_{r}} & \ctrl{-1} &  \ctrl{-2}& & \ctrl{1} & & \ctrl{-2} &  \ctrl{-1} & \gate[1]{X} & \\
    &\lstick[1]{$\ket{anc}$}\wireoverride{n}& & \gate[1]{H} & & \ctrl{-2} & \ctrl{-3} & \gate[1]{H} & \gate[1]{P(\beta)} & \gate[1]{H} & \ctrl{-2} & \ctrl{-3} & \gate[1]{ H}& 
\end{quantikz}
\label{sup:eq:partialmixer}
\end{equation}
The key difference from~\ref{sec:syncrot} is that the unitaries $U_1$, $U_2$, and the phase gate $P(\beta)$ are all controlled by the qubit $\ket{f_r}$, which encodes whether the swap $r: e \leftrightarrow e^\prime$ is valid. 
In particular, also the phase gate $P(\beta)$, implementing the mixing rotation, must act on $\ket{\text{anc}}$ only when the swap is valid; otherwise, the partial mixer would not reduce to the identity. Moreover, the ancilla can be uncomputed by reapplying the same zero-controlled Toffoli gate (since it is self-inverse). 
This is possible because the states $\ket{y_{e^{\prime},u}}$ and $\ket{y_{e,u^{\prime}}}$ remain unchanged during the variable updates associated with the edge swap $r: e \leftrightarrow e^\prime$. 
We will demonstrate this explicitly in the construction of 
$U^{\text{path}}_{r: e \leftrightarrow e^{\prime}}$ and $U^{\text{swap}}_{r: e \leftrightarrow e^{\prime}}$, whose detailed implementation is discussed in the following.

\subsubsection*{Update downward variables for all edges $e^{\prime \prime}$ on the path between the tails:}

We now provide detailed construction of the unitary $U^{\text{path}}_{r: e \leftrightarrow e^{\prime},}$, which implements the update of the downward variables $y_{e^{\prime \prime}, w}$, for the edges $e^{\prime \prime}, e^{\prime \prime \prime}, \ldots$ (cf.~Fig.~\ref{sup:fig:pm_update_var}) lying on the path $p$ between $u$ and $u^\prime$ (the tails of $e$ and $e^\prime$) and nodes $w$ downwards of either $e$ or $e^\prime$ before the rotation. As established earlier, this can be achieved by negating the affected variables. 

Overall, to realize this unitary, we iterate over all edge-node pairs, marking in one ancillary qubit whether an edge lies on the path and in another ancillary qubit whether a node is affected. A single Toffoli gate, controlled by these two ancillas among others, then applies the required negation.

To determine whether an edge $e^{\prime\prime}$ lies on the path $p$, we note that for edges along this path, exactly one of the nodes $u$ or $u^{\prime}$ is downward: either $u$ is downward and $u^{\prime}$ is not, or vice versa. Thus, we can evaluate the Boolean function 
\begin{equation}
    g_{e^{\prime \prime},  r: e \leftrightarrow e^\prime} = \left( y_{e^{\prime \prime}, u^\prime} \land \neg y_{e^{\prime \prime}, u} \right) \lor \left( \neg y_{e^{\prime \prime}, u^\prime} \land y_{e^{\prime \prime}, u} \right)
\end{equation}
to identify all edges $e^{\prime \prime}$ on the path $p$. The function can be implemented as a unitary with an action 
\begin{equation}
    U_{g_{e^{\prime \prime},  r: e \leftrightarrow e^\prime}} \ket{\bf{y}} \ket{0} \ket{0} \ket{0} = \ket{\bf{y}} \ket{y_{e^{\prime \prime}, u^\prime} \land \neg y_{e^{\prime \prime}, u}} \ket{\neg y_{e^{\prime \prime}, u^\prime} \land y_{e^{\prime \prime}, u}} \ket{g_{e^{\prime \prime}}}
\end{equation}
which in a circuit reads
\begin{align}
\label{eq:circ_u_g_e_prime_prime}
    U_{g_{e^{\prime \prime},  r: e \leftrightarrow e^\prime}} = &\begin{quantikz}[scale=1, classical gap=0.1cm]
             & \setwiretype{b} & & & & &\\ 
             \lstick[1]{$\ket{y_{e^{\prime \prime}, u}} \, \,$ }& & \ctrl[closed]{4}  & \ctrl[open]{5} & & &\\ 
             & \setwiretype{b} & & & & &\\ 
             \lstick[1]{$\ket{y_{e^{\prime \prime}, u^\prime}} \, \,$} & & \ctrl[open]{2} & \ctrl[closed]{3} &  & &\\ 
             & \setwiretype{b} & & & & & \\ 
            \lstick[1]{$\ket{\text{0}}\, \,$} & & \gate[1]{X} &  & \ctrl[closed]{2} & &\\
            \lstick[1]{$\ket{\text{0}}\, \,$}& & & \gate[1]{X} & & \ctrl[closed]{1} & \\
            \lstick[1]{$\ket{\text{0}}\, \,$}& & & & \gate[1]{X} & \gate[1]{X} &  \rstick[1]{$\ket{g_{e^{\prime \prime}}}$}\\
            & & & & & & \\
            \lstick[1]{$\ket{f_r}\, \,$}& & & & & & \\
            \lstick[1]{$\ket{anc}\, \,$}& & & & & & 
    \end{quantikz}. 
\end{align}
The nodes $w$ affected by the rotation are the nodes downward of either $e$ or $e^\prime$, depending on whether $e$ or $e^\prime$ is active at the beginning. For example, in Figure~\ref{sup:fig:pm_update_var} $e$ is active at the beginning and thus the nodes $w$ under question are downward of $e$, not of $e^\prime$. Thus, the nodes $w$ can be identified by evaluating the Boolean function 
\begin{equation}
    h_{w,  r: e \leftrightarrow e^\prime} = y_{e, w} \lor y_{e^{\prime}, w}
\end{equation}
which can be implemented as a unitary with action 
\begin{equation}
	U_{h_{w,  r: e \leftrightarrow e^\prime}}  \ket{\bf{y}} \ket{0} = \ket{\bf{y}} \ket{h_{w}},
\end{equation} 
via the following circuit: 
\begin{align}
\label{eq:circ_u_h_w}
    U_{h_{w,  r: e \leftrightarrow e^\prime}}=&\begin{quantikz}[scale=1, classical gap=0.1cm]
             & \setwiretype{b} & & & \\ 
             \lstick[1]{$\ket{y_{e, w}} \, \,$ }& & \ctrl[closed]{5}  & & \\ 
             & \setwiretype{b} & & & \\ 
             \lstick[1]{$\ket{y_{e^{\prime}, w}} \, \,$} & &  & \ctrl[closed]{3} &   \\ 
             & \setwiretype{b} & & &  \\ 
            & \qwbundle{3} & & &   \\
            \lstick[1]{$\ket{0} \, \,$}& & \gate[1]{X} & \gate[1]{X} &  \rstick[1]{$\ket{h_{w}}$} \\
            \lstick[1]{$\ket{f_r} \, \,$} & & & & \\
            \lstick[1]{$\ket{anc} \, \,$} & & & & \\
    \end{quantikz}.
\end{align}
The full unitary $U^\text{path}_{r: e \leftrightarrow e^\prime} $ to update the downward variables for all edges on the path is then schematically given by 
\begin{equation}
\begin{quantikz}[scale=0.50, classical gap=0.1cm]
    \lstick[3]{$\ket{\bf{y}}$} \setwiretype{b} & \gate[7]{U^\text{path}_{r: e \leftrightarrow e^\prime}} & \\
    && \\
    &\setwiretype{b}& \\
    \lstick[1]{$\ket{0}$}&& \\
    \lstick[1]{$\ket{0}$}&& \\
    \lstick[1]{$\ket{0}$}&& \\
    \lstick[1]{$\ket{0}$}&& \\
    \lstick[1]{$\ket{f_r}$}& \ctrl{-1} & \\
    \lstick[1]{$\ket{anc}$}& \ctrl{-2} & \\
\end{quantikz}= \prod_{e^{\prime \prime} \in \EE \setminus\{e, e^\prime\}} U_{g_{e^{\prime \prime},  r: e \leftrightarrow e^\prime}}
\left( \prod_{w \in \VV \setminus\{u, u^\prime, n_0\}} U_{h_{w,  r: e \leftrightarrow e^\prime}}\cdot 
\begin{quantikz}[scale=0.50, classical gap=0.1cm]
    \setwiretype{b} & & \\
    \lstick[1]{$\ket{y_{e^{\prime \prime},w}}$}&\gate[1]{X}& \\
    &\setwiretype{b}& \\
    && \\
    && \\
    \lstick[1]{$\ket{g_{e^{\prime \prime}}}$}&\ctrl{-4}& \\
    \lstick[1]{$\ket{h_w}$}& \ctrl{-5}& \\
    \lstick[1]{$\ket{f_r}$}& \ctrl{-6} & \\
    \lstick[1]{$\ket{anc}$}& \ctrl{-7} & \\
\end{quantikz}
\cdot U_{h_{w,  r: e \leftrightarrow e^\prime}}
\right) \cdot U_{g_{e^{\prime \prime},  r: e \leftrightarrow e^\prime}}^{-1}
\label{sup:eq:Upath}
\end{equation}
The central 4-fold controlled X gate implements the negation of the variables $y_{e^{\prime\prime},w}$ if $\ket{g_{e^{\prime\prime},}}$, $\ket{h_w}$ and $\ket{f_r}$ are in state $\ket{1}$, that is, according to whether the edge $e^{\prime \prime}$ lies on the path $p$, whether the node $w$ needs to be updated, and whether the edge swap is valid. The control on the ancilla $\ket{anc}$ is needed for the mixing rotation, cf. \ref{sec:syncrot}. Before the negation by the 4-fold controlled X gate, computation of the ancillary qubits $\ket{g_{e^{\prime\prime}}}$ and $\ket{h_w}$ is done using the circuits \eqref{eq:circ_u_g_e_prime_prime} and \eqref{eq:circ_u_h_w}. Importantly, uncomputation of $\ket{g_{e^{\prime\prime},}}$ and $\ket{h_w}$ can be achieved by applying the inverse circuits since none of the qubits involved in the two unitaries is $\ket{y_{e^{\prime\prime},w}}$, the one modified by the X gate. As $U_{h_{{w,  r: e \leftrightarrow e^\prime}}}$ is its own inverse, we can just apply it again, for $U_{g_{e^{\prime \prime},  r: e \leftrightarrow e^\prime}}$ we just need to apply the gates in the inverse order. Uncomputation allows us to reuse the ancillary qubits. 

Observe that if no edge $e$ or $e^\prime$ is active, then $y_{e,n}$ and $y_{e^\prime,n}$ are zero for any node $n$, thus $h_w$ evaluates always to zero and as a consequence the 4-fold controlled X gate is never applied, so $U^\text{path}_{r: e \leftrightarrow e^\prime}$ is the identity as required. 

Finally, it is worth noting that under some conditions $U^{\text{path}}_{r: e \leftrightarrow e^{\prime},}$ simplifies. First, if $u$ or $u^\prime$ is the root $n_0$ and since variables $y_{e^{\prime \prime}, n_0}$ are not defined (since the root can never be downward of any edge), $U_{g_{e^{\prime \prime}, r: e \leftrightarrow e^\prime}}$ reduces to only one CNOT. Second, in the case that $e$ and $e^\prime$ are multiedges, that is, they have the same head and tail, there is no path $p$ between the tails, and $U^{\text{path}}_{r: e \leftrightarrow e^{\prime}}$ can be omitted completely. 

\subsubsection*{Update downward variables for the edges $e$ and $e^\prime$:}

The other effect an edge rotation has is that the nodes $w$ that have been previously downward of edge $e$ are now downward of the edge $e^\prime$, or vice-versa in the reciprocal rotation. To update the downward variables, it suffices to interchange the values of $y_{e,w}$ and $y_{e^\prime, w}$ for all $w \in \VV \setminus \{u, u^\prime \}$. On the quantum computer, this is achieved with a SWAP gate for each node, controlled by the ancillary qubit $\ket{f_r}$, assessing the validity of the rotation and the ancilla $\ket{anc}$ needed for the mixing. Schematically, the controlled unitary $U^\text{swap}_{r: e \leftrightarrow e^\prime}$ can be implemented as 

\begin{equation}
\begin{quantikz}[scale=0.50, classical gap=0.1cm]
\lstick[5]{$\ket{\bf{y}}$}& [20 pt]\lstick{}\setwiretype{n}& \gate[5]{U^\text{swap}_{r: e \leftrightarrow e^\prime}}\setwiretype{b} & \midstick[8,brackets=none]{$\displaystyle = \prod_{w \in \VV \setminus\{u, u^\prime, n_0\}}$ } & & \\
    &\lstick[1]{$\ket{y_{e,w}}$}\setwiretype{n}& \setwiretype{q} &         & \swap{2} & \\
    &\lstick[1]{               }\setwiretype{n}& \setwiretype{b} &         &          &  \\
    &\lstick[1]{$\ket{y_{e^\prime,w}}$}\setwiretype{n}& \setwiretype{q} &         & \targX{} &\\
    &\lstick[1]{               }\setwiretype{n}& \setwiretype{b} &         &          & \\
    &\lstick[1]{               }\setwiretype{n}& \setwiretype{q}\qwbundle{4}&& \qwbundle{4} &\\
    &\lstick[1]{$\ket{f_r}$}    \setwiretype{n}& \ctrl{-2}\setwiretype{q} & & \ctrl{-3} & \\
    &\lstick[1]{$\ket{anc}$}    \setwiretype{n}& \ctrl{-3}\setwiretype{q} & & \ctrl{-4} &\\
\end{quantikz}\label{sup:eq:Uswap}
\end{equation}

Similarly as before, if no edge $e$ or $e^\prime$ is active, then $y_{e,n}$ and $y_{e^\prime,n}$ are zero for any node $n$, thus the SWAP interchanges zeros. Hence $U^\text{swap}_{r: e \leftrightarrow e^\prime}$ behaves as the identity.

\subsubsection*{Satisfaction of Lemma~\ref{lem:synchronized_rotations} conditions}

Now that the partial mixer construction has been laid out, it remains to argue that $U^\text{swap}_{r: e \leftrightarrow e^\prime}$ and $U^\text{path}_{r: e \leftrightarrow e^\prime}$ do not violate the conditions of Lemma~\ref{lem:synchronized_rotations}. By constructing the unitaries as a direct implementation of the variable updates that need to occur after an edge rotation, we have ensured that $\ket{\bf{y^\prime}}=U^\text{swap}_{r: e \leftrightarrow e^\prime}U^\text{path}_{r: e \leftrightarrow e^\prime}\ket{\bf{y}}$, as well as $\ket{\bf{y}}=U^\text{swap}_{r: e \leftrightarrow e^\prime}U^\text{path}_{r: e \leftrightarrow e^\prime}\ket{\bf{y^\prime}}$ because of the built in symmetry.

It is however easier to show that the hypothesis of Corollary~\ref{cor:syncrhonized_rotations} are fulfilled. The unitary $U^\text{swap}_{r: e \leftrightarrow e^\prime}$ consists of a product of SWAP gates, which are Hermitian, acting on disjoint qubit pairs $\ket{y_{e, w}},\ket{y_{e^\prime, w}}$. Hence $(U^\text{swap}_{r: e \leftrightarrow e^\prime})^2=I$, and thus is Hermitian. The Hermiticity of $U^\text{path}_{r: e \leftrightarrow e^\prime}$ is also argued in a similar way. By realising that any two terms $A_{e^{\prime\prime}}:=U_{g_{e^{\prime \prime},  r: e \leftrightarrow e^\prime}} \left( \prod_{w \in \VV \setminus\{u, u^\prime, n_0\}} U_{h_{w,  r: e \leftrightarrow e^\prime}}  \cdot \mathrm{CCNOT}(\ket{g_{e^{\prime\prime}}}, \ket{h_{w}},\ket{y_{e^{\prime\prime}, w}}) U_{h_{w,  r: e \leftrightarrow e^\prime}} \right)  (U_{g_{e^{\prime \prime},  r: e \leftrightarrow e^\prime}})^{-1}$ in the product among edges commute,  it is possible to rearrange $(U^\text{path}_{r: e \leftrightarrow e^\prime})^2 = (\prod_{e^{\prime \prime} \in \EE \setminus\{e, e^\prime\}} A_{e^{\prime\prime}})^2=\prod_{e^{\prime \prime} \in \EE \setminus\{e, e^\prime\}} A_{e^{\prime\prime}}^2$ into a product of squares. Then, each $A_{e^{\prime\prime}}^2$ term becomes the identity, as the terms $U_{h_{w,  r: e \leftrightarrow e^\prime}}  \cdot \mathrm{CCNOT}(\ket{g_{e^{\prime\prime}}}, \ket{h_{w}},\ket{y_{e^{\prime\prime}, w}}) U_{h_{w,  r: e \leftrightarrow e^\prime}}$ also commute among them. Thus Hermiticity is shown. 

Finally, $U^\text{swap}_{r: e \leftrightarrow e^\prime}$ and $U^\text{path}_{r: e \leftrightarrow e^\prime}$ mutually commute as the set of qubits affected by the first is only used in the second to compute $h_w$, which remains invariant under the swap.

\subsection{Resource Estimation}
\label{sec:ressources_pm}

In this section we produce an estimation of the quantum resources needed to implement the partial mixer shown in \ref{sec:partial_mixers}.
We do not look for optimal circuit compilations, as those are hardware dependent. Instead, we concentrate on decomposing the mixer into arbitrary single qubit gates and CNOTs, which form an universal set of gates~\cite{barenco1995gates}, an approach that is hardware agnostic. We follow a procedure similar to the one in Ref.~\cite{stollenwerk_toward_2020}.

Let $\mathcal{N}_S$ be the number of single qubit gates, $\mathcal{N}_C$ the number of CNOT gates, and $\mathcal{N}_Q$ the number of qubits needed.

The most straightforward one to compute is $\mathcal{N}_Q$. We use one qubit for each variable $y_{e,n}$, which makes up for $|\EE|(|\VV|-1)$. 
Moreover, from (\ref{sup:eq:partialmixer}) one can already observe that six qubits more are needed: three where the control information $f_r$, $h_w$, $g_{e^{\prime\prime}}$ is stored (repeteadly computed and un-computed), one needed to implement the mixing step and two more used to compute $g_{e^{\prime\prime}}$). This makes up for a total of $|\EE|(|\VV|-1)+6$. To get $\mathcal{N}_Q$ just remains to count how many more ancillas will be needed to decompose everything in terms of single qubits and CNOTs, in what follows, which will only add a small offset factor. So we can conclude $\mathcal{N}_Q=\mathcal{O}(|\EE||\VV|)$.

For the other two quantities, we need to carefully decompose each unitary in (\ref{sup:eq:partialmixer}). It will be helpful to use the tuple $\mathcal{R} = (\mathcal{N}_S, \mathcal{N}_C)$ to keep track of the resources needed.

Let's start with $U_{f_{r: e \leftrightarrow e^\prime}}$, a zero-controlled Toffoli gate. As controlling on zero is equivalent to a normal control surrounded by two X gates, we have
\begin{equation}
\begin{quantikz}[scale=0.50, classical gap=0.1cm]
    & \wireoverride{n} \lstick[1]{$\ket{y_{e^\prime, u}} \,$ }  & \ctrl[open]{1} & \ghost{X}\\
    &\wireoverride{n} \lstick[1]{$\ket{y_{e, u^\prime}} \, $}  & \ctrl[open]{1} & \ghost{X}  \\
    &\lstick[1]{$\ket{0}$ }\wireoverride{n}& \gate[1]{X} & \rstick[1]{$\ket{f_{r}}$}  
\end{quantikz} = 
\begin{quantikz}[scale=0.50, classical gap=0.1cm]
    &\gate[1]{X}& \ctrl{1} &\gate[1]{X} &\\
    &\gate[1]{X}& \ctrl{1} & \gate[1]{X} & \\
    && \gate[1]{X} & &   
\end{quantikz}.
\end{equation}
In turn, the 3-Toffoli gate can be decomposed into CNOT and 1-qubit gates as follows:
\begin{equation}
\begin{quantikz}[scale=0.50, classical gap=0.1cm]
    & \ctrl{1} & \ghost{T}\\
    & \ctrl{1} &  \ghost{T^\dagger} \\
    & \gate[1]{X} &  \ghost{T^\dagger}
\end{quantikz} = 
\begin{quantikz}[scale=0.50, classical gap=0.1cm]
    & & & & \ctrl{2} & &&& \ctrl{2} & & \ctrl{1} & \gate{T} & \ctrl{1} &\\
    & & \ctrl{1} & & & &\ctrl{1} & & &\gate{T} & \targ{} & \gate{T^\dagger} & \targ{} &\\
    &\gate{H}& \targ{} & \gate{T^\dagger} & \targ{} & \gate{T} & \targ{}& \gate{T^\dagger} &\targ{} & \gate{T} & \gate{H} &&&
\end{quantikz},
\label{sup:eq:3toffoli} 
\end{equation}
where $T=\exp(-i\frac{\pi}{8} Z)$. This gives a count of 6 CNOT gates (which are necessary~\cite{shende2009cnot}) and 9 single qubit gates. Thus, $\mathcal{R}(U_{f_{r: e \leftrightarrow e^\prime}}) = (4,0) + \mathcal{R}(3-\text{Toffoli}) = (4,0)+(9,6) = (13,6)$ and we note that no extra ancilla is required.

Turning now to the controlled $U^\text{path}_{r: e \leftrightarrow e^\prime}$, recalling its definition in equation (\ref{sup:eq:Upath}) we observe that we need to compute the resources for the main 5-Toffoli gate, for $U_{g_{e^{\prime\prime}, r: e \leftrightarrow e^\prime}}$, and for $U_{h_{w, r: e \leftrightarrow e^\prime}}$, as well as determine how many times each of them is applied due to the product.

A 5-Toffoli can be decomposed into eight 3-Toffoli gates as follows~\cite{barenco1995gates}
\begin{equation}
\begin{quantikz}[scale=0.50, classical gap=0.1cm]
    \lstick[1]{$\ket{y_{e^{\prime \prime},w}}$} &\gate{X} &\midstick[7, brackets=none]{$=$}&\targ{}  &\ghost{X}&         &         &\targ{}  &         &         &         &\\
    \lstick[1]{$\ket{anc_{\text{extra1}}}$}     &         &                                &\ctrl{-1}&\targ{}  &         &\targ{}  &\ctrl{-1}&\targ{}  &         &\targ{}  &\\
    \lstick[1]{$\ket{anc_{\text{extra2}}}$}     &         &                                &         &\ctrl{-1}&\targ{}  &\ctrl{-1}&         &\ctrl{-1}&\targ{}  &\ctrl{-1}&\\
    \lstick[1]{$\ket{g_{e^{\prime \prime}}}$}   &\ctrl{-3}&                                &\ctrl{-2}&         &         &         &\ctrl{-2}&         &         &         &\\
    \lstick[1]{$\ket{h_w}$}                     &\ctrl{-1}&                                &         &\ctrl{-2}&         &\ctrl{-2}&         &\ctrl{-2}&         &\ctrl{-2}&\\
    \lstick[1]{$\ket{f_r}$}                     &\ctrl{-1}&                                &         &         &\ctrl{-3}&         &         &         &\ctrl{-3}&         &\\
    \lstick[1]{$\ket{anc}$}                     &\ctrl{-1}&                                &         &         &\ctrl{-1}&         &         &         &\ctrl{-1}&         &
\end{quantikz},
\label{sup:eq:5toffoli} 
\end{equation}
and hence $\mathcal{R}(5-\text{Toffoli}) = 8 \cdot \mathcal{R}(3-\text{Toffoli}) = 8\cdot (9,6) = (72, 48)$. Note two extra ancillas are required.
For $U_{g_{e^{\prime\prime}, r: e \leftrightarrow e^\prime}}$ we transform the zero-controls from (\ref{eq:circ_u_g_e_prime_prime}) into normal controls,
\begin{equation}
\begin{quantikz}[scale=0.5, classical gap=0.1cm]
         \lstick[1]{$\ket{y_{e^{\prime \prime}, u}} \, \,$ }& & \ctrl[closed]{1}  & \ctrl[open]{1} & &\ghost{X} &\\ 
         \lstick[1]{$\ket{y_{e^{\prime \prime}, u^\prime}} \, \,$} & & \ctrl[open]{1} & \ctrl[closed]{2} &  &\ghost{X} &\\ 
        \lstick[1]{$\ket{\text{0}}\, \,$} & & \gate{X} &  & \ctrl[closed]{2} & &\\
        \lstick[1]{$\ket{\text{0}}\, \,$}& & & \gate{X} & & \ctrl[closed]{1} & \\
        \lstick[1]{$\ket{\text{0}}\, \,$}& & & & \gate{X} & \gate{X} &  \rstick[1]{$\ket{g_{e^{\prime \prime}}}$}
\end{quantikz} = 
\begin{quantikz}[scale=0.5, classical gap=0.1cm]
         & & \ctrl{1}& \gate{X}   & \ctrl{1} & \gate{X} & &\\ 
         & \gate{X} & \ctrl{1} & \gate{X} & \ctrl{2} &  & &\\ 
         & &  \gate{X} &  & &\ctrl{2} & &\\
         & & & & \gate{X} & &\ctrl{1} & \\
         & & & && \gate{X} & \gate{X} &
\end{quantikz}
\end{equation}
and thus obtaining $\mathcal{R}(U_{g_{e^{\prime\prime}, r: e \leftrightarrow e^\prime}})= (4,2) + 2\cdot\mathcal{R}(3-\text{Toffoli}) = (4,2) + 2\cdot (9,6)=(22,14)$.
Regarding $U_{h_{w, r: e \leftrightarrow e^\prime}}$, its resources $\mathcal{R}(U_{h_{w, r: e \leftrightarrow e^\prime}})=(0,2)$ are straightforward to obtain from (\ref{eq:circ_u_h_w}).

To get the resources of the whole controlled $U^\text{path}_{r: e \leftrightarrow e^\prime}$, we need to sum over the products $\prod_{e^{\prime \prime} \in \EE \setminus\{e, e^\prime\}}$ and $\prod_{w \in \VV \setminus\{u, u^\prime, n_0\}}$:
\begin{align}
    \mathcal{R}(U^\text{path}_{r: e \leftrightarrow e^\prime}) &= (|\EE|-2)\left[2\mathcal{R}(U_{g_{e^{\prime\prime}, r: e \leftrightarrow e^\prime}}) + (|\VV|-3) \left(2\mathcal{R}(U_{h_{w, r: e \leftrightarrow e^\prime}})+\mathcal{R}(5-\text{Toffoli})\right)\right]\\
    & = (|\EE|-2) (44, 28) + (|\EE|-2)(|\VV|-3) ((0,4)+(72,48))\\
    & =  (|\EE|-2) (44, 28) + (|\EE|-2)(|\VV|-3) (72,52).
\end{align}
Note that this is an upper bound, as some unitaries can simplify in specific cases, such as when $e$ and $e^\prime$ are a multiedge, or if $u$ or $u^\prime$ is the root.

Regarding $U^\text{swap}_{r: e \leftrightarrow e^\prime}$, each of the CCSWAP terms in (\ref{sup:eq:Uswap}) can be decomposed as follows:
\begin{equation}
\begin{quantikz}[scale=0.50, classical gap=0.1cm]
\lstick[1]{$\ket{y_{e,w}}$}             &\swap{1} &\midstick[5,brackets=none]{$=$}&\ctrl{1} & \targ{} &\ctrl{1} &\\
\lstick[1]{$\ket{y_{e^\prime,w}}$}      &\targX{} &                               &\targ{}  &\ctrl{-1}& \targ{} &\\
\lstick[1]{$\ket{anc_{\text{extra3}}}$} &         &                               &\targ{}  &\ctrl{-1}& \targ{} &\\
\lstick[1]{$\ket{f_r}$}                 &\ctrl{-2}&                               &\ctrl{-1}&         &\ctrl{-1}&\\
\lstick[1]{$\ket{anc}$}                 &\ctrl{-1}&                               &\ctrl{-1}&         &\ctrl{-1}&
\end{quantikz}
\end{equation}
Thus $\mathcal{R}(\mathrm{CCSWAP}) = (0,2)+3\cdot \mathcal{R}(3-\text{Toffoli}) = (0,2) + 3(9,6) = (27, 20)$. Taking into account the product $\prod_{w \in \VV \setminus\{u, u^\prime, n_0\}}$ this results in
\begin{equation}
    \mathcal{R}(U^\text{swap}_{r: e \leftrightarrow e^\prime}) = (|\VV|-3)(27,20),
\end{equation}
and requires one extra ancilla.

It remains to determine the resources required for the mixing rotation and to combine them with the previous contributions. From equation (\ref{sup:eq:partialmixer}), 4 Hadamard gates are required, and the controlled phase gate, which can be decomposed as follows:
\begin{equation}
    \begin{quantikz}[scale=0.50, classical gap=0.1cm]
    &\ctrl{1}&\ghost{P}\\
    &\gate{P(\beta)}&
    \end{quantikz} = 
    \begin{quantikz}[scale=0.50, classical gap=0.1cm]
    &&\ctrl{1}&\gate{P(\beta/2)}&\ctrl{1}&\\
    &\gate{P(\beta/2)}& \targ{}&\gate{P(-\beta/2)}&\targ{}&
    \end{quantikz}
\end{equation}
Hence $\mathcal{R}(\text{mixing rotation}) = 4\cdot\mathcal{R}(\mathrm{H})+ \mathcal{R}(\mathrm{CPhase}) = 4\cdot (1,0)+ (3,2) = (7,2)$.

We now have all the ingredients needed to determine the total resource requirements,
\begin{align*}
    \mathcal{R}(U^\text{PM}_{r: e \leftrightarrow e^\prime}(\beta)) &= \mathcal{R}(\text{mixing rotation}) + 2 \mathcal{R}(U_{f_{r: e \leftrightarrow e^\prime}}) + 2\mathcal{R}(U^\text{path}_{r: e \leftrightarrow e^\prime}) + 2\mathcal{R}(U^\text{swap}_{r: e \leftrightarrow e^\prime})\\
    &=(7,2) + 2(13,6) + 2 [(|\EE|-2) (44, 28) + (|\EE|-2)(|\VV|-3) (72,52)]+ 2 (|\VV|-3)(27,20)\\
    & = (33,14) + (|\EE|-2) (88, 56) + (|\EE||\VV|-2|\VV|-3|\EE|+6) (144,104)+ (|\VV|-3)(54,40)\\
    & = (559, 406)  - |\EE| (344, 256) - |\VV| (234,168) + |\EE||\VV|(144,104).
\end{align*}
The total number of qubits needed, by realising that either $\ket{anc_{\mathrm{extra1}}}$ or $\ket{anc_{\mathrm{extra2}}}$ can be used as $\ket{anc_{\mathrm{extra3}}}$, as they are employed at different points and are always uncomputed, is $\mathcal{N}_Q=|\EE|(|\VV|-1)+8$.

    \subsection{Full Mixer and Initial State Preparation}
\label{sec:full_mixer}

\begin{figure}[b!]
    \centering
    \includegraphics[width=\columnwidth]{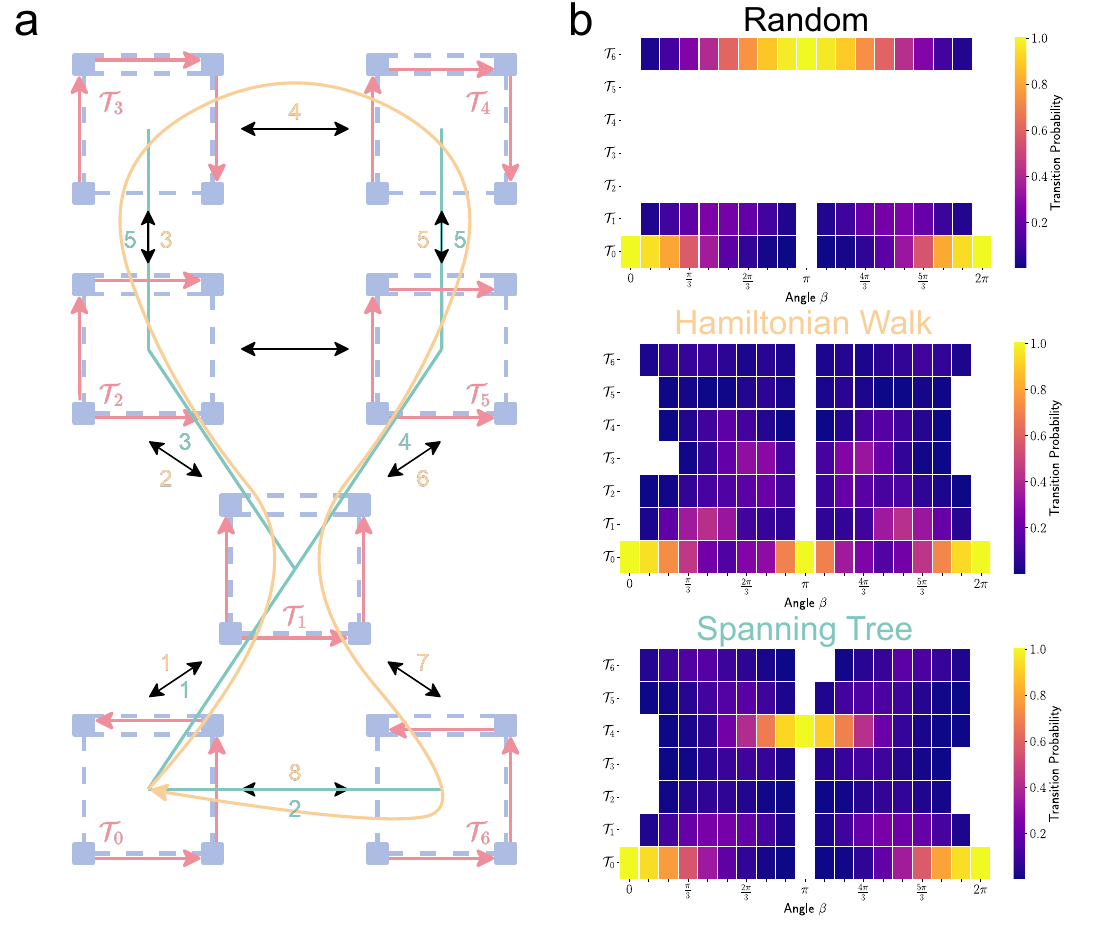}
    \caption{Comparison of Random-Order and graph-aware Mixer designs if applied once to an initial state $\ket{\TT_i}$. \textbf{a}: Graph of all spanning trees $\GG_{\mathrm{Sp}}$ for a simple four-node example, with multiedges. We set $\TT_0$ as the initial state/configuration of our algorithm. Then, the yellow path represents a Hamiltonian Walk, and the green path is a Minimum Spanning Tree in $\GG_{\mathrm{Sp}}$, which allows us to explore the entire graph $\GG_{Sp.}$. \textbf{b}. Transition probabilities $\TT_0 \to \TT_k$ for the three presented Mixer designs in dependence on the variational parameter $\beta$. For the Random-Order Mixer, we representatively use the sequence of edge rotations that corresponds to the sequence $\TT_1 \leftrightarrow \TT_2,  \, \TT_1 \leftrightarrow \TT_5, \, \TT_1 \leftrightarrow \TT_6, \,  \TT_0 \leftrightarrow \TT_1, \, \TT_3 \leftrightarrow \TT_4$ and $\TT_2 \leftrightarrow \TT_3 / \TT_4 \leftrightarrow \TT_5$, which are both implemented by the same edge rotation. since they are mulitedges. The Hamiltonian-Walk and Spanning-Tree Mixer uses the sequence shown in \textbf{a}. Note again that for the Spanning-Tree Mixer, the transitions $\TT_2 \leftrightarrow \TT_3$ and $\TT_4 \leftrightarrow \TT_5$ are based on the same edge-rotation, so only one Partial Mixer is needed for this level of the Tree. 
    Probabilities are approximated by sampling the Mixer circuits 1000 times using Qiskit-Aer with the ``matrix\_product\_state'' method, cf. \cite{vidal2003efficient}.}  
    \label{sup:fig:full_mixers}
\end{figure}

Each partial Mixer implements a single edge rotation; therefore, a full Mixer can be constructed as a composition of partial Mixers. The key question is: which composition should we choose? The answer depends intrinsically on the chosen initial state, i.e., some superposition of feasible states $\ket{\bf{y}_i}$ that encodes the tree $\TT_i$.

By analogy with standard QAOA, where the algorithm is initialized in the uniform superposition of all bit strings, one natural choice would be to start in the uniform superposition of all feasible states—in this case, all bit strings encoding a spanning tree rooted at $n_0$. However, preparing such a superposition is nontrivial; in the worst case, it requires an exponential-size circuit \cite{Plesch2011}\footnote{A detailed discussion of whether efficient methods for preparing feasible superpositions exist is beyond the scope of this work, but would be an interesting direction for future research.}. Nevertheless, a straightforward full Mixer design applicable in this situation is a Random-Order-Mixer
\begin{equation}
\label{eq:random_order_mixer}
    U_{\text{M, feasible}}(\beta, \sigma) = \prod_{r} U^\text{PM}_{ \sigma(r)}(\beta)
\end{equation}
where $\sigma$ is an element of the permutation group $S_{\lvert \RR_{\mathrm{SWAP}} \rvert}$ where $\RR_{\mathrm{SWAP}}$ is the set of all edge swaps. Hence, we apply all edge swaps in a random order. 

Another approach would be to initialize the algorithm in any of the feasible states. This initialization has two advantages. First, such a state can be efficiently prepared with $h$ Pauli-$X$ gates, where $h$ is the hamming weight of the state (cf. Fig.~\ref{sup:fig:mixers_simple_example}b). Second, initializing the circuit in one feasible state is necessary for Rev-QAOA (cf. the discussion in \ref{sec:qaoa_simulation_methods}). 

However, applying the Random-Order Mixer~\eqref{eq:random_order_mixer} (once) to a feasible state does not guarantee transitions to all other feasible states. The reason is that the order of edge swaps matters: if an intermediate configuration (spanning tree) is reached in which the next edge swap is invalid, the corresponding partial Mixer acts as the identity, effectively skipping that operation. As a result, certain regions of the spanning-tree graph $\GG_{\mathrm{Sp}}$ (cf. Sec.~\ref{sec:edge_rot_graph_Def}) may remain unexplored. Consider the example in Fig.~\ref{sup:fig:full_mixers}a. If the algorithm is initialized in the state describing $\TT_0$ and the first two edge rotations in the permutation $\sigma$ correspond to transitions $\TT_1 \leftrightarrow \TT_2$ and $\TT_1 \leftrightarrow \TT_5$, then configurations $\TT_2, \, \TT_3, \, \TT_4$ and $\TT_5$ cannot be reached; the corresponding transition probabilities are zero (cf. upper panel in Fig.\ref{sup:fig:full_mixers}b).

There is, however, a straightforward solution. We choose an order of edge swaps such that the full graph $\GG_{Sp.}$ is explored from this initial configuration. We present two strategies: the Hamiltonian-Walk Mixer and the (Minimum-) Spanning-Tree Mixer. 

First, let's consider a Hamiltonian walk on the graph $\GG_{\mathrm{Sp}}$ starting (and ending) at $\TT_0$, the spanning tree corresponding to the initial state (cf. the yellow path in Fig.~\ref{sup:fig:full_mixers}a).  We then choose the sequence of edge rotations corresponding to the transition between $\TT_k$ in the Hamiltonian walk. The resulting Hamiltonian-Walk Mixer then has finite transition probabilities from $\TT_0$ to all feasible states for most values of $\beta$ (cf. middle panel in Fig.~\ref{sup:fig:full_mixers}b). Depending on the graph $\GG_{\mathrm{Sp}}$, the Hamiltonian-Walk Mixer requires more or fewer edge swaps than the Random-Order Mixer. For the example in Fig.~\ref{sup:fig:full_mixers}a, we need a sequence of 8 edge swaps, which is one more than $\lvert \RR_{\mathrm{SWAP}} \rvert$. Note that, for example, the transitions 3 and 5 correspond to the same edge swap. However, for the graph depicted in Fig.~\ref{fig:graph_of_spanning_trees}, the Hamiltonian-Walk Mixer can be, depending on the choice of walk, more efficient than the Random-Order Mixer.  However, we note that finding the shortest Hamiltonian Walk is again an NP-hard Problem. 

Second, we can build a Mixer based on a (Minimum-) Spanning Tree on $\GG_{\mathrm{Sp}}$. We again choose a sequence of edge rotations according to the distance to the root in the Minimum Spanning Tree (cf. the green tree in Fig.~\ref{sup:fig:full_mixers}a). Again, this Mixer design has finite transition probabilities from $\TT_0$ to all feasible states (cf. lower panel in Fig.~\ref{sup:fig:full_mixers}b). Spanning-Tree Mixers require fewer edge swaps to be implemented and thus provide the shallowest full Mixer circuits. Moreover, finding a Minimum Spanning Tree can be achieved in $\mathcal{O} (\lvert \EE_{\mathrm{Sp}} \rvert \log \lvert \VV_{\mathrm{Sp}} \rvert)$. 

It is important to note that any of these full-Mixer designs must ultimately be evaluated in the context of a concrete optimization algorithm, such as QAOA. 
In particular, their performance depends on how effectively they mix across all feasible configurations when applied repeatedly within the algorithm. 
The discussion above focused on transition probabilities starting from the state $\mathcal{T}_0$. 
However, during the execution of an optimization algorithm, the intermediate states are superpositions of feasible configurations, which in turn affects the occupation probabilities after the application of a Mixer.

\subsection{Implementation of the Partial Mixers for a Simple Example}
\label{sec:simple_example_mixer}

To ``validate'' the Partial Mixer design and show when simplifications arise we now discuss the implementation of the two Partial Mixers $U^\text{PM}_{r: 0 \leftrightarrow 1^\prime}(\beta)$ for the simple three nodes and three edge example discussed in the main document. The simplified Partial Mixer circuits can be found in Fig.~\ref{sup:fig:mixers_simple_example}a. The indices $e,n$ are flattened by $j=e \lvert \VV - 1 \rvert + (n-1)$. For this minimal example, there are only two edge swaps $r: 0 \leftrightarrow 1$ and $r: 1 \leftrightarrow 2$. Hence, we only need two partial mixers. Since the underlying problem is quite simple, the Boolean conditions $f_{r, w, e^{\prime \prime}}$ simplify. Fewer controlled operations and ancillary qubits are necessary to implement the Boolean functions. Furthermore, for each edge swap, there is only one $w, e^{\prime \prime}$, that is, we don't need products in these variables. Hence, the Boolean functions need to be evaluated only once. Their computation in the ancillary can be moved outside the synchronized rotation. The resulting partial mixers are thus more tractable. Here, the simplest full mixer design is a sequential application of the partial mixers ( cf. Fig.~\ref{sup:fig:mixers_simple_example}b). 

Simulating the full mixer circuit in Qiskit for an ideal quantum computer with no noise, we see that the feasible space is protected. Starting from a feasible configuration such as $\ket{100001}$, we have non-zero occupation possibilities only for the three feasible states $\{\ket{10001}, \ket{001011}, \ket{110100}\}$. However, for $\beta = \frac{\pi}{2}$ we do not get a uniform superposition of the feasible states. The probabilities depend on the Mixer Ansatz, such as the order of the partial mixers.  

\begin{figure}[h!]
    \centering
    \includegraphics[width=0.9\columnwidth]{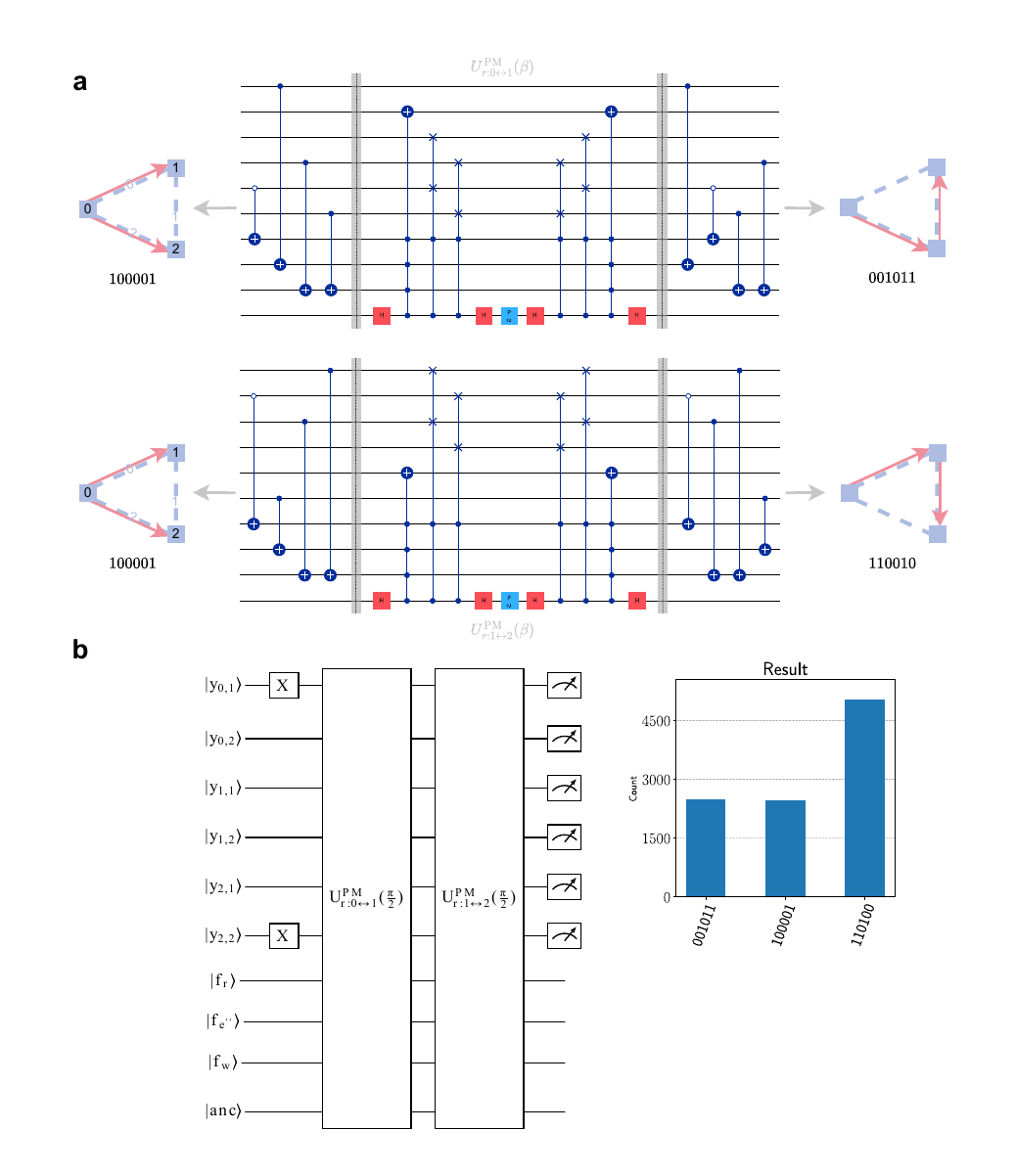}
    \caption{Edge rotation based mixer for the simple Example from Fig. . \textbf{a}: Implementation of the two partial mixers $U^\text{PM}_{r: 0 \leftrightarrow 1^\prime}(\beta)$ and $U^\text{PM}_{r: 1 \leftrightarrow 2^\prime}(\beta)$ in Qiskit. Since the example is small, fewer ancillary variables and controlled operations are needed than for the general construction. \textbf{b:} Verification that the full mixer $U^\text{PM}_{r: 0 \leftrightarrow 1^\prime}(\beta) U^\text{PM}_{r: 0 \leftrightarrow 1^\prime}(\beta)$ protects the feasible space, here for $\beta = \frac{\pi}{2}$. The circuit shows the experiment conducted in Qiskit, assuming an ideal Quantum computer with no noise. First, the initial state $\ket{100001}$ is prepared using two $X$-gates, and then the two partial mixers are sequentially applied before the qubits $\ket{y_{e,n}}$ are measured. The outcome of the experiment in Qiskit is shown as a histogram. Only feasible states have non zero probability.    }  
    \label{sup:fig:mixers_simple_example}
\end{figure}

\newpage
\section{The Minimum Dissipation Spanning Tree Problem}
\label{sec:mdst}

In this section, we discuss the Minimum Dissipation Spanning Tree (MDST) problem from a theoretical perspective.
In \ref{sec:sup:comp_hardness}, we first present a short NP-hardness proof that is intended to be accessible to readers unfamiliar with the topic and to offer an instructive example of a polynomial-time reduction.
We then proceed to show that MDST is NP-hard to approximate, a new result that provides further motivation for the study of (quantum) heuristic algorithms for MDST. 
In \ref{sec:sup:red_graph_deg_2} we discuss a simplification scheme for instances of MDST. 
Finally, we give a mixed-integer programming (MIP) formulation in \ref{sec:sup:mip_formulation}.

\subsection{Computational Hardness}
\label{sec:sup:comp_hardness}

\subsubsection*{A Brief Introduction to Complexity Theory}

In computational complexity theory (cf.\ the book by Arora and Barak~\cite{arora2009computational} for a comprehensive introduction), an algorithm is considered efficient if it runs in \emph{polynomial time}: for an input of size~$n$, the algorithm terminates after at most~$cn^d$ basic operations, where~$c$ and~$d$ are constants.
The available basic operations, e.g., fundamental arithmetic computations and simple control steps, depend on the computational model, such as the \emph{Turing machine}. However, all reasonable classical models are polynomially equivalent and we omit details here. The input size~$n$ can be measured in any reasonable way; for instance, if the input is a graph, the number of nodes plus edges is a natural choice.
Although the constants~$c$ and~$d$ may be large, they are typically small enough for satisfactory performance in practice.

A \emph{problem} is a collection of \emph{instances}: for example, each Sudoku puzzle is an instance, while the set of all Sudoku puzzles constitutes the Sudoku problem.
An algorithm solves a problem if it produces the correct answer on every instance.
Some problems are known not to be solvable in polynomial time, such as determining the winning player in (generalized) chess~\cite{fraenkel1981computing}, and are therefore considered intractable: they behave poorly and are difficult to deal with.
However, for many important practical problems the situation is not so clear: they are merely conjectured not to be solvable in polynomial time.
Most of these conjectures are implied by the widely-believed conjecture that~$\mathrm{P} \neq \mathrm{NP}$, also known as the~$\mathrm{P}$ versus~$\mathrm{NP}$ problem.

The symbols~$\mathrm{P}$ and $\mathrm{NP}$ each denote a class of \emph{decision problems}, i.e., problems that ask yes--no questions, such as ``Does this Sudoku puzzle have a valid solution?'' or ``Does this electrical network have a configuration producing a power loss of at most~$k$?''
The class~$\mathrm{P}$ contains all decision problems that can be solved in polynomial time.
The class~$\mathrm{NP}$ is more elusive, with the symbol standing for ``nondeterministic polynomial''.
Loosely speaking,~$\mathrm{NP}$ contains all decision problems that allow to check solutions for correctness in polynomial time: it might be challenging to solve a Sudoku puzzle, but checking whether a solution is correct is manageable.
Hence, even though it is unclear whether (generalized) Sudoku is in~$\mathrm{P}$ , it is included in~$\mathrm{NP}$~\cite{yato2003complexity}.

There are problems in~$\mathrm{NP}$ that appear to be particularly hard.
To compare the hardness of problems in~$\mathrm{NP}$, the following concept is crucial: a \emph{polynomial-time Karp reduction} from decision problem~$A$ to decision problem~$B$ is an algorithm that, given any instance~$\mathcal{I}$ of~$A$, produces an instance~$\mathcal{I}'$ of~$B$ in polynomial time such that~$\mathcal{I}$ is a ``yes''-instance if and only if~$\mathcal{I}'$ is a ``yes''-instance.
If we have such a reduction, then we can conclude that problem~$B$ is at least as hard as problem~$A$, apart from the overhead of polynomial translation, which is of secondary importance when dealing with problems that presumably require exponential time to solve.

If a problem is, in the above sense, at least as hard as all problems in~$\mathrm{NP}$, then it is \textit{$\mathrm{NP}$-hard}.
If a problem is~$\mathrm{NP}$-hard and contained in~$\mathrm{NP}$, then it is \textit{$\mathrm{NP}$-complete}.
The most intriguing feature of NP-complete problems is that a polynomial-time algorithm for a single one of them would yield a polynomial-time algorithm for every problem in NP: we could simply Karp-reduce every problem in~$\mathrm{NP}$ to the one $\mathrm{NP}$-complete problem that we know how to solve in polynomial time.
This would contradict the~$\mathrm{P} \neq \mathrm{NP}$ conjecture, and hence we believe that no $\mathrm{NP}$-complete problem can be solved in polynomial time.

Many very general and highly important problems are $\mathrm{NP}$-complete.
The first natural problem shown to be NP-complete is Boolean Satisfiability.
This was achieved by Cook~\cite{cook1971complexity} and Levin~\cite{levin1973universal} in the early 1970s and sparked the discovery of many more $\mathrm{NP}$-complete problems: to prove a problem~$\Pi$ in~$\mathrm{NP}$ to be $\mathrm{NP}$-complete, provide a polynomial-time Karp reduction from a problem already known to be $\mathrm{NP}$-complete to~$\Pi$.
Fundamental decision problems like Boolean Satisfiability, Integer Linear Programming, Traveling Salesperson, Vertex Cover, and Subset Sum were shown to be $\mathrm{NP}$-complete by Karp-reductions~\cite{garey1979computers}.
The aforementioned Sudoku problem is $\mathrm{NP}$-complete as well~\cite{yato2003complexity}.
In general, because $\mathrm{NP}$-complete problems appear hard to solve yet have easily verifiable solutions, they are closely associated with puzzles.

\subsubsection*{Proof of NP-hardness for MDST}

We show that the MDST problem is NP-hard, illustrating the general principle of Karp reductions.
We do not claim originality; our goal is a clear, self-contained presentation.
Presumably, the ideas we use here first appeared in a Japanese-language master's thesis by Shion Chiba, which is not accessible to us (cf.~Ref.~\cite{ito2024loss}).

As is standard in hardness reductions, we restrict numbers (demands and dissipation constants) to integers.
If the problem is hard in this setting, then it remains hard in the general case.
Here, we work with the decision version of MDST.
If the decision version is hard, then the optimization version is also hard.

\decisionproblem{Minimum Dissipation Spanning Tree (Decision Version)}{prob:mdst}
{A connected undirected graph~$\GG = (\VV,\EE)$, a \emph{demand}~$\mathfrak{f}_v \in \mathbb{Z}$ for every node~$v \in \VV$ with~${\sum_{v \in \VV} \mathfrak{f}_v = 0}$, a \emph{dissipation constant}~$\alpha_e \in \mathbb{N}_0$ for every edge~$e \in \EE$, and an integer~$k \in \mathbb{N}_0$.}
{Is there a spanning tree~$\TT$ of~$\GG$ such that~$C(\TT) \leq k$? (See the definition of~$C$ below.)}

The cost function~$C$ is given by~$C(\TT) = \sum_{e \in \TT} \alpha_e f_e(\TT)^2$ with the flow $f_e(\TT)$ over edge~$e$ uniquely determined by Kirchhoff’s current law (flow conservation) for every~$e \in \EE$ (cf.~Eq.~\eqref{net_rec:eq:KCL}).
Here, we may pick a root node arbitrarily, as we do not explicitly require it.

\begin{thm}
    The Minimum Dissipation Spanning Tree problem is NP-hard.
\end{thm}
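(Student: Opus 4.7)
The plan is to construct a polynomial-time Karp reduction from the decision version of Partition (known to be NP-complete): given positive integers $a_1, \ldots, a_n$ with $\sum_{i=1}^n a_i = 2S$, decide whether some index subset sums to exactly $S$. The quadratic dissipation cost $\sum_e \alpha_e f_e(\TT)^2$ inherently rewards balanced flows, which aligns naturally with the balanced-partition question.

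First, I would build the following simple flow network $\GG = (\VV,\EE)$: a single source $s$ with $\mathfrak{f}_s = -2S$, two intermediate nodes $t_1, t_2$ with $\mathfrak{f}_{t_j} = 0$, and $n$ consumer nodes $v_1,\ldots,v_n$ with $\mathfrak{f}_{v_i} = a_i$. The edge set consists of $s$--$t_1$ and $s$--$t_2$, each with dissipation constant $\alpha = 1$, together with $t_1$--$v_i$ and $t_2$--$v_i$ for every $i$, each with dissipation constant $0$. The construction is clearly polynomial in the size of the Partition instance, the graph is connected, and the demands satisfy $\sum_{n\in\VV} \mathfrak{f}_n = 0$. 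I will take the decision threshold $k = 2S^2$.

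Second, I would establish the equivalence between ``yes''-instances. The key structural observation is that $s$ has only two incident edges. In any spanning tree $\TT$ either (i) both edges $s$--$t_j$ are present, so each consumer attaches to exactly one of $t_1, t_2$ (no $v_i$ can attach to both without forming a cycle), inducing a partition $A_1 \sqcup A_2 = \{v_1,\ldots,v_n\}$; or (ii) only one such edge lies in $\TT$, forcing all of the remaining graph to hang off a single $s$-incident edge. By Eq.~\eqref{net_rec:eq:flows_in_tree}, the flow through $s$--$t_j$ in case (i) equals $\sum_{i\in A_j} a_i$, so the total cost is $(\sum_{i\in A_1} a_i)^2 + (\sum_{i\in A_2} a_i)^2$. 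Setting $X := \sum_{i\in A_1} a_i$, this is $2X^2 - 4SX + 4S^2$, which is minimized over integers at $X=S$ with value $2S^2$, and jumps to at least $2S^2 + 2$ for any integer deviation $|X-S|\geq 1$. In case (ii), the unique surviving $s$-edge carries the whole flow $2S$, incurring cost $4S^2 \geq 2S^2 + 2$ for all $S \geq 1$. Hence $\OPT(\GG) \leq 2S^2$ if and only if the Partition instance admits a valid split.

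The main obstacle is verifying the case analysis rigorously: one must enumerate all spanning-tree topologies near $s$ and near the intermediates, confirm that the balanced configuration of case (i) strictly dominates case (ii), and exploit integrality of the $a_i$ to obtain the $+2$ gap separating ``yes'' from ``no'' instances. A secondary technical point concerns parameter size: since Partition integers may be encoded in binary, this reduction establishes NP-hardness, but strengthening the source problem to 3-Partition (which is strongly NP-hard and uses integers bounded polynomially in $n$) yields strong NP-hardness and matches the parameter regime used in Theorem~\ref{thm:np_hardness}. Finally, a symmetric ``no'' argument completes the proof that $\GG$ has a spanning tree of cost at most $k$ if and only if the Partition instance is solvable.
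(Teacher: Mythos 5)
Your reduction is correct and essentially identical to the paper's own proof: the same gadget (a source with demand $-Q$, two zero-demand transit nodes whose source-incident edges carry the only nonzero dissipation constants, and one zero-dissipation consumer node per integer), the same threshold $Q^2/2$, and the same case analysis using the strict convexity of the quadratic cost to force an exactly balanced split versus the cost-$Q^2$ single-edge case. The only cosmetic difference is that the paper sets $k=\lfloor Q^2/2\rfloor$ to handle odd total sums cleanly, whereas you implicitly assume the total is even by writing it as $2S$.
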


We reduce from the NP-hard \cite{garey1979computers} \textsc{Partition} problem to MDST.
Given a nonempty set~$S$, nonempty subsets~$A, B \subseteq S$ form a partition of~$S$ if~$A \cup B = S$ and~$A \cap B = \emptyset$.

\decisionproblem{Partition}{prob:partition}
{A set~$S$ of positive integers.}
{Is there a partition of~$S$ into nonempty subsets~$A$ and~$B$ with equal sum, i.e.,~$\sum_{s \in A} s = \sum_{s \in B} s$?}

Let us consider an instance of \textsc{Partition} with set~$S$.
The idea is to build an MDST instance with one source node and, for each~$s \in S$, one consumer node such that any low-cost spanning tree must connect each consumer to the source in exactly one of two ways: either via a node~$a$ representing set~$A$, or via a node~$b$ representing set~$B$. A spanning tree then encodes a partition of~$S$. 
The precise construction is as follows.

\begin{figure}[t]
    \centering
    \includegraphics[width=0.8\columnwidth]{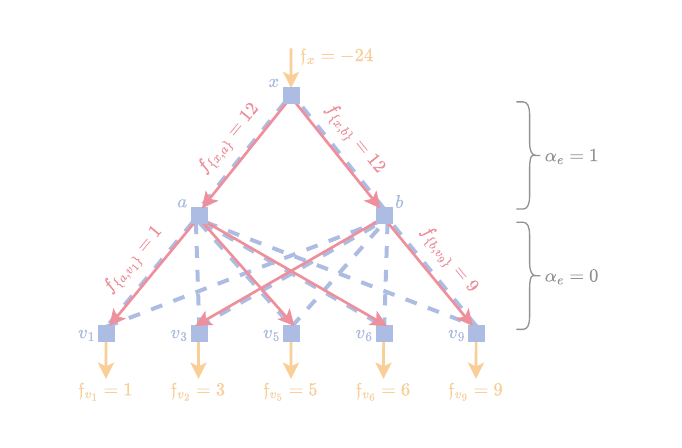}
    \caption{Example of an MDST-instance constructed by Mapping~\ref{cons:partition2mdst}. The depicted instance results from a \textsc{Partition}-instance with integer set~$S=\{1,3,5,6,9\}$. The nodes and edges of the graph~$\GG$ are blue, and the demands are yellow.
    The dissipation constants of the two upper edges~($\{x,a\}$ and~$\{x,b\}$) are~$1$, and the dissipation constants of the lower edges are~$0$. 
    In addition, the figure shows a solution (a spanning tree) to the MDST-instance in red, including some of the flow values over active edges, where~$x$ is taken as the root node. This solution corresponds to partitioning~$S$ into the sets~$A = \{1,5,6\}$ and~$B = \{3,9\}$.}
    \label{sup:fig:np-hardness}
\end{figure}

\begin{cons}\label{cons:partition2mdst}
    Given any instance~$\mathcal{I}$ of \textsc{Partition}, we construct an instance~$\mathcal{I}'$ of MDST (see Fig.~\ref{sup:fig:np-hardness}).
    The instance~$\mathcal{I}$ consists of a set~$S$ of positive integers, and~$\mathcal{I}'$ consists of a graph, demands, dissipation constants, and an integer threshold~$k$.
    Let the graph~$\GG = (\VV, \EE)$ contain nodes named~$x$, $a$, $b$, and, for each~$s \in S$, a node named~$v_s$.
    Include the edges~$\{x,a\}$, $\{x,b\}$, and, for each~$s \in S$, the edges~$\{a,v_s\}$ and~$\{b,v_s\}$.
    Set the demands of node~$a$ and node~$b$ to zero, i.e., set~$\mathfrak{f}_a \coloneqq 0$ and $\mathfrak{f}_b \coloneqq 0$.
    For each~$s \in S$, set the demand of~$v_s$ to~$s$, i.e., set~$\mathfrak{f}_{v_s} \coloneqq s$.
    Let~$Q \coloneqq \sum_{s \in S} s$ be an auxiliary value holding the sum of the elements of~$S$.
    Set~$\mathfrak{f}_x \coloneqq -Q$, which balances power injection with power consumption.
    Moreover, set the dissipation constants of both~$\{x,a\}$ and~$\{x,b\}$ to one~($\alpha_{\{x,a\}} \coloneqq 1$ and $ \alpha_{\{x,b\}} \coloneqq 1$), and the dissipation constants of all other edges to zero.
    Finally, set~$k \coloneqq \lfloor Q^2 / 2 \rfloor$, where~$\lfloor \cdot \rfloor$ denotes the operation of rounding down to the largest integer not exceeding the argument value.
    (Note that~$\lfloor \cdot \rfloor$ does not modify the argument value if it already is an integer.)
\end{cons}

The construction runs in polynomial time.
It remains to show that~$\mathcal{I}$ and~$\mathcal{I}'$ are equivalent.
Before the formal proof, we informally discuss the network produced by the construction and intuitively argue how the theorem follows. We assume~$Q > 0$ in the following.

The node~$x$ is the single source of flow, whereas the~$v_s$-nodes are consumers of flow. The nodes~$a$ and~$b$ only serve as transit nodes, neither supplying nor consuming flow.
The flow originating from~$x$ can split on its way to the consumers: some of the flow reaches the consumers through node~$a$ and some reaches them through node~$b$. Since most edges have a dissipation constant of zero, with the edges~$\{x,a\}$ and~$\{x,b\}$ being the only exceptions, the division of the flow between node~$a$ and node~$b$ is crucial. As the dissipation on a single edge grows quadratically with the flow, it is best to spread the flow as evenly as possible between~$a$ and~$b$.
The threshold~$k$ is chosen such that the total dissipation is at most~$k$ only if the flows from~$x$ to~$a$ and from~$x$ to~$b$ are equal.
In a spanning tree including both edge~$\{x,a\}$ and edge~$\{x,b\}$, each consumer is adjacent to either $a$ or $b$. Equal flows thus imply a partition of consumers into two subsets with equal total demand -- directly corresponding to the \textsc{Partition} problem. Hence, the constructed MDST instance is equivalent to the original \textsc{Partition} instance. We now proceed to the formal proof.

\begin{lem}
    Let~$\mathcal{I}$ be an instance of \textsc{Partition} and let~$\mathcal{I}'$ be the corresponding instance of MDST (Mapping~\ref{cons:partition2mdst}).
    Then~$\mathcal{I}$ is a yes-instance if and only if~$\mathcal{I}'$ is a yes-instance.
\end{lem}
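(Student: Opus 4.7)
My approach is to prove the two directions of the biconditional separately, exploiting the fact that the dissipation constant is nonzero only on the edges $\{x,a\}$ and $\{x,b\}$. Consequently, whenever both of these edges lie in a spanning tree $\TT$, the cost simplifies to $C(\TT) = f_{\{x,a\}}(\TT)^2 + f_{\{x,b\}}(\TT)^2$, with the two flows determined by Kirchhoff's current law from the total demand of the consumer nodes attached below each of the transit nodes $a$ and $b$.

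For the forward direction, starting from a partition $A \sqcup B = S$ with $\sum_{s\in A} s = \sum_{s \in B} s$, I would exhibit the spanning tree
\[
\TT \;=\; \{\{x,a\},\{x,b\}\} \;\cup\; \{\{a,v_s\} : s \in A\} \;\cup\; \{\{b,v_s\} : s \in B\}.
\]
By assumption $\sum_{s \in A} s = \sum_{s \in B} s = Q/2$, so in particular $Q$ is even and $k = Q^2/2$; by KCL the flows through $\{x,a\}$ and $\{x,b\}$ then have magnitude $Q/2$ each, giving $C(\TT) = (Q/2)^2 + (Q/2)^2 = Q^2/2 = k$ and hence witnessing that $\mathcal{I}'$ is a yes-instance.

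For the backward direction, I start from a spanning tree $\TT$ with $C(\TT) \le k$ and first rule out the case that either of $\{x,a\}$ or $\{x,b\}$ is absent from $\EE_\TT$: then the remaining one must carry all of $x$'s injected flow $Q$, forcing $C(\TT) \ge Q^2 > \lfloor Q^2/2 \rfloor$ for $Q \ge 1$. Once both edges belong to $\TT$, acyclicity forces each consumer $v_s$ to be reached from $x$ through exactly one of $a$ or $b$, which defines subsets $A, B \subseteq S$ with $A \sqcup B = S$. Writing $p := \sum_{s \in A} s$, KCL gives the key identity
\[
C(\TT) \;=\; p^2 + (Q-p)^2 \;=\; \tfrac{Q^2}{2} + 2\bigl(p - \tfrac{Q}{2}\bigr)^2,
\]
so the bound $C(\TT) \le \lfloor Q^2/2 \rfloor$ forces $p = Q/2$. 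Since $p$ must be an integer, $Q$ is even; since $0 < Q/2 < Q$, both $A$ and $B$ are nonempty, yielding a valid Partition solution.

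The only subtlety I foresee is the odd-$Q$ corner case, where \textsc{Partition} is vacuously infeasible and, on the MDST side, the integer-valued minimum $\min_{p \in \mathbb{Z}}\bigl(p^2 + (Q-p)^2\bigr) = (Q^2+1)/2$ strictly exceeds $\lfloor Q^2/2 \rfloor$, so no spanning tree meets the threshold and both instances are consistently no-instances. Beyond this, I expect no real obstacle: the remainder is straightforward verification that the constructed subgraph is indeed a spanning tree and that the KCL-induced flows take the claimed values.
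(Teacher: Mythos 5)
Your proposal is correct and follows essentially the same route as the paper's proof: the same explicit spanning tree in the forward direction, and in the backward direction the same case analysis ruling out the absence of $\{x,a\}$ or $\{x,b\}$ followed by the identity $p^2+(Q-p)^2 = Q^2/2 + 2(p-Q/2)^2$ to force an equal split. Your explicit handling of the odd-$Q$ case and of the nonemptiness of $A$ and $B$ is a small but welcome addition that the paper leaves implicit.
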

\begin{proof}
    ($\Rightarrow$) Assume that~$\mathcal{I}$ is a yes-instance.
    We show that~$\mathcal{I}'$ is a yes-instance.
    By assumption, there is a partition of~$S$ into disjoint subsets~$A$ and~$B$ with equal sum, meaning that~$\sum_{s \in A} s = \sum_{s \in B} s = Q / 2$.
    Hence,~$Q$ is an even number.
    Take the spanning tree~$\TT$ of~$\GG$ that includes the edges~$\{x,a\}$, $\{x,b\}$, and, for each~$s \in A$, the edge~$\{a,v_s\}$, and, for each~$s \in B$, the edge~$\{b,v_s\}$.
    Observe that~$\TT$ is indeed a spanning tree.
    Moreover, the amount of flow on edge~$\{x,a\}$ is~$Q/2$, and the amount of flow on edge~$\{x,b\}$ is also~$Q/2$.
    Only these two edges have nonzero resistance, so~$C(\TT) = 2 \cdot (Q/2)^2 = Q^2/2 = \lfloor Q^2 / 2 \rfloor = k$.
    Thus,~$\mathcal{I}'$ is a yes-instance.

    ($\Leftarrow$) Assume that~$\mathcal{I}'$ is a yes-instance.
    We show that~$\mathcal{I}$ is a yes-instance.
    By assumption, there is a spanning tree~$\TT$ of~$\GG$ with cost~$C(\TT) \leq \lfloor Q^2 / 2 \rfloor$.
    We first consider the case that~$\TT$ does not contain the edge~$\{x,a\}$.
    Then,~$\TT$ contains the edge~$\{x,b\}$.
    Since~$\{x,b\}$ is the only edge incident to node~$x$ in~$\TT$, we have~$f_{\{x,b\}}(\TT) = \mathfrak{f}_x = Q$.
    It follows that~$C(\TT) = Q^2$, a contradiction to~$C(\TT) \leq \lfloor Q^2 / 2 \rfloor$.
    Analogously, we get a contradiction if~$\TT$ does not contain~$\{x,b\}$.
    Hence, assume that~$\TT$ contains both~$\{x,a\}$ and~$\{x,b\}$.
    Let~$F_a \coloneqq f_{\{x,a\}}(\TT)$ and~$F_b \coloneqq f_{\{x,b\}}(\TT)$.
    We have~$F_a^2 + F_b^2 \leq \lfloor Q^2 / 2 \rfloor$.
    Moreover, as node~$x$ is the only source and injects~$Q$ units of flow, we additionally have~$F_a + F_b = Q$.
    Using the rearrangement~$F_b = Q - F_a$, we obtain~$F_a^2 + (Q - F_a)^2 \leq \lfloor Q^2 / 2 \rfloor$.
    We can verify that~$F_a^2 + (Q - F_a)^2 = Q^2/2 + 2(F_a - Q/2)^2$ by algebraic simplification.
    Thus, we have~$Q^2 / 2 + 2(F_a - Q/2)^2 \leq \lfloor Q^2 / 2 \rfloor$.
    It follows that~$F_a - Q/2 = 0$ (and that~$Q$ is even).
    Hence, we have~$F_a = F_b = Q/2$.
    Since~$\{x,a\}$ and~$\{x,b\}$ are both contained in~$\TT$ and~$\TT$ is a spanning tree, we have that~$v_s$ is adjacent to either~$a$ or~$b$ in~$\TT$ for each~$s \in S$.
    Thus, the sets~$A \coloneqq \{s \mid \{a,v_s\} \in \TT\}$ and~$B \coloneqq \{s \mid \{b,v_s\} \in \TT\}$ form a partition of~$S$, and their sums are equal since~$F_a = F_b$.
    It follows that~$\mathcal{I}$ is a yes-instance.
\end{proof}

\subsubsection*{Hardness of Approximation for MDST}

The NP-hardness of MDST means that, unless~$\mathrm{P} = \mathrm{NP}$, no polynomial-time algorithm exists that finds an optimum solution on every instance.
A common approach is to allow solutions that are not necessarily optimal but can be proven to lie within a bounded distance from the optimum, giving rise to the field of approximation algorithms.

Let~$\mathcal{I}$ be an instance of some minimization problem $\Pi$.
We write~$\OPT(\mathcal{I})$ to refer to the cost of an optimal solution to~$\mathcal{I}$.
A $\rho$-approximation algorithm (with~$\rho \geq 1$) for $\Pi$ is a polynomial-time algorithm that, given any instance~$\mathcal{I}$ of~$\Pi$, produces a solution of cost at most~$\rho \cdot \OPT(\mathcal{I})$.
We say that a $\rho$-approximation algorithm approximates~$\Pi$ within a factor of~$\rho$.
If no $\rho$-approximation algorithm exists for a problem~$\Pi$, then we say that~$\Pi$ cannot be approximated within a factor of~$\rho$ in polynomial time.

\begin{thm}\label{sup:thm:mdst-log2-new}
    Unless~$\mathrm{P} = \mathrm{NP}$, there is a constant~$c > 0$ such that MDST cannot be approximated within a factor of~$\rho = c\log^2 N$ in polynomial time, where~$N$ is the number of nodes. This holds even if integer parameters are polynomially bounded by instance size.
\end{thm}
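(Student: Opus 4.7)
The plan is to reduce from \textsc{Minimum Set Cover}, exploiting the known fact that, unless $\mathrm{P} = \mathrm{NP}$, Set Cover admits no polynomial-time approximation within a factor of $(1-\varepsilon)\ln n$ for any constant $\varepsilon > 0$~\cite{dinur2014analytical}. The quadratic dependence of the MDST cost on flows will then amplify this logarithmic inapproximability into a $\log^2 N$ inapproximability, matching the bound in the statement.

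First, I would construct the three-layer instance described in the Methods. Given a Set Cover instance with universe $U=\{u_1,\ldots,u_n\}$ and subsets $S_1,\ldots,S_m \subseteq U$, build a graph $\GG$ with top-layer source nodes $y,z$, a middle-layer consumer node $v_j$ for each subset $S_j$, and a bottom-layer consumer node $w_i$ for each element $u_i$. Include the edges $\{y,v_j\}$ and $\{z,v_j\}$ for every $j$, and $\{v_j,w_i\}$ whenever $u_i \in S_j$; there are no direct edges between the top and the bottom layer. The demands and dissipation constants are chosen so that (i) serving a middle-layer demand directly from $y$ is essentially free, (ii) the $z$--middle edges carry a nontrivial dissipation constant, and (iii) the lower edges are cheap enough that the dominant cost stems from the $z$--middle layer.

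Second, I would establish the structural correspondence between spanning trees of $\GG$ and set covers of $U$. For any radial configuration, each middle node $v_j$ is either reached directly from $y$ or used as a ``pass-through'' for flow from $z$; acyclicity forbids it from being simultaneously connected to both $y$ and $z$. Because no top--bottom edges exist, every bottom consumer $w_i$ must be served via some pass-through node $v_j$ containing $u_i$, so the set of pass-through indices forms a cover $\CC \subseteq \{1,\ldots,m\}$ of $U$. Conversely, any cover $\CC$ yields a feasible spanning tree. Calibrating constants so that the cost contributed by a pass-through node $v_j$ scales as (flow through it)$^2$ and summing, a cover of size $k$ yields MDST cost of order $\Theta(k^2)$ (up to a controlled additive term and assuming demands are balanced as evenly as the cover structure allows).

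Third, I would push the approximation factor through. If an algorithm approximates MDST within a factor of $\rho$, it returns a configuration whose induced cover has size at most $O(\sqrt{\rho})$ times the optimum, so $\rho = c\log^2 N$ would yield a $\Theta(\log N)$-approximation for Set Cover; taking $N = \mathrm{poly}(n,m)$ and invoking Dinur--Steurer~\cite{dinur2014analytical} contradicts $\mathrm{P} \neq \mathrm{NP}$. The main obstacle will be the careful calibration in step three: I must ensure that (a) the non-pass-through cost terms are dominated by the $\Theta(k^2)$ term so that the cost is a \emph{tight} quadratic proxy for cover size, (b) all integer demands and dissipation constants remain polynomially bounded in $n+m$ as required by the theorem, and (c) the gap created is preserved under the square-root when translating back to Set Cover. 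A natural way to handle (a) is to scale the demands of middle nodes substantially larger than those of bottom nodes while keeping only the $\{z,v_j\}$ edges with positive dissipation; for (b), use the polynomial bound on parameters already present in the Dinur--Steurer hardness statement.
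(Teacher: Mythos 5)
Your overall strategy coincides with the paper's: a reduction from \textsc{Set Cover} to a three-layer network with two sources, an identification of ``pass-through'' middle nodes with a cover, and a $\sqrt{\rho}$ translation of approximation factors that turns a $c\log^2 N$ guarantee for MDST into a $\Theta(\log(\nu+\mu))$ guarantee for \textsc{Set Cover}. However, there is a genuine gap in your calibration that would make the reduction fail as stated. You place the nonzero dissipation constants on the $z$--middle edges and declare the $y$--middle edges ``essentially free.'' The flow leaving $z$ has a fixed total magnitude $F$ determined by the demands; if it is routed through $k$ pass-through nodes, the dissipation $\sum_j f_j^2$ subject to $\sum_j f_j = F$ is \emph{minimized} by splitting evenly, giving roughly $F^2/k$, which \emph{decreases} as $k$ grows. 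So your instance rewards using \emph{more} pass-through nodes, not fewer, and the claimed $\Theta(k^2)$ cost for a cover of size $k$ does not hold; the MDST optimum would not encode a minimum set cover. The paper avoids this by doing the opposite: only edges incident to $y$ carry dissipation constant $1$ (including an edge $\{y,z\}$, which your construction omits), the middle layer consists of exactly $|\mathfrak{f}_y|$ unit-demand nodes, and the point is that each pass-through node is \emph{unavailable} to absorb one unit of $y$'s injection, so $k$ pass-through nodes force $k$ surplus units of $y$'s flow onto the single edge $\{y,z\}$ at cost $k^2$. That is where the quadratic amplification actually comes from.

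A second, related gap is your item (a): you correctly identify that the baseline (non-pass-through) cost must be dominated by the $k^2$ term, but you offer no mechanism. With a single copy of the gadget the unavoidable baseline cost is on the order of the number of subsets $\mu$, which swamps $\OPT(\mathcal{I})^2$ whenever the optimal cover is small, destroying the multiplicative gap. The paper resolves this by instantiating $\mu$ identical copies of the middle and bottom layers, so that the middle layer has $\mu^2$ nodes, the pass-through penalty becomes $\mu^2|x|^2$, and the baseline $\mu^2$ is absorbed (this is exactly what makes $\OPT(\mathcal{I}') \leq 2\mu^2\OPT(\mathcal{I})^2$ and $C(x') \geq \mu^2|x|^2$ compatible). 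The paper explicitly notes that this amplification is crucial; without it, or an equivalent device, Step 3 of your outline cannot be completed. Your Step 4 (the square-root translation and the choice $N \leq (\nu+\mu)^3$) is sound and matches the paper once the two issues above are repaired.
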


We devote the remainder of this section to the proof, in which we reduce the \textsc{Set Cover} problem to MDST.
(Under Crescenzi's taxonomy~\cite{crescenzi1997short}, we may classify the reduction as an A-reduction.)

\optimizationproblem{Set Cover}{prob:msc}
{A ground set (``universe'')~$U= \{ u_1, \dots, u_\nu \}$ and a collection of subsets~$\mathcal{S} = \{S_1, \dots, S_\mu\}$ with~$S_i \subseteq U$ for every~$i \in \{1, \dots, \mu\}$.}
{A subcollection~$\mathcal{S}' \subseteq \mathcal{S}$ that covers every element of the ground set, i.e.,~$\bigcup_{S \in \mathcal{S}'} S = U$.}
{Minimize~$|\mathcal{S}'|$.}

\textsc{Set Cover} is known to be hard to approximate.
For every~$\epsilon > 0$, \textsc{Set Cover} cannot be approximated within a factor of~$(1-\epsilon) \ln \nu$ in polynomial time unless~$\mathrm{P} = \mathrm{NP}$, where~$\nu = \lvert U \rvert$ \cite{moshkovitz2015projection, dinur2014analytical}.
This holds even if the number of subsets~$\mu$ is bounded by some polynomial in~$\nu$, and hence there exists a constant~$c > 0$ such that \textsc{Set Cover} cannot be approximated within a factor of~$c \log (\nu + \mu)$ in polynomial time, unless~$\mathrm{P} = \mathrm{NP}$~\cite{escoffier2006completeness,chlebik2008approximation}.
We restrict our attention to the nontrivial instances of \textsc{Set Cover}. These are instances with $U \neq \emptyset$ that have a solution (each element of $U$ appears in at least one subset in $\mathcal{S}$). 
\\
\paragraph*{\underline{Proof Outline.}} The proof proceeds by contradiction. 
Assume the theorem does not hold, i.e., MDST can be approximated ``well'', admitting near-optimal solutions in polynomial time.
We then perform a reduction from \textsc{Set Cover} to MDST that implies that \textsc{Set Cover} can also be approximated ``well'', contradicting its known hardness unless~$\mathrm{P} = \mathrm{NP}$.
We organize the proof into four main steps.
For an illustration of the first three steps, see Fig.~\ref{sup:fig:scheme-new}.

\begin{figure}[t]
    \centering
    
    \begin{tikzpicture}[yscale=-1]
        \tikzmath{
            \width = 16;
            \height = 2.25;
            \step = 0.15;
            \separation = 1;
            \offset = 3;
        }
    
        \begin{scope}
            \coordinate (a) at (0, 0);
            \coordinate (b) at (\width, 0);
            \coordinate (c) at (\width, \height);
            \coordinate (d) at (0, \height);
        
            \node[draw, anchor=north west] at (0,0) {Step 1};
        
            \node at (0.5*\width, 0.5*\height) {
                \begin{tikzpicture}[yscale=-1]
                    \node[align=center] (l1) at (\offset, 0) {Instance of\\ \textsc{Set Cover}};
                    \node[align=center] (l2) at (\offset, \separation) {\LARGE $\mathcal{I}$\vphantom{$\mathcal{I}'$}};
            
                    \node[align=center] (r1) at (\width-\offset, 0) {Instance of\\ MDST};
                    \node[align=center] (r2) at (\width-\offset, \separation) {\LARGE $\mathcal{I}'$};
        
                    \draw[thick,-{Stealth[length=2mm]}] (l2.east) -- node[above=2pt,align=center] {Polynomial-time forward mapping (Mapping~\ref{sup:cons:sc2mdst-new})} (r2.west);
                \end{tikzpicture}
            };
            \draw (a) -- (b) -- (c) -- (d) -- cycle;
        \end{scope}

        \begin{scope}[yshift={(\height+\step)*1cm}]
            \coordinate (a) at (0, 0);
            \coordinate (b) at (\width, 0);
            \coordinate (c) at (\width, \height);
            \coordinate (d) at (0, \height);
        
            \node[draw, anchor=north west] at (0,0) {Step 2};
        
            \node at (0.5*\width, 0.5*\height) {
                \begin{tikzpicture}[yscale=-1]
                    \node[align=center] (l1) at (\offset, 0) {A solution to $\mathcal{I}$\\ (a collection of subsets)};
                    \node[align=center] (l2) at (\offset, \separation) {\LARGE $x$\vphantom{$x'$}};
            
                    \node[align=center] (r1) at (\width-\offset, 0) {A solution to $\mathcal{I}'$\\ (a spanning tree)};
                    \node[align=center] (r2) at (\width-\offset, \separation) {\LARGE $x'$};
        
                    \draw[thick,{Stealth[length=2mm]}-] (l2.east) -- node[above=2pt,align=center] {Polynomial-time backward mapping (Mapping~\ref{sup:cons:mdst2sc-new})} (r2.west);
                \end{tikzpicture}
            };
            \draw (a) -- (b) -- (c) -- (d) -- cycle;
        \end{scope}
    
        \begin{scope}[yshift={(2*\height+2*\step)*1cm}]
            \coordinate (a) at (0, 0);
            \coordinate (b) at (\width, 0);
            \coordinate (c) at (\width, \height);
            \coordinate (d) at (0, \height);
        
            \node[draw, anchor=north west] at (0,0) {Step 3};
        
            \node at (0.5*\width, 0.5*\height) {
                \begin{tikzpicture}[yscale=-1]
                    \node[align=center] (l2) at (\offset, 0) {\large $|x| \leq \sqrt{2 \rho} \OPT(\mathcal{I})$};
            
                    \node[align=center] (r2) at (\width-\offset, 0) {\large $C(x') \leq \rho \OPT(\mathcal{I}')$\vphantom{$\sqrt{2 \rho}$}};
        
                    \draw[thick,implies-,double equal sign distance] (l2.east) -- node[above=2pt,align=center] {Implies (Lemma~\ref{sup:lem:final-new})} (r2.west);
                \end{tikzpicture}
            };
            \draw (a) -- (b) -- (c) -- (d) -- cycle;
        \end{scope}
    \end{tikzpicture}

    \caption{
        Illustration of the first three steps of the proof of Theorem~\ref{sup:thm:mdst-log2-new}.
    }
    \label{sup:fig:scheme-new}
\end{figure}
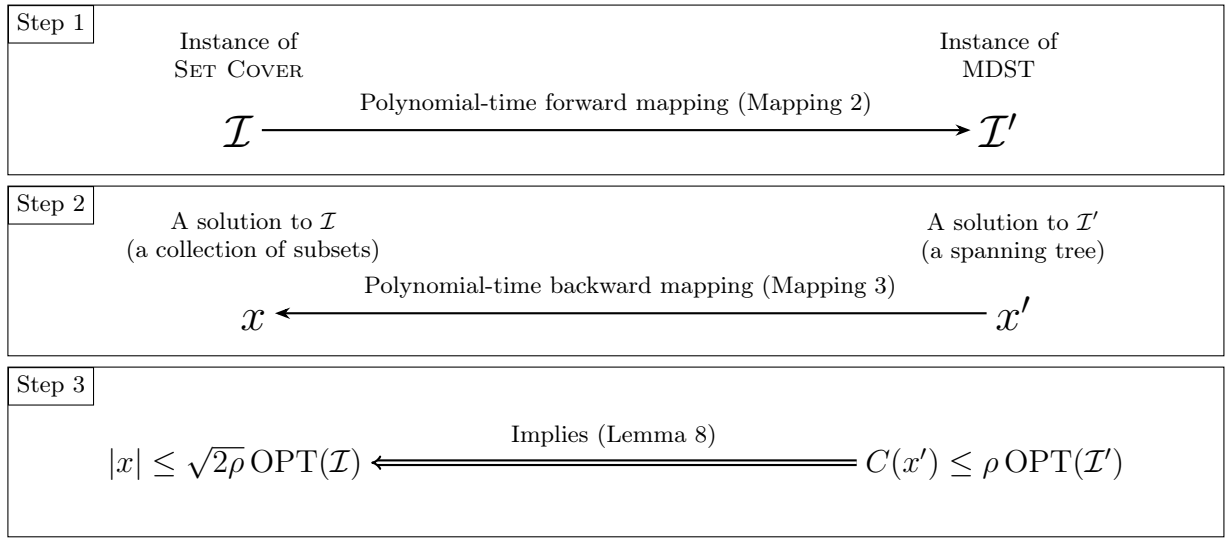

\begin{enumerate}
    \item We provide a polynomial-time ``forward'' mapping from any instance $\mathcal{I}$ of \textsc{Set Cover} to an instance $\mathcal{I}^\prime$ of MDST. 
    The idea is that, by assumption, we can approximate MDST ``well'', which allows us to efficiently compute a ``good'' solution~$x'$ to~$\mathcal{I}'$.
    \item We provide a polynomial-time ``backward'' mapping between solutions: given any solution~$x'$ to~$\mathcal{I}'$, it produces a solution~$x$ to~$\mathcal{I}$. The idea is to transform a ``good'' solution~$x'$ for MDST into a ``good'' solution~$x$ for \textsc{Set Cover}.
    \item We formally prove that a ``good'' solution~$x'$ results in a ``good'' solution~$x$. 
    Specifically, we show that, if~$C(x') \leq \rho \OPT(\mathcal{I}')$ holds for some~$\rho \geq 1$, then~$|x| \leq \sqrt{2\rho} \OPT(\mathcal{I})$.
    \item We derive a contradiction using the insights from the previous steps. More precisely, we argue that sequential application of the forward mapping, a ``good'' approximation algorithm for MDST, and the backward mapping yields a ``good'' approximation algorithm for \textsc{Set Cover}. But such an algorithm cannot exist unless~$\mathrm{P} = \mathrm{NP}$.
\end{enumerate}

\paragraph*{\underline{Step 1.}}
We now present the forward mapping that transforms any instance~$\mathcal{I}$ of \textsc{Set Cover} into an instance~$\mathcal{I}'$ of MDST.
For technical reasons, the graph of the constructed instance~$\mathcal{I}'$ contains~$\mu$ ``copies'', indexed by $\ell$, of identical structure.
Note that the two source nodes~$y$ and~$z$ do not belong to any particular copy.

\begin{cons}[Forward Mapping]\label{sup:cons:sc2mdst-new}
    Let~$\mathcal{I}$ be an instance of \textsc{Set Cover} with ground set~$U = \{u_1, \dots, u_\nu\}$ and a collection of subsets~$\mathcal{S} = \{S_1, \dots, S_\mu\}$.
    We construct an instance~$\mathcal{I}'$ of MDST with graph~$\GG = (\VV, \EE)$ as follows (see Fig.~\ref{sup:fig:log_inapprox-new} for an illustration).

    \begin{figure}[t]
        \centering
        \includegraphics[width=1\columnwidth]{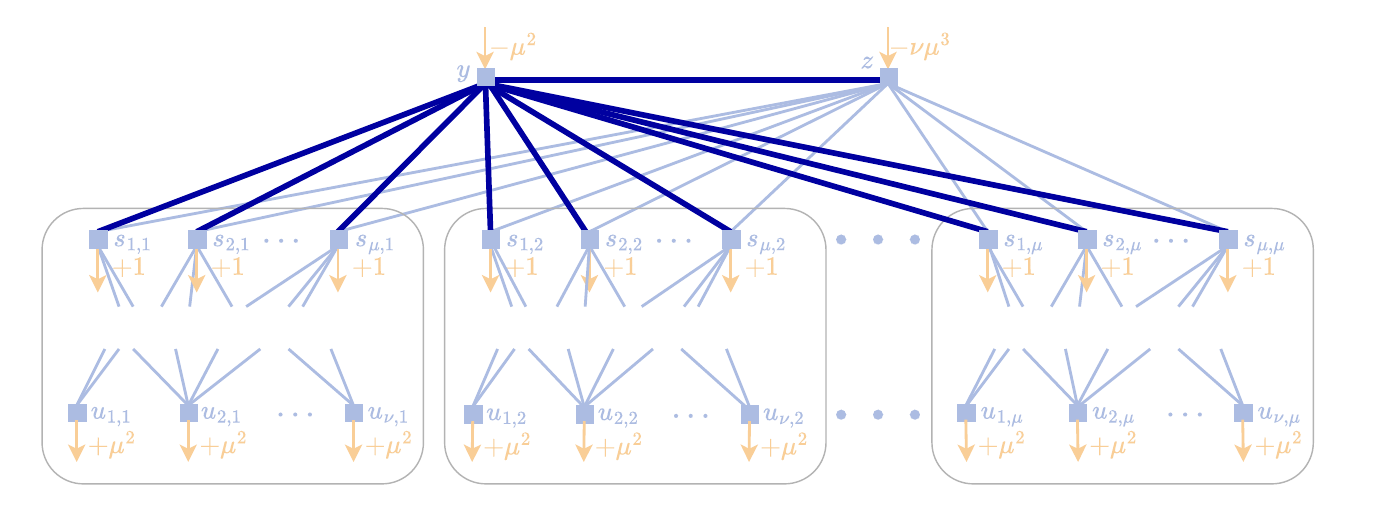}
        \caption{The graph~$\GG$ is drawn in blue. Thin edges have a dissipation constant of~$0$, whereas darker thick edges have a dissipation constant of~$1$. Demands are shown in yellow. The frames mark copies with equal structure.}
        \label{sup:fig:log_inapprox-new}
    \end{figure}

    We define a set~$\VV_\mathcal{S}$ of nodes corresponding to the elements of~$\mathcal{S}$, and a set~$\VV_U$ of nodes corresponding to the elements of~$U$.
    Precisely, let~$\VV_{\mathcal{S},\ell} \coloneqq \bigcup_{i=1}^\mu \{s_{i,\ell}\}$ for each~$\ell \in \{1, \dots, \mu\}$, and let~$\VV_\mathcal{S} \coloneqq \bigcup_{\ell=1}^\mu \VV_{\mathcal{S},\ell}$.
    Similarly, let~$\VV_{U,\ell} \coloneqq \bigcup_{j=1}^\nu \{u_{j,\ell}\}$ for each~$\ell \in \{1, \dots, \mu\}$, and let~$\VV_U \coloneqq \bigcup_{\ell=1}^\mu \VV_{U,\ell}$.
    The full node set is then~$\VV \coloneqq \{y,z\} \cup \VV_\mathcal{S} \cup \VV_U$.
    We assign the following demands to the nodes: $\mathfrak{f}_y \coloneqq - \mu^2$, $\mathfrak{f}_z \coloneqq - \nu \mu^3$, $\mathfrak{f}_v \coloneqq 1$ for every~$v \in \VV_\mathcal{S}$, and~$\mathfrak{f}_v \coloneqq \mu^2$ for every~$v \in \VV_U$.
    
    We then connect the nodes as follows: the edge sets~$\EE_y \coloneqq \{ \{y, v\} \mid v \in \VV_\mathcal{S} \}$ and~$\EE_z \coloneqq \{ \{z, v\} \mid v \in \VV_\mathcal{S} \}$ connect~$y$ and~$z$, respectively, to the vertices of~$\VV_\mathcal{S}$.
    The edge set~$\EE_\SU \coloneqq \bigcup_{\ell=1}^\mu \{ \{ s_{i, \ell}, u_{j, \ell} \} \mid u_j \in S_i \}$ encodes membership of the elements of the ground set~$U$ within the subsets of the collection~$\mathcal{S}$. 
    The full edge set is then ~$\EE \coloneqq \{\{y,z\}\} \, \cup \, \EE_y \, \cup \, \EE_z \, \cup \, \EE_\SU$. Finally, for each edge~$e \in \EE$, we choose the dissipation constant~$\alpha_e \coloneqq 1$ if~$e$ is incident to~$y$, and~$\alpha_e \coloneqq 0$ otherwise.
\end{cons}

Note that this forward mapping can be performed in polynomial time, and that~$\GG$ is connected.
Further note that~$|\VV_\mathcal{S}| = \mu^2$, $|\VV_U| = \nu \mu$, and~$|\VV| = \mu^2 + \nu \mu + 2$.
In particular, all integer parameters are polynomially bounded by $|\VV|$ and hence also by instance size.
For~$\mathcal{I}'$ to be a valid instance of MDST, the equation~$\sum_{v \in \VV} \mathfrak{f}_v = 0$ must hold, which we verify by a simple computation
\begin{displaymath}
    \sum_{v \in \VV} \mathfrak{f}_v = \mathfrak{f}_y + \mathfrak{f}_z + |\VV_\mathcal{S}| \cdot 1 + |\VV_U| \cdot \mu^2 = -\mu^2 - \nu \mu^3 + \mu^2 + \nu \mu^3 = 0.
\end{displaymath}

We discuss the instance~$\mathcal{I}'$ that the forward mapping produces and its optimal solution $x^\prime$ to provide some intuition for the remainder of the proof. Recall that the objective of MDST is to find a spanning tree of minimum cost, meaning that the corresponding radial network configuration causes minimum dissipation.
In pursuing this objective, the edges incident to node~$y$ are critical, as only these edges have nonzero dissipation constants.
At source node~$y$, there is an influx of~$\mu^2$ units of flow, which we must distribute to the other nodes of the network through the edges incident to~$y$.
Since, on each edge, dissipation grows quadratically with the flow, we would like to divide the flow originating at node~$y$ as evenly as possible across the edges incident to~$y$ to minimize cost.
Note that the node set~$\VV_\mathcal{S}$ contains exactly~$\mu^2$ nodes, each with demand~$1$ and adjacent to~$y$.
A natural idea is to send one unit of flow from~$y$ to each node of~$\VV_\mathcal{S}$, creating an essentially perfect division of the flow originating at node~$y$.

However, this naive approach does not capture the full picture and does not produce feasible solutions, since we must also distribute the flow originating at the other source node~$z$ of the network.
At node~$z$, a much larger flow of~$\nu \mu^3$ units enters the network.
Since the nodes in~$\VV_\mathcal{S}$ only consume~$\mu^2$ units of flow in total, most of the flow originating at node~$z$ must ultimately go to the nodes in~$\VV_U$.
However, none of the nodes in~$\VV_U$ is adjacent to~$z$, which means that the flow from node~$z$ must pass through some nodes in~$\VV_\mathcal{S}$ to reach its destinations.
To prevent cycles, none of these ``pass-through'' nodes can be adjacent to node~$y$ (except for at most one).
To see why we would otherwise have cycles, first note that routing any relevant amount of flow originating at node~$z$ through node~$y$ would involve edges with nonzero dissipation factors and hence dramatically increase cost, which implies that each pass-through node is connected to~$z$ by a path that does not contain~$y$.
Thus, if two pass-through nodes are both connected to~$y$ by a direct edge, then there is a cycle because they also have~$y$-independent paths to~$z$.

The central observation is that minimizing cost requires minimizing the number of pass-through nodes.
If there are few pass-through nodes, then many nodes in~$\VV_\mathcal{S}$ remain available to each receive one unit of flow from node~$y$, allowing us to get close to the naive perfect division of the flow originating at node~$y$ that we pondered earlier.
We conclude that we can frame the objective as minimizing the number of pass-through nodes: we seek a minimum number of nodes in~$\VV_\mathcal{S}$ to ``cover'' all nodes in~$\VV_U$.
This establishes the correspondence to the \textsc{Set Cover} problem.
The pass-through nodes take a central role in the backward mapping and the formal analysis.

Lastly, we mention that the purpose of using multiple copies of the same structure is to amplify the penalty incurred by pass-through nodes.
This sharpens the cost gap between solutions of different quality, especially when the number of pass-through nodes is small, and is crucial for the proof.
\\
\paragraph*{\underline{Step 2.}}
We now present the backward mapping that transforms any solution~$x'$ to~$\mathcal{I}'$ into a solution~$x$ to~$\mathcal{I}$.
To simplify the discussion, we introduce some additional notation.
Given a graph~$\mathcal{H} = (\mathcal{W},\mathcal{F})$ and a node~$v \in \mathcal{W}$, the \emph{degree} of~$v$ in~$\mathcal{H}$, written~$\deg_\mathcal{H}(v)$, is the number of nodes adjacent to~$v$ in~$\mathcal{H}$.
Given a node subset~$\mathcal{W}' \subseteq \mathcal{W}$, let~$V_{\geq 2}^\mathcal{H}(\mathcal{W}') \coloneqq \{v \in \mathcal{W}' \mid \deg_\mathcal{H}(v) \geq 2\}$.
In other words, the set~$V_{\geq 2}^\mathcal{H}(\mathcal{W}')$ contains all nodes of~$\mathcal{W}'$ that have a degree of at least two in graph~$\mathcal{H}$.
Recall again that any solution~$x'$ to~$\mathcal{I}'$ is a spanning tree and, in particular, a graph. Essentially, 
the set~$V_{\geq 2}^{x'}(\VV_\mathcal{S})$ contains the pass-through nodes of the solution~$x'$.
This is because all nodes of~$\VV_\mathcal{S}$ that are only adjacent to node~$y$ have a degree of one in~$x'$.

\begin{cons}[Backward Mapping]\label{sup:cons:mdst2sc-new}
    Let~$x'$ be a solution to~$\mathcal{I}'$.
    We construct a solution~$x$ to~$\mathcal{I}$.
    Choose any index~$\ell \in \{1, \dots, \mu\}$ such that the size of~$V_{\geq 2}^{x'}(\VV_{\mathcal{S}, \ell})$ is minimum.
    Then, we select~$x \coloneqq \{S_i \subseteq \mathcal{S} \mid s_{i,\ell} \in V_{\geq 2}^{x'}(\VV_{\mathcal{S}, \ell})\}$ as a solution to~$\mathcal{I}$.
\end{cons}

Note that the mapping can be performed in polynomial time.
Intuitively, the mapping constructs the solution~$x$ by translating pass-through nodes of the solution~$x'$ to elements of the collection of subsets~$\mathcal{S}$.

We now show that~$x$ is indeed a solution to~$\mathcal{I}$ by contradiction. Suppose that~$x$ is not a solution to~$\mathcal{I}$.
Then, there is an element~$u_j \in U$ that is not covered by~$x$, i.e., we have~$u_j \notin S_i$ for every~$S_i \in x$.
Let~$u_j$ be such an element.
Moreover, let~$\ell \in \{1,\dots,\mu\}$ be the index chosen in the computation of~$x$ by the backward mapping.
We recall that~$\EE_\SU = \bigcup_{\ell=1}^\mu \{ \{ s_{i, \ell}, u_{j, \ell} \} \mid u_j \in S_i \}$ (defined within the forward mapping) is part of the edge set~$\EE$ of the graph~$\GG$ of instance~$\mathcal{I}'$.
We observe that for every node~$s_{i,\ell} \in \VV_{\mathcal{S},\ell}$ with~$\{s_{i,\ell},u_{j,\ell}\} \in \EE$ we have~$u_j \in S_i$ and hence~$S_i \notin x$.
By definition of~$x$, this further implies that for every node~$s_{i,\ell} \in \VV_{\mathcal{S},\ell}$ with~$\{s_{i,\ell},u_{j,\ell}\} \in \EE$ we have~$s_{i,\ell} \notin V_{\geq 2}^{x'}(\VV_{\mathcal{S},\ell})$ and hence~$\deg_{x'}(s_{i,\ell}) \leq 1$.
Every path in the tree~$x'$ starting at~$u_{j,\ell}$ begins with some edge~$\{u_{j,\ell}, s_{i,\ell}\}$.
However, every such path ends at~$s_{i,\ell}$ because~$\deg_{x'}(s_{i,\ell}) \leq 1$.
In particular, there is no path from~$u_{j,\ell}$ to~$y$ in~$x'$, implying that~$x'$ is not a spanning tree of~$\GG$.
This contradicts that~$x'$ is a solution to~$\mathcal{I}'$.
\\
\paragraph*{\underline{Step 3.}}
The goal of this step is to prove the implication depicted in Step~3 of Fig.~\ref{sup:fig:scheme-new}, which is formalized in Lemma~\ref{sup:lem:final-new}.
Informally, Lemma~\ref{sup:lem:final-new} asserts that if~$x'$ is a low-cost solution, then so is~$x$.
In preparation for Lemma~\ref{sup:lem:final-new}, we prove Lemma~\ref{sup:lem:deg-new} and Lemma~\ref{sup:lem:opt-cmp-new}.

Lemma~\ref{sup:lem:deg-new} formalizes the idea that pass-through nodes are expensive. Specifically, the square of the number of pass-through nodes is a lower bound on the solution cost for~$\mathcal{I}'$.
Hence, each additional pass-through node becomes progressively more costly, creating an incentive to minimize their number.
We use the notation~$V(\mathcal{H})$ to denote the vertex set of a graph~$\mathcal{H}$.

\begin{lem}\label{sup:lem:deg-new}
    Let~$x'$ be a solution to~$\mathcal{I}'$.
    Then,~$C(x') \geq \big|V_{\geq 2}^{x'}(\VV_\mathcal{S})\big|^2$.
\end{lem}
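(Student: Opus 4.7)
The plan is to root $x'$ at $y$ and exploit that only edges incident to $y$ carry a nonzero dissipation constant. With this rooting, flow conservation reduces the cost to $C(x') = \sum_{w \in T} D_w^2$, where $T$ denotes the set of children of $y$ in $x'$ and $D_w = \sum_{v \in T_w} \mathfrak{f}_v$ is the total demand of the subtree $T_w$ rooted at $w$. Rather than distributing $|P|$ across the $D_w$'s via convexity (which only yields $C(x') \geq |P|^2/|T|$), I will show that a single well-chosen $D_w$ already dominates $|P|$ in absolute value, so that one term of the sum suffices to prove the claim. Write $P := V_{\geq 2}^{x'}(\VV_\mathcal{S})$, let $w_z \in T$ be the unique child of $y$ whose subtree contains $z$, and case-split on whether $\VV_U \subseteq T_{w_z}$.

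In the case $\VV_U \subseteq T_{w_z}$, every other subtree $T_w$ with $w \neq w_z$ is disjoint from $\{z\} \cup \VV_U$ and therefore contained in $\VV_\mathcal{S}$. Since $\VV_\mathcal{S}$ is an independent set in $\GG$, any connected subtree inside $\VV_\mathcal{S}$ is a single vertex, so $T_w = \{w\}$ and $w$ is a leaf of $x'$, which in particular means $w \notin P$. This forces $P \subseteq T_{w_z} \cap \VV_\mathcal{S}$. A direct computation in which the contributions of $\mathfrak{f}_z = -\nu\mu^3$ and $\sum_{v \in \VV_U} \mathfrak{f}_v = \nu\mu \cdot \mu^2 = \nu\mu^3$ cancel exactly yields $D_{w_z} = |T_{w_z} \cap \VV_\mathcal{S}| \geq |P|$, and hence $C(x') \geq D_{w_z}^2 \geq |P|^2$. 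In the complementary case (some $u \in \VV_U$ lies in $T_{w'}$ with $w' \neq w_z$), the absence of $z$ from $T_{w'}$ makes all demands there non-negative, so $u$ alone forces $D_{w'} \geq \mu^2$. Thus $C(x') \geq D_{w'}^2 \geq \mu^4 \geq |P|^2$, using the trivial bound $|P| \leq |\VV_\mathcal{S}| = \mu^2$.

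The main obstacle is the structural step in the first case: justifying that every ``side'' subtree $T_w$ with $w \neq w_z$ collapses to a singleton, so that all pass-through nodes are concentrated in $T_{w_z}$. This relies crucially on the tripartite edge structure built by Mapping~\ref{sup:cons:sc2mdst-new}, in particular the absence of $\VV_\mathcal{S}$-$\VV_\mathcal{S}$ edges and the fact that $\VV_U$ attaches only to $\VV_\mathcal{S}$, so that any connected subgraph avoiding both $z$ and $\VV_U$ must be a single vertex. Once this observation is in place, the algebraic cancellation giving $D_{w_z} = |T_{w_z} \cap \VV_\mathcal{S}|$ is immediate from the demand assignments chosen in the reduction.
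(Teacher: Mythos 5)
Your proof is correct and follows essentially the same route as the paper's: both decompose the cost as the sum of squared total demands over the components of $x'$ minus $y$, split on whether all of $\VV_U$ lies in the component containing $z$, and in that case show the pass-through nodes are confined to that component so its net demand is at least $\lvert V_{\geq 2}^{x'}(\VV_\mathcal{S})\rvert$, while the other case gives the stronger bound $\mu^4$. The only differences are cosmetic — reversed case order, and justifying the confinement via "side subtrees are singleton leaves" rather than "each degree-$\geq 2$ node of $\VV_\mathcal{S}$ keeps a neighbor in $\{z\}\cup\VV_U$ after deleting $y$".
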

\begin{proof}
    Let~$\TT_1, \dots, \TT_k$ be the connected subgraphs (\textit{components}) obtained if we were to delete~$y$ from~$x'$, and let~$\mathcal{C} \coloneqq \{\TT_1, \dots, \TT_k\}$.
    Since exactly the edges incident to~$y$ have nonzero dissipation constant, we have~$C(x') = \sum_{\TT \in \mathcal{C}} (\sum_{v \in V(\TT)} \mathfrak{f}_v)^2$.
    Let~$\TT_z \in \mathcal{C}$ be the component containing~$z$.
    We distinguish between two cases.

    \begin{enumerate}
        \item For the first case, let there be a component~$\TT^\star \in \mathcal{C} \setminus \{ \TT_z \}$ and a node~$v^\star \in \VV_U$ such that~$v^\star \in V(\TT^\star)$.
        Then, using that~$y,z \notin V(\TT^\star)$ and that all nodes except for~$y$ and~$z$ have positive demands, we get
        \begin{displaymath}
        \sum_{v \in V(\TT^\star)} \mathfrak{f}_v
        = \mathfrak{f}_{v^\star} + \sum_{\substack{v \in V(\TT^\star) \\ v \notin \{y,z,v^\star\}}} \mathfrak{f}_v
        \geq \mathfrak{f}_{v^\star}
        = \mu^2.
        \end{displaymath}
        With this, we have
        \begin{displaymath}
        C(x')
        = \sum_{\TT \in \mathcal{C}} \Bigl( \sum_{v \in V(\TT)} \mathfrak{f}_v \Bigr)^2
        \geq \Bigl( \sum_{v \in V(\TT^\star)} \mathfrak{f}_v \Bigr)^2
        \geq ( \mu^2 )^2
        = |\VV_\mathcal{S}|^2
        \geq \big|V_{\geq 2}^{x'}(\VV_\mathcal{S})\big|^2.
        \end{displaymath}
        \item In the second case, every~$v \in \VV_U$ is contained in~$\TT_z$.
        By construction, each node in~$\VV_\mathcal{S}$ only has the following neighbors in~$\GG$: the nodes~$y$,~$z$, and some nodes from~$\VV_U$.
        Hence, when we delete~$y$ from~$x'$, then every node in~$V_{\geq 2}^{x'}(\VV_\mathcal{S})$ is adjacent to~$z$ or some node of~$\VV_U$ in the resulting graph.
        Since~$z$ and all nodes in~$\VV_U$ are included in~$\TT_z$, it follows that all nodes in~$V_{\geq 2}^{x'}(\VV_\mathcal{S})$ are also included in~$\TT_z$. 
        Some additional nodes from~$\VV_\mathcal{S}$ may also be included in~$\TT_z$.
        We get 
        \begin{displaymath}
        \sum_{v \in V(\TT_z)} \mathfrak{f}_v
        \geq \mathfrak{f}_z + \big|V_{\geq 2}^{x'} (\VV_\mathcal{S})\big| \cdot 1 + |\VV_U| \cdot \mu^2
        = -\nu \mu^3 + \big|V_{\geq 2}^{x'}(\VV_\mathcal{S})\big| + \nu \mu^3
        = \big|V_{\geq 2}^{x'}(\VV_\mathcal{S})\big|,
        \end{displaymath}
        where we use that nodes in~$\VV_\mathcal{S}$ are consumer nodes, meaning they have positive flow demand.
        
        Hence, we find that
        \begin{displaymath}
        C(x')
        = \sum_{\TT \in \mathcal{C}} \Bigl( \sum_{v \in V(\TT)} \mathfrak{f}_v \Bigr)^2
        \geq \Bigl( \sum_{v \in V(\TT_z)} \mathfrak{f}_v \Bigr)^2
        = \big|V_{\geq 2}^{x'}(\VV_\mathcal{S})\big|^2.
    \end{displaymath}
    \end{enumerate}
\end{proof}

The next lemma provides an upper bound on how much larger the optimum of the constructed instance~$\mathcal{I}'$ can be relative to the optimum of the original instance~$\mathcal{I}$.
This allows us to relate the cost of solutions to~$\mathcal{I}'$ to solutions to~$\mathcal{I}$.

\begin{lem}\label{sup:lem:opt-cmp-new}
    The optimal values satisfy the inequality $\OPT(\mathcal{I}') \leq 2\mu^2 \OPT(\mathcal{I})^2$.
\end{lem}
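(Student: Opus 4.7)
The plan is to exhibit an explicit spanning tree $\TT^*$ of $\GG$ whose cost is at most $2\mu^2\,\OPT(\mathcal{I})^2$, built directly from an optimum set cover. Because the only edges with nonzero dissipation constant are those incident to the source $y$, the entire cost will come from a controllable number of $y$-edges; the strategic idea is to use the \emph{same} cover in all $\mu$ copies so that the network routes the flow from $z$ to the element nodes through just a few pass-through set nodes, leaving almost all set nodes as cheap leaves of $y$.

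Concretely, I would fix an optimum cover $\mathcal{S}^* \subseteq \mathcal{S}$ of size $k := \OPT(\mathcal{I})$, with index set $I \subseteq \{1,\dots,\mu\}$, and designate $\{s_{i,\ell} : i \in I,\, \ell \in \{1,\dots,\mu\}\}$ as the pass-through nodes. The tree $\TT^*$ then contains: (i) every edge of $\EE_y$ to a non-pass-through set node (making these $\mu^2 - \mu k$ nodes leaves of $y$); (ii) every edge of $\EE_z$ to a pass-through set node; (iii) for each element node $u_{j,\ell}$, a single edge $\{s_{i,\ell}, u_{j,\ell}\} \in \EE_\SU$ with $i \in I$ and $u_j \in S_i$, which exists precisely because $\mathcal{S}^*$ covers $U$; and (iv) the edge $\{y,z\}$ bridging the two sides.

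Next I would verify that $\TT^*$ is a spanning tree: the edge count equals $1 + (\mu^2 - \mu k) + \mu k + \nu\mu = |\VV| - 1$, and acyclicity is immediate since the non-pass-through set nodes and all element nodes are leaves, the pass-through nodes hang off $z$, and $\{y,z\}$ is the unique link between the $y$-component and the $z$-component.

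The only delicate step is the cost bookkeeping, and this is the real obstacle. Each of the $\mu^2 - \mu k$ leaf edges at $y$ carries unit flow and contributes $1$. For the bridge edge $\{y,z\}$, Kirchhoff's law on the $z$-side gives total demand $\mu k \cdot 1 + \nu\mu \cdot \mu^2 = \mu k + \nu \mu^3$ while $z$ injects $\nu\mu^3$, so $\{y,z\}$ must carry exactly $\mu k$ units inward. Hence $C(\TT^*) = (\mu^2 - \mu k) + \mu^2 k^2 \leq \mu^2 + \mu^2 k^2 \leq 2\mu^2 k^2$ since $k \geq 1$, giving $\OPT(\mathcal{I}') \leq C(\TT^*) \leq 2\mu^2\,\OPT(\mathcal{I})^2$. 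The crucial observation underlying the obstacle is that synchronizing the cover across all $\mu$ copies keeps the $y$-to-$z$ deficit at $\mu k$ rather than scaling with $\mu^2$; a naive construction that picks an independent cover per copy would blow up the bridge flow and destroy the quadratic bound.
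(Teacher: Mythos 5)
Your construction is exactly the paper's: connect every cover set node to $z$, every non-cover set node to $y$ as a leaf, attach each element node to one covering set node, add the bridge $\{y,z\}$, and compute the cost as $(\mu^2-\mu k)\cdot 1^2+(\mu k)^2\le 2\mu^2k^2$. The proposal is correct and takes essentially the same approach as the paper.
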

\begin{proof}
    Let~$x$ be an optimal solution to~$\mathcal{I}$.
    We prove the claim by constructing a solution~$x'$ to~$\mathcal{I}'$ such that~$C(x') \le 2\mu^2 |x|^2$.

    Let~$\EE_y' \coloneqq \bigcup_{\ell = 1}^\mu \{ \{ y, s_{i,\ell} \} \mid S_i \notin x \}$ and~$\EE_z' \coloneqq \bigcup_{\ell = 1}^\mu \{ \{ z, s_{i,\ell} \} \mid S_i \in x \}$.
    Let~$g : U \rightarrow x$ be an auxiliary function that maps each element~$u \in U$ to a subset~$S \in x$.
    Such a function exists since~$x$ is a solution to~$\mathcal{I}$.
    Then, let~$\EE_\SU' \coloneqq \bigcup_{\ell = 1}^\mu \{ \{ s_{i,\ell}, u_{j, \ell} \} \mid S_i = g(u_j) \}$.
    Finally, we set~$\EE' \coloneqq \{\{y,z\}\} \cup \EE_y' \cup \EE_z' \cup \EE_\SU'$.  

    By construction,~
    the resulting subgraph $x' = (\VV, \EE')$ is a spanning tree.
    First, we observe that~$x'$ is connected: node~$y$ is adjacent to node~$z$, every node in~$\VV_\mathcal{S}$ is adjacent to~$y$ or~$z$, and every node in~$\VV_U$ is adjacent to a node in~$\VV_\mathcal{S}$.
    Second, we argue that~$x'$ has~$|\VV| - 1$ edges:
    \begin{displaymath}
        |\EE'| = 1 + |\EE_y'| + |\EE_z'| + |\EE_\SU'| = 1 + |\VV_\mathcal{S}| + |\VV_U| = 1 + \mu^2 + \nu \mu = |\VV| - 1.
    \end{displaymath}
    Hence,~$x'$ is a spanning tree and a solution to $\mathcal{I}'$. 

    Next, we show that~$C(x') \leq 2 \mu^2 |x|^2$.
    For determining~$C(x')$, it suffices to consider the edges in~$\EE_y'$ and the edge~$\{y,z\}$ because only edges incident to~$y$ have nonzero dissipation constant.
    We start with~$\EE_y'$.
    Let~$\{ y, s_{i,\ell} \} \in \EE_y'$.
    By definition of~$\EE_y'$, we have~$S_i \notin x$.
    From the definitions of~$\EE_z'$ and~$\EE_\SU'$, we see that~$s_{i,\ell}$ has degree one (is a \emph{leaf}) in~$x'$.
    Since~$s_{i,\ell} \in \VV_\mathcal{S}$, we additionally have~$\mathfrak{f}_{s_{i,\ell}} = 1$.
    It follows that~$\{ y, s_{i,\ell} \}$ carries one unit of flow, and hence every edge in~$\EE_y'$ carries one unit of flow.
    Then, the edge~$\{y,z\}$ carries~$|\mathfrak{f}_y| - |\EE_y'|$ units of flow due to Kirchhoff's current law (flow conservation).
    This yields the following result
    \begin{displaymath}
        C(x')
        = |\EE_y'| \cdot 1^2 + 1 \cdot (|\mathfrak{f}_y| - |\EE_y'|)^2
        \, \overset{(\ast)}{=} \, \mu^2 - \mu |x| + (\mu^2 - \mu^2 + \mu |x|)^2
        \leq \mu^2 |x|^2 + \mu^2
        \, \overset{(\ast\ast)}{\leq} \, 2 \mu^2 |x|^2
    \end{displaymath}
    where, at~$(\ast)$, we use that~$|\EE_y'| = \mu (\mu - |x|) = \mu^2 - \mu |x|$, and, at~$(\ast\ast)$, we use that~$|x| \geq 1$ for nontrivial instances of \textsc{Set Cover}.
    Finally, we note that~$\OPT(\mathcal{I}') \leq C(x') \leq 2 \mu^2 |x|^2 = 2 \mu^2 \OPT(\mathcal{I})^2$.
\end{proof}

With the help of the previous two lemmas, we can give a concise proof of Lemma~\ref{sup:lem:final-new}.
As already mentioned, this lemma states that a low-cost solution~$x'$ translates to a low-cost solution~$x$.

\begin{lem}\label{sup:lem:final-new}
    Let~$x'$ be a solution to~$\mathcal{I}'$.
    Let~$x$ be the solution to~$\mathcal{I}$ obtained from~$x'$ by the backward mapping.
    If~$C(x') \leq \rho \OPT(\mathcal{I}')$ for some~$\rho \geq 1$, then~$|x| \leq \sqrt{2 \rho} \OPT(\mathcal{I})$.
\end{lem}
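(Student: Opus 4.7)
The plan is to chain together the two lemmas already proved (Lemma~\ref{sup:lem:deg-new} and Lemma~\ref{sup:lem:opt-cmp-new}) with an averaging argument exploiting the $\mu$ identical copies built into Mapping~\ref{sup:cons:sc2mdst-new}. The key observation is that the backward mapping (Mapping~\ref{sup:cons:mdst2sc-new}) picks the copy index $\ell \in \{1,\dots,\mu\}$ that minimizes $|V_{\geq 2}^{x'}(\VV_{\mathcal{S},\ell})|$, and by construction $|x|$ equals this minimum. Since the minimum is at most the average and the sets $\VV_{\mathcal{S},\ell}$ partition $\VV_\mathcal{S}$, I obtain
\begin{equation*}
    \mu \cdot |x| \;=\; \mu \cdot \min_{\ell} \big|V_{\geq 2}^{x'}(\VV_{\mathcal{S},\ell})\big| \;\leq\; \sum_{\ell=1}^{\mu} \big|V_{\geq 2}^{x'}(\VV_{\mathcal{S},\ell})\big| \;=\; \big|V_{\geq 2}^{x'}(\VV_\mathcal{S})\big|.
\end{equation*}

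Squaring and applying Lemma~\ref{sup:lem:deg-new} gives $\mu^2|x|^2 \leq |V_{\geq 2}^{x'}(\VV_\mathcal{S})|^2 \leq C(x')$. Combining this with the hypothesis $C(x') \leq \rho \OPT(\mathcal{I}')$ and the upper bound from Lemma~\ref{sup:lem:opt-cmp-new}, namely $\OPT(\mathcal{I}') \leq 2\mu^2 \OPT(\mathcal{I})^2$, yields $\mu^2|x|^2 \leq 2\rho\mu^2\OPT(\mathcal{I})^2$. Dividing by $\mu^2$ and taking square roots gives $|x| \leq \sqrt{2\rho}\,\OPT(\mathcal{I})$, as desired.

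The only subtle point, and the main reason the $\mu$ copies were introduced in the forward mapping in the first place, is the averaging step: without replication, the bound $|V_{\geq 2}^{x'}(\VV_\mathcal{S})|^2 \geq |x|^2$ would lose a factor of $\mu^2$ relative to $\OPT(\mathcal{I}')$, and the resulting inequality would be too weak to contradict the inapproximability of \textsc{Set Cover} in the subsequent Step~4. The replication ensures that the number of pass-through nodes scales at least linearly in $\mu$ for any valid cover, amplifying the cost gap enough that a $\rho$-approximation for MDST translates into a $\sqrt{2\rho}$-approximation for \textsc{Set Cover}. I do not foresee technical obstacles beyond carefully invoking the minimum-vs-average inequality on the correct sets; the computation itself is a short chain of substitutions.
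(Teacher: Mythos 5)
Your proof is correct and follows exactly the same chain as the paper's: the minimum-over-copies is bounded by the average $\frac{1}{\mu}\lvert V_{\geq 2}^{x'}(\VV_\mathcal{S})\rvert$, then Lemma~\ref{sup:lem:deg-new}, the hypothesis $C(x') \leq \rho \OPT(\mathcal{I}')$, and Lemma~\ref{sup:lem:opt-cmp-new} are applied in the same order. The only difference is cosmetic (you square and divide at the end rather than carrying square roots through the chain).
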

\begin{proof}
    \begin{displaymath}
        |x|
        \, \overset{\text{Map \ref{sup:cons:mdst2sc-new}}}{=}
        \min_{\ell=1}^\mu |V_{\geq 2}^{x'}(\VV_{\mathcal{S},\ell})| \,
        \leq \frac{1}{\mu} |V_{\geq 2}^{x'}(\VV_\mathcal{S})|
        \, \overset{\text{Lem \ref{sup:lem:deg-new}}}{\leq} \, \frac{1}{\mu} \sqrt{C(x')}
        \leq \frac{1}{\mu} \sqrt{\rho \OPT(\mathcal{I'})}
        \, \overset{\text{Lem \ref{sup:lem:opt-cmp-new}}}{\leq} \, \frac{1}{\mu} \sqrt{2 \rho \mu^2 \OPT(\mathcal{I})^2}
        = \sqrt{2 \rho} \OPT(\mathcal{I})
    \end{displaymath}
\end{proof}

Notably, the potential gap to the optimum tightens: if~$x'$ deviates from the optimum by a factor of at most~$\rho$, then~$x$ deviates from the optimum by a factor of at most~$\sqrt{2\rho}$.
\\
\paragraph*{\underline{Step 4.}}
With the help of Lemma~\ref{sup:lem:final-new}, we can finally prove Theorem~\ref{sup:thm:mdst-log2-new}.
We denote the number of nodes of a MDST instance by~$N$.
Assume for contradiction that MDST can be approximated within a factor of~$c \log^2 N$ for every~$c > 0$.
Let us consider some~$c > 0$ to be fixed.
Then, given an instance~$\mathcal{I}$ of \textsc{Set Cover}, we perform the following steps.

First, we produce an instance~$\mathcal{I}'$ of MDST using the polynomial-time forward mapping.
By assumption, we can compute a solution~$x'$ to~$\mathcal{I}'$ with~$C(x') \leq c \log^2(N)\OPT(\mathcal{I}')$ in polynomial time.
Then, we run the polynomial-time backward mapping with~$x'$ as input to get a solution~$x$ to~$\mathcal{I}$.
Note that executing the entire sequence of steps only takes polynomial time.

By setting~$\rho \coloneqq c \log^2 N$, Lemma~\ref{sup:lem:final-new} implies that~$|x| \leq \sqrt{2c \log^2 N} \OPT(\mathcal{I})$.
Hence,~$|x| \leq \sqrt{2c} \log(N) \OPT(\mathcal{I})$.
Next, recall that~$N = \mu^2 + \nu \mu + 2$.
Since~$\nu \geq 1$ and~$\mu \geq 1$ for nontrivial instances of \textsc{Set Cover}, it follows that~$N \leq (\nu + \mu)^3$.
Thus,~$|x| \leq \sqrt{2c} \log((\nu + \mu)^3) \OPT(\mathcal{I}) = 3\sqrt{2c} \log(\nu + \mu) \OPT(\mathcal{I})$.
By appropriately choosing~$c$, the term~$3\sqrt{2c}$ can be made arbitrarily close to zero, and hence \textsc{Set Cover} can be approximated within a factor of~$\tilde{c} \log(\nu + \mu)$ for any~$\tilde{c} > 0$, a contradiction unless $\mathrm{P} = \mathrm{NP}$.
This concludes the proof of Theorem~\ref{sup:thm:mdst-log2-new}.

\subsection{Reduction of $\GG$ to a Graph with Minimum Degree 2}
\label{sec:sup:red_graph_deg_2}

We now briefly discuss how nodes with only one adjacent node can be removed when their corresponding flow demands/injections are transferred to their neighbors (cf.~Ref.~\cite{silva_qubo_2023}). The single-edge incident can not be reconfigured; it has to be in every spanning tree $\TT$ of $\GG$. Hence, removing these nodes and edges does not provide any additional degree of freedom to the optimization problem. However, the resulting reduction becomes handy as it significantly reduces the number of variables $y_{e,n}$ and thus e.g., the number of quantum registers as well as the memory for a classical computer. 

Let $n \in \VV \setminus \{n_0 \}$ be a node with only one adjacent neighbor $m$. Then, the oriented edge $e=(m,n)$ must be in all spanning trees of $\GG$, as otherwise the node $n$ would not be connected. The flow on $e$ is thus fixed as $\mathfrak{f}_n$ for all configurations. Consequently, the operating cost for edge $e$ becomes an offset in the MDST cost function. Hence, we can remove node $n$ and set the flow demand of node $m$ as $\mathfrak{f}_n + \mathfrak{f}_m$ to obtain an equivalent optimization problem up to the offset. A similar reduction can be done if $n = n_0$. Then, the node $m$ becomes the new root with $\mathfrak{f}_m = \sum_{n^\prime \neq m} \mathfrak{f}_{n^\prime} = \mathfrak{f}_{n_0} - \mathfrak{f}_m$. Repeating these steps, we can reduce $\GG$ with flow demands/injections $\mathfrak{f}_n$ to a new graph with minimum degree 2 and updated injections. 

\subsection{MIP Formulation}
\label{sec:sup:mip_formulation}

Combining Eq.~\eqref{net_rec:eq:flows_in_tree} and the definition of the binary variables $y_{e,n}$ (cf. the main text or \ref{sec:partial_mixers}), the total cost function for MDST can be written as 
\begin{align}
\label{eq:total_cost_flows_tree}
\begin{split}
    C(\TT) = \sum_{e \in \EE} \sum_{n, m \in \VV\setminus \{n_0\}} \alpha_e y_{e,n} y_{e,m} \mathfrak{f}_n \mathfrak{f}_m.
\end{split}
\end{align}
Hence, it remains to define the constraints enforcing that the variables $y_{e,n}$ describe a spanning tree $\TT$. 

Three necessary conditions for any spanning tree $\TT$ with root $n_0$ are that
\begin{enumerate}
\itemsep-0.4em 
    \item the number of active edges is $\lvert \VV \rvert -1$,
    \item  no cycles are formed,
    \item  all nodes are connected to the root. 
\end{enumerate}
The combination of any two of those three conditions is also sufficient for a spanning tree, and thus, implies the third condition. 

We now want to formulate these constraints in terms of the binary variables $y_{e,n}$. However, we first need to make sure that the variables are locally consistent. That is, if a node $n$ is downward of an (oriented) edge $e=(u,m) \in \EE_\TT$ with $m \neq n$, there must exist another (oriented) edge $e^\prime = (m, v) \in \EE_\TT \setminus \{e\}$ such that $n$ is downward of $e^\prime$ as well, including the case that $v = n$. Local consistency can be enforced by the constraints 
\begin{equation}
\label{eq:net:rec:app:local_consistency}
    y_{e,n} (1-\lvert E_{n, e} \rvert) = (1-\lvert E_{n, e} \rvert) \sum_{m \in \VV \setminus \{n_0, n\}} \sum_{e^\prime \in \EE \setminus \{ e\}} y_{e, m} y_{e^\prime, n} \lvert E_{m,e} \rvert \lvert E_{m,e^\prime} \rvert,  \quad \forall n \in \VV \setminus \{n_0 \}, \, \forall e \in \EE. 
\end{equation}
The term $(1-\lvert E_{n, e} \rvert)$ evaluates to zero if $n$ is incident to $e$ to exclude this case. Then, the right-hand side enforces that for the edge $e^\prime$ downward of $e$ we have that $y_{e^\prime, n}=1$ if and only if $y_{e,n}=1$, that is, the left-hand side is $1$. However, local consistency alone does not prevent (oriented) cycles from being formed. In an oriented cycle, every node is downward of every edge such that local consistency is trivially fulfilled. 

With local consistency guaranteed, the first condition in the number of active edges can be readily formulated as a constraint 
\begin{equation}
    \label{eq:constraint_number_edges}
    \sum_e \sum_{n \in \VV \setminus \{n_0\}} \lvert E_{n,e} \rvert y_{e,n} = \lvert \VV \rvert - 1. 
\end{equation}
We note that this constraint is necessary for the fact that the number of active edges is $\lvert \VV \rvert - 1$, but not sufficient. If for an edge $e = \{n, m \} \in \EE$, we have that $y_{e,n}=1$ and $y_{e,m}=1$, the edge $e$ is counted twice in the sum. For example, such a configuration arises if a cycle is formed, since then any node in the cycle will be downward of any edge. Hence, constraint \eqref{eq:constraint_number_edges} becomes a sufficient condition for the number of edges to be equal to $\lvert \VV \rvert - 1$, if it is combined with a constraint enforcing that no cycle is formed or all nodes are connected to the root.

\begin{figure}[t]
  \centering
  \includegraphics[width=\columnwidth]{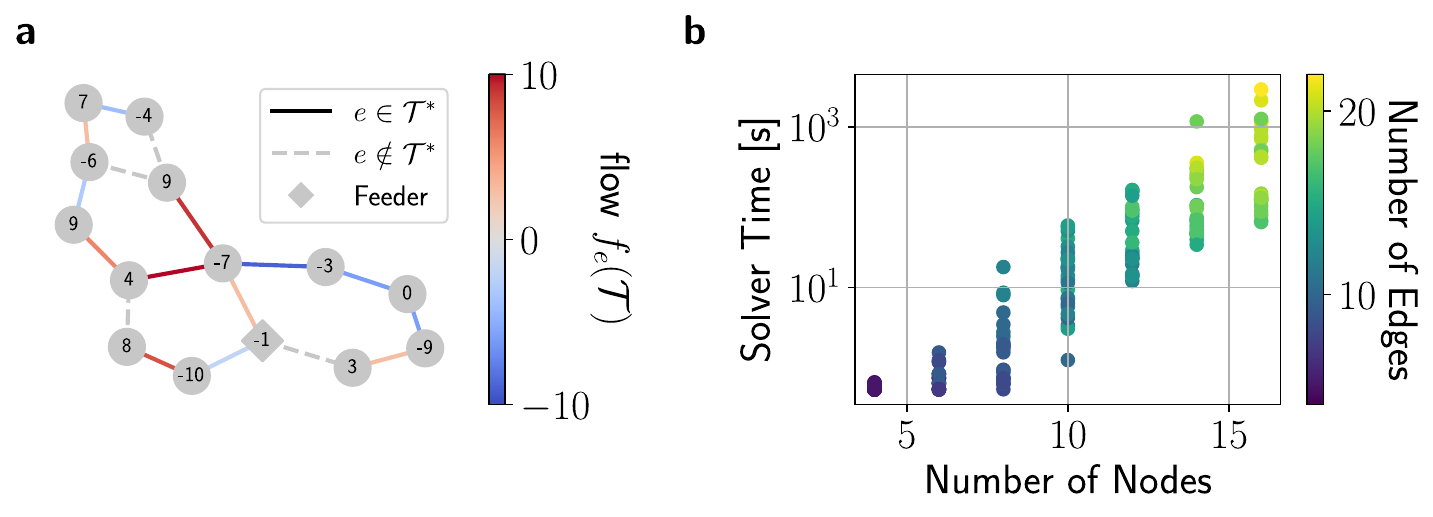}
  \caption{Obtaining minimum dissipation spanning trees by Gurobi for randomly generated Newman-Watts-Strogratz graphs \cite{newman1999renormalization} with $k=2$ and $p=0.2$ for different system sizes $\lvert \VV \rvert$. The resulting graphs $\GG$ contain a few cycles and thus mimic some aspects of distribution grids. The parameters $\alpha_e$ and $\mathfrak{f}_n$ are also chosen randomly. \textbf{a:} Minimum dissipation spanning tree $\TT^*$ for one Newman-Watts-Strogratz graph with $\lvert \VV \rvert = 14$. \textbf{b:} Exponential scaling of the solver time with the system size given by the number of nodes $\lvert \VV \rvert$. For each $\lvert \VV \rvert$, 25 random instances have been generated such that the number of edges $\lvert \EE \rvert$ differs between these instances by chance. The nodal in/outflows for all $n \in \VV \setminus \{n_0 \}$ are random integers between $-10$ and $10$, the in/out flow at the feeder is then set as the sum overall in/outflows to get a balanced grid. The dissipation constants are modeled as random integers between 1 and 5.  }
  \label{fig:MDST_solve_gurobi}
\end{figure}

We now turn to the formulation of the other two necessary conditions for a spanning tree $\TT$ with root $n_0$. Silva et al. \cite{silva_qubo_2023} proposed constraints that enforce that no cycles are formed. Their formulation requires the definition of additional binary variables. Additionally, some of the constraints are not linear in the binary variables and thus can not be written as QUBO penalties without overhead, cf., for example, equations (10a) and (10b) and the discussion in section 5.1 in \cite{silva_qubo_2023}. On the other hand, enforcing connectivity can be achieved by once again turning to the KCL. We introduce an additional ``dummy'' flow, such that $\iota_n = 1$ for all nodes $n \neq n_0$ and $\iota_{n_0} = - \lvert \VV \rvert +1$. Then the KCL can only be fulfilled for all nodes if all nodes are connected to the root. Defining the dummy flow on edge $e$ as in the main manuscript, we have the following KCL-based constraints for each node $n \in \VV \setminus \{n_0\}$
\begin{equation}
\label{eq:net_rec:app:KLC_constraint}
    1 = \iota_n = \sum_{e \in \EE} E_{n,e}(\TT) \iota_e(\TT) 
\end{equation}
with 
\begin{align*}
    E_{n,e}(\TT) &= E_{n,e} ( E_{n,e} y_{e,n} + \sum_{u \in \VV \setminus \{n_0, n\}} E_{u,e} y_{e,u}) \\ 
     \iota_e(\TT) &=  \sum_{m \in \VV \setminus \{n_0\}} \iota_m y_{e,m} = \sum_{m \in \VV \setminus \{n_0\}} y_{e,m} . 
\end{align*}

Finally, we note that the constraints~\eqref{eq:net:rec:app:local_consistency}~and~\eqref{eq:net_rec:app:KLC_constraint} are quadratic in the binary variables and thus can not be mapped to QUBO. 

We numerically verify that the constraints ~\eqref{eq:net:rec:app:local_consistency}~-~\eqref{eq:net_rec:app:KLC_constraint} enforce spanning trees using Gurobi. We solve MDST with these constraints for randomly generated topologies with different sizes and flow inputs/demands, see~Fig.~\ref{fig:MDST_solve_gurobi}. For all test cases, the optimal solutions are spanning trees. As expected, the time to find the optimal solution tends to increase with the size of the problem, that is, the number of nodes $\VV$ and/or the number of edges $\EE$. 

\newpage
\section{QAOA Simulation}
\label{sec:qaoa_simulation}

This section presents the methods and extended results for the numerical QAOA simulations. The goal is to evaluate and compare both spanning tree sampling methods, by using penalties and by restricting to the invariant feasible subspace, presented in the main paper. In \ref{sec:qaoa_simulation_methods}, we provide methodological insights into the numerical scheduled QAOA simulation and evaluation. In \ref{sec:qaoa_results} we show detailed results.  

\subsection{Methods}
\label{sec:qaoa_simulation_methods}

For the method based on penalty terms, we use a standard mixer and can thus simulate LR-QAOA. LR-QAOA provides good out-of-the-box performance for many optimization problems (cf. \ref{sec:qaoa}). For the a invariant feasible subspace method, which restricts the quantum evolution to the feasible subspace, the initial ground state is not known, and we use techniques from reverse annealing. We first present both scheduled QAOA variants and define the metrics to evaluate the optimization quality. Afterwards, we briefly discuss the experimental setup: The algorithmic implementation, the hyperparameter search, and the problem instance.  

\subsubsection*{Penalty Method: LR-QAOA}

\begin{figure}[b!]
     \centering
    \includegraphics[width=\columnwidth]{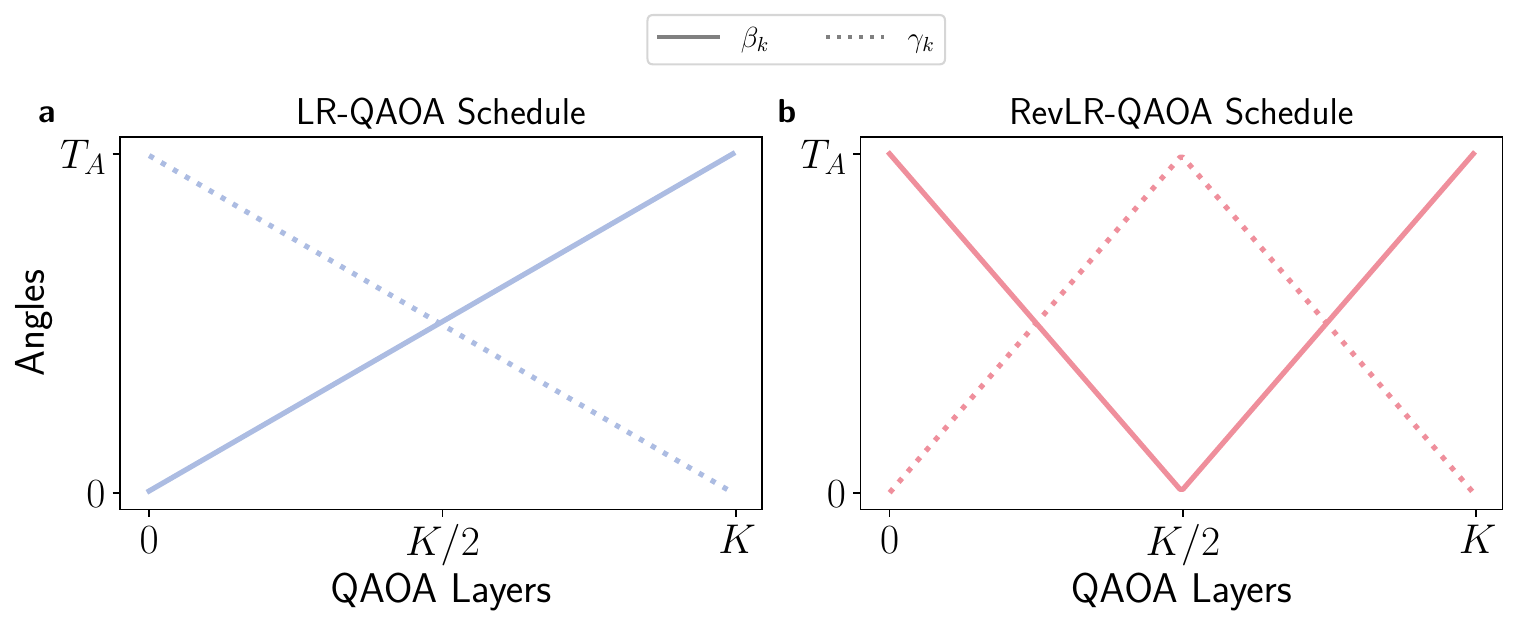}
    \caption{QAOA schedules ($\beta_k, \gamma_k$) for LR-QAOA (\textbf{a}) and RevLR-QAOA (\textbf{b}). }
    \label{sup:fig:qaoa_sched}
\end{figure}

To simulate solving MDST using LR-QAOA with the standard Mixer, we begin by transforming the original Mixed-Integer Program (MIP)~\eqref{eq:total_cost_flows_tree}-\eqref{eq:constraint_number_edges} into a Polynomial Unconstrained Binary Optimization (PUBO) problem. This is done by incorporating the constraints into the objective function as penalty terms, specifically by squaring the constraint violations. The resulting PUBO is then mapped to a fourth-order Ising hamiltonian $H_\mathrm{P}$, where binary variables are represented by spin variables according to $s_j = 1 - 2 y_j$ with $j=e \lvert \VV - 1 \rvert + (n-1)$. We then find the ground state $\ket{\psi_{E_0}}$ and corresponding ground state energy $E_0 = \braket{\psi_{E_0} |H_\mathrm{P}| \psi_{E_0}}$ of the Ising Hamiltonian using a classical solver to obtain a reference solution.

Finally, we simulate a QAOA protocol using a linear ramp annealing schedule, cf. Fig. \ref{sup:fig:qaoa_sched}\textbf{a} , defined by:
\begin{equation*}
    \beta_k = T_\mathrm{A} (1 - \frac{k}{K}), \quad \gamma_k = T_\mathrm{A} \frac{k}{K},
\end{equation*}
where $T_\mathrm{A}$ denotes the annealing time and $k=0,\ldots,K$ indexes the discrete time steps, that is, the QAOA layers. That is, we implement the sequence of unitaries as given in Eq.~\eqref{eq:schedule_qaoa} for this schedule, where we associate $A(t_k)= \beta_k$ and $B(t_k) = \gamma_k$. Since the constraints are incorporated in the cost Hamiltonian $H_P$, we use the standard Mixer Hamiltonian \eqref{eq:transverse_field_mixer} and initialize the circuit in its ground state, the uniform superposition of all possible spin configurations. For each layer $k$, we compute the fidelity of the intermediate (and final) state $\ket{\psi(k)}$ with the ground state, that is, the probability to find the ground state in a measurement,  
\begin{equation}
    F(k) = \lvert \braket{\psi(k) | \psi_{E_0} }\rvert^2
\end{equation}
and the approximation ratio 
\begin{equation}
    \Sigma(k) = \frac{\braket{\psi(k) |H_\mathrm{P}| \psi(k)}}{E_0}
\end{equation}
to quantify how well the intermediate (and final) solution compares to the optimal solution. 

\subsubsection*{Invariant Feasible Subspace Method: RevLR-QAOA}

In the case of advanced mixer unitaries, the ground state is generally not known analytically, which makes annealing-inspired LR-QAOA initialization, typically in the ground state of the mixer, challenging. To address this, we employ a reverse-annealing QAOA protocol, which allows us to begin the evolution from a known, easily-preparable ground state and gradually anneal towards the desired problem instance. Furthermore, since the constraints are incorporated in the Mixer, we only need to map the objective function~\eqref{eq:total_cost_flows_tree} to a cost Ising Hamiltonian $H_C$.

We begin by selecting an elementary problem instance $(\alpha_e, \mathfrak{f}_n)$ with the same underlying graph topology for which the ground state is known, or can be computed.
This state serves as the initial state $\ket{\psi(0)}$ of the RevLR-QAOA (reverse linear ramp) framework. Then, the annealing schedule proceeds in two phases (cf. Fig.~\ref{sup:fig:qaoa_sched}\textbf{b}):

\begin{enumerate}
    \item \textbf{Reverse Annealing Phase (first $K/2$ layers):}
    We interpolate linearly from the cost unitary $e^{-i \gamma_k H_C^{init.}}$ of the initial elementary problem instance toward the Mixer Unitary $U_M(\beta_k)$, gradually suppressing the cost term while increasing the strength of the mixer, that is, we set $\gamma_k = T_\mathrm{A} (1 - \frac{2k}{K}) $ and $\beta_k = T_\mathrm{A} \frac{2k}{K}$. At the end of the first half of the schedule, the system is governed purely by the mixer. Thus, if the annealing time $T_\mathrm{A}$ is long enough $\ket{\psi(K/2)}$ is a good approximation of the ground state of the Mixer, restricted to feasible space. 
    \item \textbf{Forward Annealing Phase (second $K/2$ layers): } We replace the cost Hamiltonian of the initial elementary instance $ H_C^{init.}$ with the target cost Hamiltonian $ H_C^{prob.}$, and reverse the annealing direction: the strength of the mixer is gradually decreased while the cost term is turned back on. This drives the system toward the ground state of the target problem Hamiltonian in the feasible space $\psi_{E_0}$. 
\end{enumerate}

Since the advanced mixer for MDST involves ancillary qubits for controlled operations, cf.~\ref{sec:partial_mixers}, we evaluate the fidelity and approximation ratio based on the reduced state obtained by tracing out the ancilla subsystem. That is, let $\rho(k) = \text{Tr}_{anc} (\ket{\psi(k})\bra{\psi(k)})$ be the density matrix of the reduced system, the fidelity is computed as
\begin{equation}
    F(k) = \braket{\psi_{E_0} | \rho(k) | \psi_{E_0} }
\end{equation}
and the approximation ratio as
\begin{equation}
    \Sigma(k) = \frac{\text{Tr}(\rho(k) H_C^{prob.})}{E_0}.
\end{equation}

\subsubsection*{Experimental Setup}

\paragraph*{Code implementation} In practice, we model MDST instances using Pyomo. For LR-QAOA, the full model (cost and constraints) is then converted into a PUBO using quboify \cite{quobify2025}, which provides automatic $\lambda_{pen}$-selection based on a naive upper bound for the cost function. For RevLR-QAOA, only the cost function is converted. Afterward, the cost functions are mapped to Ising Hamiltonians using Qiskit. 

The Mixers circuits are also implemented in Qiskit. The partial Mixer circuits \ref{sup:eq:partialmixer} are constructed by following the decomposition into smaller circuits presented in \ref{sec:partial_mixers}. 

Finally, both scheduled-QAOA variants are then simulated in Qiskit using the statevector method, that is, parameterized circuits for the Mixer and $H_C$ are applied consecutively according to the schedule. The state vector method has the advantage that the fidelity and the approximation ratio can be swiftly computed after each layer, thus providing effective logging during the ``anneal''. For larger instances, we suggest alternative Qiskit-AER simulation methods, such as ``matrix\_product\_state'' method \cite{vidal2003efficient}, with the downside that intermediate steps are not accessible. 

\paragraph*{Hyperparameter Tuning}  The performance of both QAOA variants is highly sensitive to the choice of hyperparameters $T_\mathrm{A}$ and $K$. To systematically explore their effect, 
we conduct a grid search over $K \in \{10, 50, 100, 200\}$, and 1000 values for $T_\mathrm{A}$ that are loguniformly separated in $[0.01, 1.5]$. This approach ensures that both small and large values of $T_\mathrm{A}$ are adequately represented, capturing regimes where the algorithm may behave qualitatively differently. In the future, also tuning the penalty coefficients manually can be considered for the penalty approach; however, this further increases the overhead. 

\paragraph*{Problem Instance} For the numerical simulation, we use the simplest non-trivial example, consisting of three nodes and one cycle, cf. Fig. 2 in the main manuscript. The corresponding Mixer $U_M$ is depicted in Fig.~\ref{sup:fig:mixers_simple_example}. The problem instance that we want to solve has $\alpha_0 = \alpha_1 = 1$ and $\alpha_2 = 10$. The flow demands are set as $\mathfrak{f}_0 = -3$, $\mathfrak{f}_1 = 1$ and $\mathfrak{f}_2 = 2$. The optimal solution is thus $y_{0, 1} = 1$, $y_{0,2}=1$, $y_{1,2}=1$ and all other $0$, which corresponds to the bit string $110100$, cf. Fig. 2 in the main manuscript. For RevLR-QAOA, we initialize the algorithm in the state $100001$, which is optimal for the instance $\alpha_i = 0$ and $\mathfrak{f}_0 = -2$, $\mathfrak{f}_1 = 1$ and $\mathfrak{f}_2 = 1$. 

\subsection{Results}
\label{sec:qaoa_results}

\begin{figure}[t]
     \centering
    \includegraphics[width=\columnwidth]{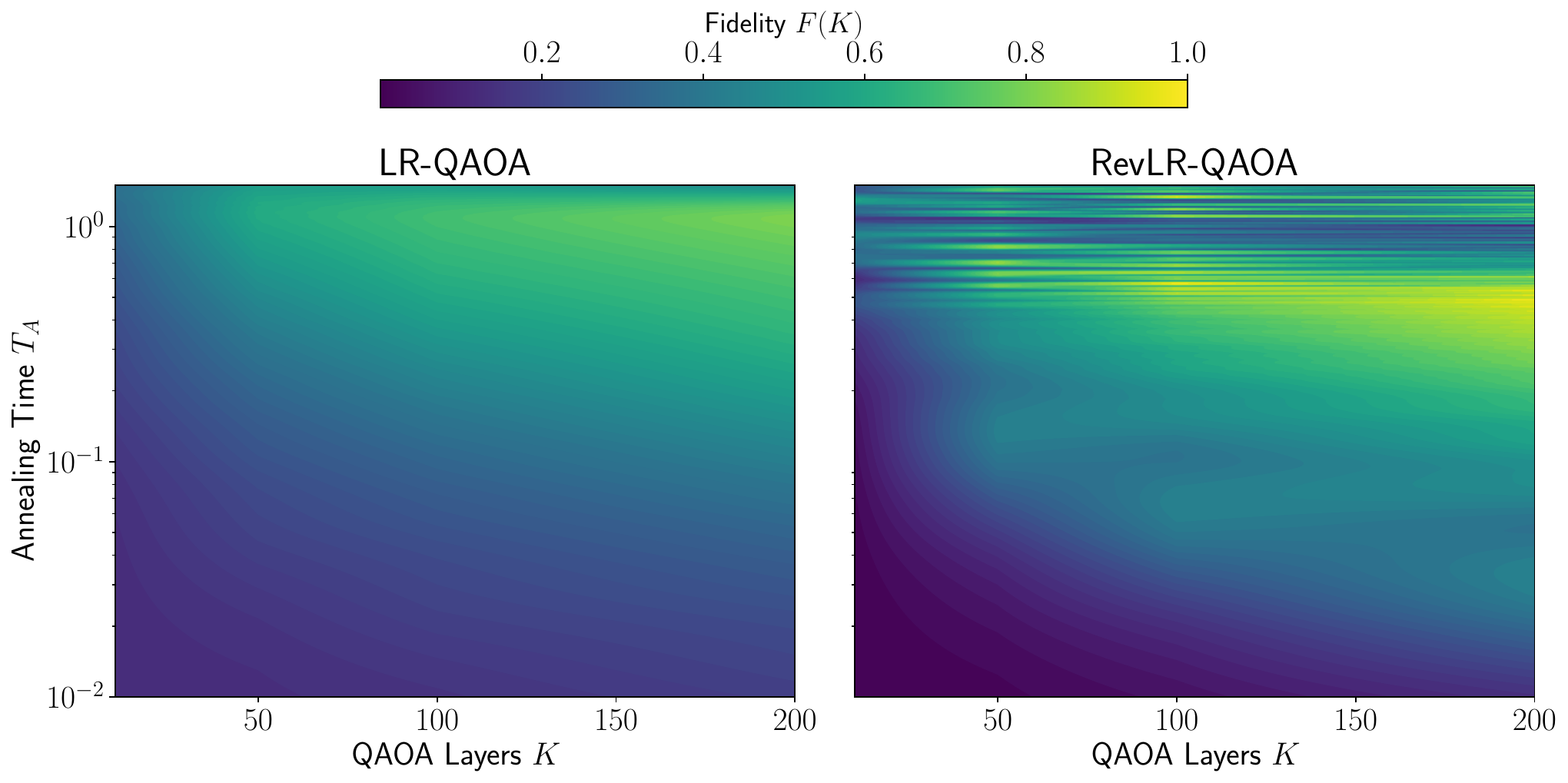}
    \caption{Comparison of fidelity $F(K)$ landscapes between LR-QAOA (left) and RevLR-QAOA (right). The landscape is based on a grid search of (hyper-)parameters $T_\mathrm{A}$ and $K$.}
    \label{sup:fig:fidelity_comp}
\end{figure}

\begin{figure}[t]
     \centering
    \includegraphics[width=\columnwidth]{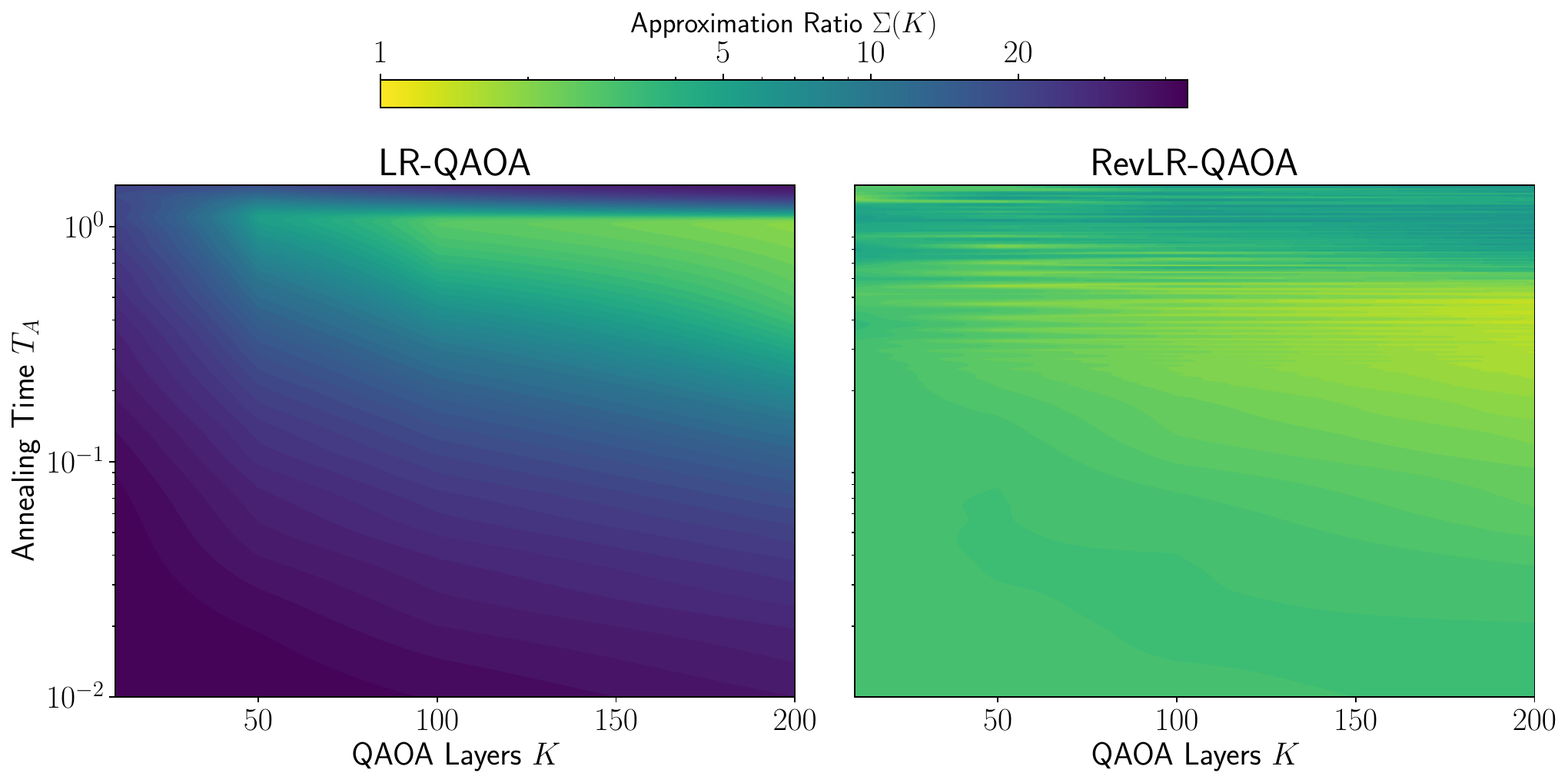}
    \caption{Comparison of approximation ratio $\Sigma(K)$ landscapes between LR-QAOA (left) and RevLR-QAOA (right). The landscape is based on a grid search of (hyper-)parameters $T_\mathrm{A}$ and $K$.}
    \label{sup:fig:approx_comp}
\end{figure}

For the small test grid, RevLR-QAOA with the feasiblity-preserving mixer consistently outperforms LR-QAOA with the standard mixer across a wide range of parameter configurations ($T_\mathrm{A}$, $K$), both in terms of fidelity $F(K)$ (cf.~Fig.~\ref{sup:fig:fidelity_comp}) and, in particular, approximation ratio $\Sigma(K)$ (cf.~Fig.~\ref{sup:fig:approx_comp}). This advantage is already visible for relatively small numbers of QAOA layers $K$.  Nevertheless, RevLR-QAOA exhibits a higher sensitivity to hyperparameters: small parameter changes can significantly reduce performance. By contrast, LR-QAOA performance improves systematically with increasing $T_\mathrm{A}$ and larger $K$. Importantly, performance does not increase monotonically with $T_\mathrm{A}$ in either approach. Beyond a certain $T_\mathrm{A}$ threshold, errors of order $\mathcal{O}(T_\mathrm{A} / K)$ accumulate, and the adiabatic evolution ceases to be well-approximated. At this point, RevLR-QAOA with the feasibility-preserving mixer effectively samples random feasible states, while LR-QAOA based on the penalty method samples random bit strings.  

The comparatively larger approximation errors observed for LR-QAOA with the standard mixer arise from infeasible outcomes, particularly at small $T_\mathrm{A}$ (or very large $T_\mathrm{A}$), where the state remains close to a uniform superposition dominated by high-cost, infeasible configurations. In contrast, the approximation error for RevLR-QAOA with the feasibility-preserving mixer is bounded above by the error of the most costly feasible state. In the penalty-based method, $\lambda_{\text{pen}}$ is typically chosen to create a spectral gap between feasible and infeasible states. As a consequence, the approximation errors of feasible states are significantly smaller than those of infeasible ones. 

simulations. For LR-QAOA, significant improvements in the approximation error occur mainly in the final stages of the evolution (last $\sim$25\%), when the cost function becomes dominant and provides stronger guidance to the mixing dynamics, cf.  Fig.~\ref{sup:fig:qoao_statitics}\textbf{a}. In contrast, for RevLR-QAOA, the main performance gain occurs immediately after the cost functions are swapped at $K/2$. Interestingly, we observe a pronounced initial drop in the approximation error for most values of $T_\mathrm{A}$; however, for some cases this is followed by stronger oscillations and even a rebound to higher approximation errors, cf. Fig.~\ref{sup:fig:qoao_statitics}\textbf{b}.

A statistical analysis reveals that these rebounds occur when the initial drop is strongest, which corresponds to larger values of $\beta_{K/2+1} \propto T_\mathrm{A}$, cf. Fig.~\ref{sup:fig:qoao_statitics}\textbf{d}. This is consistent with the observation that larger $T_\mathrm{A}$ generally leads to worse overall performance, as errors of order $\mathcal{O}(T_\mathrm{A}/K)$ accumulate. Since an initial drop is almost always observed, this suggests the potential of a modified RevLR-QAOA schedule with only a few layers after the cost function swap. However, caution is required, as this behavior may be an artifact of the small test system.

\begin{figure}[t]
     \centering
    \includegraphics[width=\columnwidth]{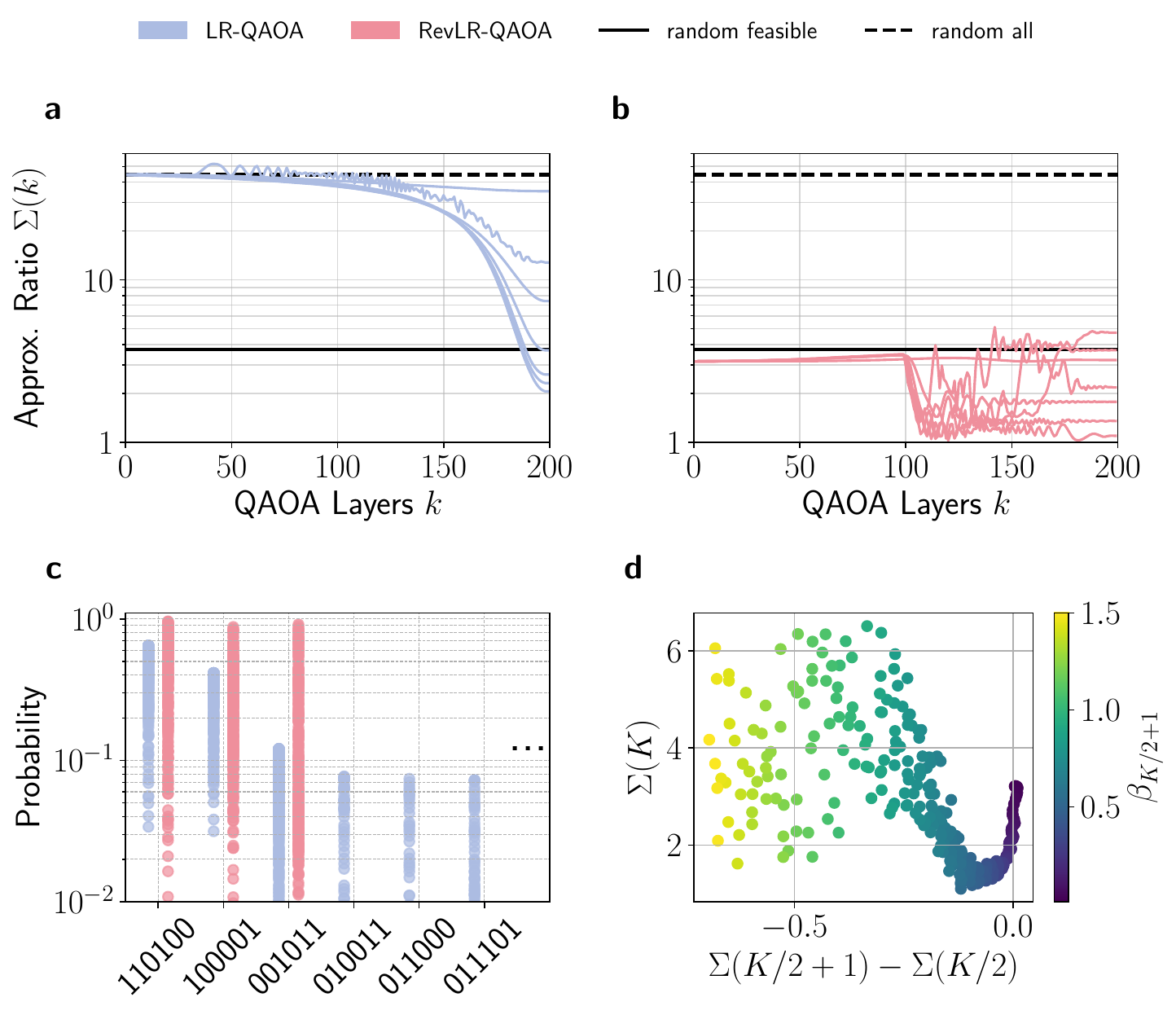}
    \caption{Statistical analysis of QAOA results for fixed $K=200$ across the interval $T_\mathrm{A} \in [0.01, 1.5 ]$.
    \textbf{a and b} Approximation error $\Sigma(k)$ over the layers $k$ for selected values of $T_\mathrm{A}$. For LR-QAOA, the improvement happens mainly towards the end. For RevLR-QAOA, the improvement occurs mainly around $ K/2 = 100$, when the cost function of the problem to be solved is introduced. \textbf{c} Compariosn of measurement statistics for all $T_\mathrm{A}$. \textbf{d} Dependence of the final approx ratio $\Sigma(K)$ in RevLR-QAOA on the initial improvement $\Sigma(K/2 + 1)-\Sigma(K/2)$ in the approx ratio after the problem cost function is introduced. Stronger initial improvements, due to stronger values of $\beta_{K/2 + 1}$, lead to worse performance. }
    \label{sup:fig:qoao_statitics}
\end{figure}

\clearpage
\section{Explicit Construction of the MDST+ Cost Function for Minimum Loss Network Reconfiguration}
\label{sec:mapping_distribution_grids}

In this section, we show how the Minimal Loss Network Reconfiguration problem can be solved on a reduced graph obtained by contracting nodes between switches, such that all remaining edges are switchable. The resulting cost function retains the structure of the standard MDST cost function but incorporates additional features; we refer to it as an MDST+ cost function.

Local distribution grids are normally operated radially, that is, each consumer is connected to the feeder (which is the connection to the transmission grid) by a unique path. However, for operating reasons, local distribution grids usually have a couple of switches that can be opened or closed to allow rerouting of the power flows in case of a fault or to reduce the losses and, thus, operating costs. In the latter case, one wants to find the configuration of closed/open switches to minimize the total losses. 

Let $\GG_\mathrm{grid}$ represent the topology of a distribution grid where nodes represent buses and thus connections to consumers or distributed energy resources (DER). The edges represent electrical cables (or transformers) or electrical switches. In other words, $\GG_\mathrm{grid}$ corresponds to a grid where all switches are closed. The flow injections are naturally the electrical current injections $I_n$. Again, if $I_n > 0$ bus $n$ demands current, whereas for $I_n<0$ current is injected at bus $n$. The flows on the lines are current on the lines $i_e$. The energy dissipation due to ohmic losses on each branch gives the cost of a given topology. For each line $e \in \GG_\mathrm{grid}$, the dissipation is given by 
\begin{equation*}
    l_{e} = R_{e} i_{e}^2,
\end{equation*}
where $R_{e}$ is the electric resistance of the line $e$. Without loss of generality, we can assume that (closed) switches have no resistance. Otherwise, we can replace the non-ideal switch by a lossless switch and a resistor in series.

Any valid configuration of switches corresponds to a spanning tree of $\GG_\mathrm{grid}$. The opposite is not true since only a few edges are switchable. Hence, the re-configuration of the distribution grid can not be tackled directly by spanning tree re-configuration. However, we can define a reduced graph $\GG_\mathrm{red}$ by contracting all nodes between switches to one ``super-node'' $v$, that is, every node $v \in \VV_\mathrm{red}$ corresponds to a sub-tree in $\GG_\mathrm{grid}$ that can not be reconfigured. By construction, all edges $s \in \GG_\mathrm{red}$ correspond to switches in the electrical grid \footnote{We note that the reduced graph $\GG_\mathrm{red}$ has multi-edges, that is, multiple edges with the same tail and head, if the underlying distribution grid has multiple switches between the same two super nodes, that is, between two un-reconfigurable sets of buses. Besides the need for additional bookkeeping, an edge is not uniquely defined by the incident nodes; all results obtained in this paper are still applicable. A simple example is the IEEE 123-node test feeder.}. Then, we have a one-to-one correspondence between valid switching configurations and spanning trees $\TT$ in $\GG_\mathrm{red}$, see Fig.~1 in the main manuscript for a schematic example. Based on this one-to-one correspondence, we can minimize the loss using local tree re-configurations on the reduced graph $\GG_\mathrm{red}$ while evaluating the losses for the currents $i_e$ on the full grid $\GG_\mathrm{grid}$. The cost evaluation can thus be decoupled into two steps: First, evaluate the switch currents $f_s (\TT)$ for a spanning tree $\TT$ in $\GG_\mathrm{red}$. Second, use the KCL (cf. Eq.~\eqref{net_rec:eq:KCL}) to calculate the currents $i_e$ in the electrical grid based on the switch currents $f_s (\TT)$. We explain both steps using the example from Fig.~1 in the main manuscript.  

To calculate the switch flows on $\GG_\mathrm{red}$, we define current demands/injections for the ``super-nodes'' $v \in \GG_\mathrm{grid}$ by summing over all current demands/injections for the contracted nodes, 
\begin{equation*}
    \mathfrak{f}_{v} = \sum_{n \in v} I_n. 
\end{equation*}
Then, in the spirit of Eq.~\eqref{net_rec:eq:flows_in_tree}, the flow on the switch $s$ for a given spanning tree $\TT$ in $\GG_\mathrm{red}$ is given by 
\begin{equation}
    \label{net_rec:eq:switch_currents}
    f_s (\TT) = \sum_{v \in \VV_{red}} y_{s,v} \mathfrak{f}_v,
\end{equation}
where the binary variables $y_{s,v}$ encode the tree $\TT$, cf. ~Eq.~\eqref{sup:eq:binary_variables}. By construction, if a switch $s \notin \TT$ we have that $f_s (\TT)=0$. Furthermore, the spanning tree $\TT$ together with the root ``super-node'' $v_0$, containing the feeder, induces a natural orientation for all closed switches: the head of the switch points downwards, away from the feeder bus. This orientation complies with the sign of the switch currents: 
If $f_s(\TT) > 0$, current flows downwards on the switch $s$ to meet the downward demand, and vice versa if $f_s(\TT) < 0$, current flows upwards since the injections downwards of the switch $s$ exceed the local downwards demands. 

The switch flows $f_s(\TT)$ and the current injections $I_n$ in $\GG_\mathrm{grid}$ uniquely define the currents $i_e$ on all other lines $e \in \EE_\mathrm{grid}$ by the KCL, which can be solved for each ``super-node'' independently. To use the KCL, we need to construct an edge-incidence matrix $\matr E(\TT)$ of $\GG_\mathrm{grid}$ that admits the orientation induced by $\TT$ on the switches $s$. Let $\matr E$ be the edge-incidence matrix for any orientation in $\EE_\mathrm{grid}$ and let $v(n) \in \GG_\mathrm{red}$ be the super-node containing the node $n \in \GG_\mathrm{grid}$, then for any switch $s = (n,m)$ the entries $\matr E(\TT)$ can be constructed using the binary variables $y_{s,v}$ as
\begin{equation}
    \label{net_rec:eq:orientation_induced_e_matrix}
    E(\TT)_{n,s} = E_{n,s} (E_{n,s} y_{s, v(n)} + \sum_{m \neq n} E_{m, s} y_{s, v(m)}).
\end{equation}
Note that for all lines $e$ in the super nodes, the sign of $i_e$ is irrelevant for the loss function and thus any orientation within the ``super-node'' can be arbitrary. 

Using the KCL, we can then solve for the line currents $i_e$ straightforwardly. Since the ``super-nodes'' correspond to trees, the resulting system of equations is always over-determined. We have $\lvert v \rvert$ equations for the nodal injections $I_n$ and one additional equation since the KCL must also be fulfilled for the whole ``super-node''. On the other hand, there are only $\lvert v \rvert - 1$ unknown branch currents $i_e$ in $v$. We now demonstrate solving the KCL for the branch currents for the ``super-node'' $v_2$ from the example grid, cf.~Fig.~1 in the main manuscript. Ignoring that for the depicted configuration $f_{s_4}(\TT) = 0$, the general KCL for $v_2$ reads 
\begin{align*}
    I_1 &= E(\TT)_{1, s_3} f_{s_3}(\TT) + E_{1,1} i_1 \\
    I_2 &= E(\TT)_{2, s_2} f_{s_2}(\TT) + E_{2,2} i_2 \\
    I_3 &= E_{3,1} i_1 + E_{3,2} i_2 + E_{3,3} i_3 \\
    I_4 &= E_{4,3} i_3 + E_{4,4} i_4 \\
    I_5 &= E(\TT)_{5, s_4} f_{s_4}(\TT) + E_{5,4} i_4 \\
    I_1 + ... + I_5 &= E(\TT)_{1, s_3} f_{s_3}(\TT) + E(\TT)_{2, s_2} f_{s_2}(\TT) \\ & \quad + E(\TT)_{5, s_4} f_{s_4}(\TT). 
\end{align*}
Hence, one solution for the branch currents is given by
\begin{align*}
    i_1 &= E_{1,1} \left(I_1 - E(\TT)_{1, s_3} f_{s_3}(\TT) \right) \\ 
    i_2 &= E_{2,2} \left(I_2 - E(\TT)_{2, s_2} f_{s_2}(\TT) \right) \\
    i_3 &= E_{4,3} \left(I_4 - E_{4,4} E_{5,4} \left(I_5 - E(\TT)_{1, s_3} f_{s_3}(\TT) \right) \right)\\
    i_4 &= E_{5,4} \left(I_5 - E(\TT)_{1, s_3} f_{s_3}(\TT) \right).
\end{align*}
Explicitly solving the KCL for all ``super-nodes'' and inserting Eq.~\eqref{net_rec:eq:switch_currents}~and~\eqref{net_rec:eq:orientation_induced_e_matrix} we get closed expressions for all branch currents $i_e$ that are quadratic in the binary variables $y_{s,v}$. 

We conclude that the MDST+ cost function 
\begin{equation*}
    C = \sum_e R_e i_e^2
\end{equation*}
can be expressed in closed form in terms of the binary variables $y_{e,n}$. This expression has to be constructed in a preprocessing step before the optimization is carried out by solving a linear system of equations. The resulting expression for the cost function is of 4th order in the binary variables $y_{e,n}$. We note that the cost function can be mapped to a 4th-order Ising Hamiltonian by the same steps as for the quadratic cost functions, cf. Eq.~\eqref{eq:Ising_hamiltonian}. However, simulating 4th order Hamiltonians requires more resources than the standard Ising Hamiltonian, see the discussion in \ref{sec:qaoa}.

\end{widetext}

\clearpage
\newpage
\bibliography{references}

\end{document}